\DeclareMathAlphabet{\dutchcal}{U}{dutchcal}{m}{n}
\SetMathAlphabet{\dutchcal}{bold}{U}{dutchcal}{b}{n}
\DeclareMathAlphabet{\dutchbcal} {U}{dutchcal}{b}{n}
\theoremstyle{plain} 
\newtheorem*{theo*}{Theorem}
\newtheorem*{cor*}{Corollary}
\newtheorem*{con*}{Conjecture}
\newtheorem{theo}{Theorem}[section] 
\newtheorem{prop}[theo]{Proposition}
\newtheorem{cor}[theo]{Corollary}
\newtheorem{lem}[theo]{Lemma}
\newtheorem{con}[theo]{Conjecture}
\theoremstyle{definition}
\theoremstyle{definition}
\newtheorem{defin}[theo]{Definition}
\newtheorem{ex}[theo]{Example}
\theoremstyle{remark}
\newtheorem{rem}[theo]{Remark}
\newtheorem*{rem*}{Remark}
\newtheorem*{cap*}{Caption}
\newcommand{\Z}{\mathbb{Z}}
\newcommand{\Zplus}{\Z_{>0}}
\newcommand{\Zpluseq}{\Z_{\geq0}}
\newcommand{\R}{\mathbb{R}}
\newcommand{\C}{\mathbb{C}}
\newcommand{\pSone}{S^1\backslash\{-1\}}
\newcommand{\Mob}{\operatorname{M\ddot{o}b}}
\newcommand{\Diff}{\operatorname{Diff}}
\newcommand{\Vir}{\mathfrak{Vir}}
\newcommand{\Uone}{\operatorname{U}(1)}
\newcommand{\g}{\mathfrak{g}} 
\newcommand{\h}{\mathfrak{h}}
\newcommand{\End}{\operatorname{End}}
\newcommand{\J}{\mathcal{J}}
\newcommand{\A}{\mathcal{A}}
\newcommand{\B}{\mathcal{B}}
\newcommand{\Virnet}{\operatorname{Vir}}
\newcommand{\SVirnet}{\operatorname{SVir}}
\newcommand{\F}{\mathcal{F}}
\newcommand{\parzero}{{\overline{0}}}
\newcommand{\parone}{{\overline{1}}}
\newcommand{\half}{\frac{1}{2}}
\newcommand{\bos}{\dutchcal{b}}
\newcommand{\fer}{\dutchcal{f}}
\newcommand{\scalar}{(\cdot|\cdot)}
\newcommand{\bilinear}{(\cdot\,,\cdot)}
\begin{document}
	
\author{Sebastiano Carpi}
	\address{Dipartimento di Matematica, Universit\`a di Roma ``Tor Vergata'', Via della Ricerca Scientifica, 1, 00133 Roma, Italy\\
		E-mail: {\tt carpi@mat.uniroma2.it}
	}
\author{Tiziano Gaudio}
	\address{Dipartimento di Matematica, Universit\`a di Roma ``Tor Vergata'', Via della Ricerca Scientifica, 1, 00133 Roma, Italy\\
		E-mail: {\tt gaudio@mat.uniroma2.it}
	}

\title[Conformal nets from minimal $W$-algebras]{Conformal nets from minimal $W$-algebras
}

\date{31 October 2025}

\begin{abstract}
We show the strong graded locality of all unitary minimal $W$-algebras, so that they give rise to irreducible graded-local conformal nets. Among these unitary vertex superalgebras, up to taking tensor products with free fermion vertex superalgebras, there are the unitary Virasoro vertex algebras ($N=0$) and the unitary $N=1,2,3,4$ super-Virasoro vertex superalgebras. Accordingly, we have a uniform construction that gives, besides the already known $N=0,1,2$ super-Virasoro nets, also the new $N=3,4$ super-Virasoro nets.
All strongly rational unitary minimal $W$-algebras give rise to previously known completely rational graded-local conformal nets and we conjecture that the converse is also true. We prove this conjecture for all unitary $W$-algebras corresponding to the $N=0,1,2,3,4$ super-Virasoro vertex superalgebras.
 
\end{abstract}
	
\maketitle

\tableofcontents

\section{Introduction} 

Quantum field theory (QFT) is a central branch of theoretical physics that has developed in an attempt to give a quantum description of systems with an infinite number of degrees of freedom.  It is well known that the quantization of interacting relativistic fields naturally gives rise to divergent quantities. In favorable cases, they can be cured by the procedure of renormalization, but often only from a perturbative point of view.  This naturally leads to the demand for axiomatic descriptions of QFT, so as to provide it with a mathematically rigorous framework that allows a deeper understanding of the theory.  Two of the main axiomatic approaches to QFT are the Wightman one and algebraic quantum field theory (AQFT), see \cite{SW64,GJ87,BW92,Str93,Haag96,Ara10}. Although the connection between these two axiomatic settings has 
been studied for long time, their mathematical relations are not yet completely understood.   

Conformal field theory (CFT) in two space-time dimensions describes a class of QFTs that are quite specific. However, they give rise to an impressive number of connections between various fields of physics and mathematics: from critical phenomena to string theory; from finite groups and modular forms to subfactors and noncommutative geometry, see \cite{DMS97,EK98,Gan06,Kaw15}. Chiral CFTs are CFTs in two space-time dimensions which are generated by fields depending on a single light-ray coordinate (the same for all the generating fields), so that they can be considered as CFTs on the compactified light-ray $S^1$. They are the building blocks of  two-dimensional CFTs. In the AQFT setting, they are described in terms of conformal nets on $S^1$ or more generally M\"{o}bius covariant nets on $S^1$. Besides the conformal net and the Wightman axioms, a third mathematical axiomatization for chiral CFTs, the one based on vertex algebras \cite{FLM88,FHL93,Kac01}, is nowadays quite popular and very well established. 
The latter axiomatization can be considered as a purely algebraic variant of the Wightman axioms, but it shares with the conformal net setting a strong emphasis on representation theory aspects.  

A systematic study of the connection between these different axiomatic settings for chiral CFTs started with the work of Carpi, Kawahigashi, Longo and Weiner \cite{CKLW18} on the construction of local conformal nets from suitably well-behaved vertex operator algebras (VOAs), called strongly local. The strategy in \cite{CKLW18} can be described in the following way. Assuming certain polynomial energy bounds for a unitary VOA $V$, one can define operator-valued distributions acting on the Hilbert space completion $\mathcal{H}_V$ of $V$ from the formal distributions given by the vertex operators of $V$. These operator-valued distributions satisfy a chiral CFT version of the Wightman axioms. Following a traditional path, they can be used to define a net of von Neumann algebras. In order to get a conformal net, the locality axiom for the latter is the only missing point. The strong locality condition means exactly that, besides the already assumed polynomial energy bounds, the net is local and hence a local conformal net. Many important examples of unitary VOAs have been shown to be strongly local in \cite{CKLW18}. Moreover, it has been conjectured in the same paper that every unitary VOA is strongly local. 

The correspondence in \cite{CKLW18} has been recently extended to unitary vertex operator superalgebras (VOSAs) and graded-local conformal nets in \cite{CGH} in order to allow the presence of fermionic fields that play an important role in many constructions. Following the strategy in \cite{CKLW18}, a suitable class of unitary VOSAs called strongly graded-local has been defined in \cite{CGH}. Then the corresponding graded-local conformal nets are defined through the graded-local family of Wightman fields on $S^1$ associated with the vertex operators. Moreover, in the same paper, the strong graded locality is proved for many examples of unitary VOSAs. As a consequence, many of the already known graded-local conformal nets on $S^1$ are given back together with a remarkable family of new nets. Further remarkable examples of strongly graded-local VOSAs have been recently given in \cite{Gau}.

It should be pointed out that the strong graded locality is not needed in order to define conformal or M\"{o}bius covariant Wightman theories on $S^1$ from unitary VOSAs, see \cite{RTT22}.  Actually, one can also drop the unitarity condition by generalizing the notion of Wightman CFTs to the non-unitary setting, see \cite{CRTT25}. On the other hand, conformal nets act on Hilbert spaces and they are intrinsically unitary. Accordingly, one has to require that the associated VOSAs are also unitary. Concerning the polynomial energy bounds, they play an important role in the known constructions of conformal nets on $S^1$, although they are expected to follow from unitarity. Finally, proving the strong graded locality is essential because the graded locality for the nets of von Neumann algebras is presently not known to follow from the corresponding graded locality for the Wightman fields. Actually, this problem is a central issue in the lack of a complete understanding of the connection of Wightman QFT with AQFT.

In this article, we prove that all unitary minimal $W$-algebras are strongly graded-local and hence they define a corresponding family of irreducible graded-local conformal nets on $S^1$. On the one hand, some of these nets, such as the $N=0,1,2$ super-Virasoro nets, where already known by direct construction from the unitary vacuum representations of the corresponding superconformal algebras. On the other hand, many of them appear to be completely new. In particular, we can construct the new $N=3,4$ super-Virasoro nets.   
 
 $W$-algebras are a remarkable family of vertex superalgebras. The paradigmatic family of examples is given by the Zamolodchikov $W_3$-algebras. 
 Unitary $W_3$-algebras have been classified in \cite{CTW23} and proved to be strongly local in \cite{CTW22, CTW23b}. An important family of $W$-algebras 
 is the one of universal affine $W$-algebras $W^k(\mathfrak{g},x,f)$ that is obtained by quantization of the classical Hamiltonian reduction, and their simple quotients $W_k(\mathfrak{g},x,f)$, see \cite{FF90, FF90b, KRW03, KW04, DK06}. Here 
 $\mathfrak{g}$ is a simple finite-dimensional Lie superalgebra with even elements $x$ and $f$ satisfying suitable properties. 
 Through this procedure of quantization, one gets the so called principal $W$-algebras when $f$ is a principal nilpotent element, whereas minimal $W$-algebras correspond to minimal nilpotent elements. 
 The $W_3$-algebras can be identified with principal $W$-algebras  corresponding to $\mathfrak{g}= \mathfrak{sl}_3$. The unitary ones with central charge $c <2$ correspond to the so called discrete series. More generally, for $\mathfrak{g}$ of $ADE$ type the discrete series principal $W$-algebras can be shown to be unitary and strongly local as a consequence of their coset realization, see \cite{ACL19,Ten24, Gui}.   On the other hand, 
there seems to be no complete classification result for unitary principal $W$-algebras besides the  cases $\mathfrak{sl}_2$ and $\mathfrak{sl}_3$, corresponding to the Virasoro vertex algebra and the $W_3$-algebras respectively, see \cite{CTW23}. The situation appears to be simpler in the case of minimal $W$-algebras. In fact, the unitary minimal $W$-algebras have been recently classified in \cite{KMP23} and their unitary representation theory has been studied in detail in \cite{KMP23,AKMP24,KMP25}. 

In the present paper, beside the proof of the strong graded locality, we discuss some aspects of the representation theory of the corresponding graded-local conformal nets. It turns out that all the strongly rational unitary minimal $W$-algebras give rise to 
graded-local conformal nets that are completely rational in the sense of \cite{KLM01}. In the converse direction, we show that the $N=0,1,2,3,4$ super-Virasoro nets corresponding to non-rational minimal $W$-algebras are not completely rational. We conjecture that a unitary minimal $W$-algebra is strongly rational if an only if the corresponding conformal net is completely rational.

This paper is organized as follows. In Section \ref{section:preliminaries}, we discuss some preliminaries on unitary VOSAs and the corresponding graded-local conformal nets. In Section \ref{section:strong_graded_locality}, we state and prove our strong graded locality result for all unitary minimal $W$-algebras. In Section \ref{section:representations}, we discuss with some detail the notion of complete rationality for graded-local conformal nets. Then we move to prove that the graded-local conformal nets associated with the non-rational minimal $W$-algebras corresponding to the 
$N=0,1,2,3,4$ super-Virasoro algebras are not completely rational. This is achieved by proving the existence of infinitely many equivalence classes of irreducible locally normal representations of these nets.  In the same section, we show that the main result in \cite{MTW18} on the split property for local conformal nets can be generalized to the graded-local case thanks to the results in \cite{Dop82}.

\section{Preliminaries}  \label{section:preliminaries}

	We refer to \cite{Kac01}, see also \cite{FLM88, FHL93, Xu98, LL04}, for the basics of the theory of vertex superalgebras and to \cite[Section 3]{CGH}, see also \cite[Section 4]{CKLW18}, for notations and basic definitions.
	
	For a \textit{vertex superalgebra} $V=V_\parzero\oplus V_\parone$, we use $\Gamma_V$ for the \textit{parity operator} of $V$, that is $\Gamma_V(a)=(-1)^pa$ whenever $a\in V_{\overline{p}}$ with $p\in\{0,1\}$, where $p(a):=\overline{p}\in\Z/2\Z$ is called the \textit{parity} of $a$. Accordingly,  vectors in $V_\parzero$ are called \textit{even}, whereas vectors in $V_\parone$ are called \textit{odd}. 
	$T$ is the \textit{infinitesimal translation operator} and $Y(a,z)=\sum_{n\in\Z}a_{(n)}z^{-n-1}$ denotes the \textit{vertex operator} associated to a vector $a\in V$. As usual, $\Omega\in V_\parzero$ denote the \textit{vacuum vector}.
	A \textit{Virasoro vector} $\nu\in V_\parzero$ is a vector whose coefficients of the corresponding field $Y(\nu, z)=\sum_{m\in\Z}L_mz^{-m-2}$ satisfy the \textit{Virasoro algebra $\Vir$ commutation relations} for a given \textit{central charge} $c\in\C$:
	\begin{equation} \label{eq:virasoro_cr}
		\forall n,m\in\Z\qquad
		[L_n,L_m]=(n-m)L_{n+m}+\frac{c(n^3-n)}{12}\delta_{n,-m}1_V \,.
	\end{equation}
	If $L_{-1}=T$ and $L_0$ is diagonalizable on $V$, then $\nu$ is called a \textit{conformal vector}. Accordingly, $V$ endowed with a conformal vector $\nu$ is called a \textit{conformal vertex superalgebra}.
	Let $V_n:=\mathrm{Ker}(L_0-n1_V)$ for all $n\in \C$ be the $L_0$-eigenspaces. 
	If $a\in V_n$ for some $n\in\C$, then it is called  \textit{homogeneous} of \textit{conformal weight} $d_a:=n$ and we use to write $Y(a,z)=\sum_{n\in \Z-d_a}a_nz^{-n-d_a}$, where $a_n:=a_{(n+d_a-1)}$ if $n\in\Z-d_a$ and $a_n=0$ otherwise. 
	In this way, one can write $Y(z^{L_0}a, z)=\sum_{n\in\C}a_nz^{-n}$ for all $a\in V$.
	Moreover, a homogeneous vector $a$ is said to be \textit{primary} if $L_na=0$ for all $n>0$ and \textit{quasi-primary} if $L_1a=0$. Therefore, $\Omega$ is a primary vector, whereas $\nu$ is quasi-primary.
	A \textit{vertex operator superalgebra (VOSA)} is a conformal vertex superalgebra $V$ where the corresponding conformal vector $\nu$ satisfies: for all $n\in\half\Z$, $V_n$ is finite-dimensional;
	\begin{equation}
		V_\parzero=\bigoplus_{n\in\Z}V_n 
		\,,\qquad
		V_\parone=\bigoplus_{n\in\Z-\half}V_n 
	\end{equation}
	and there exists an integer $N$ such that $V_n=\{0\}$ for all $n< N$.
	If $V_\parone=\{0\}$, then we also use the terminology \textit{vertex algebra}, \textit{conformal vertex algebra} and \textit{vertex operator algebra (VOA)} respectively.
	
	\begin{rem}
		An alternative, although equivalent, definition of vertex superalgebra is given in terms of the \textit{$\lambda$-bracket} $[\cdot_\lambda \cdot]$ in place of the usual $(n)$-product, see \cite[Section 1.5]{DK06} for details, which is defined by:
		\begin{equation} \label{eq:lambda_bracket}
			\forall a,b\in V\qquad
			[a_\lambda b]:= \sum_{j=0}^{+\infty}\frac{\lambda^j}{j!} (a_{(j)}b)
		\end{equation}
		where $\lambda$ is indeterminate.
		In this paper, given a $\lambda$-bracket $[a_\lambda b]$ as in \eqref{eq:lambda_bracket}, we will use the $\lambda$-coefficients $\frac{(a_{(j)}b)}{j!}$ to calculate commutators of type $[a_p,b_q]$ for $p,q\in\half\Z$.
		To this aim, we recall the following form of the \textit{Borcherds commutator formula} \cite[Proposition 4.8]{Kac01} in the conformal case, see also \cite[Eq.\ (15)]{CT23}: 
		\begin{equation} \label{eq:borcherds_commutator_formula_weighted}
			\forall p, q\in\C \,\,\,\forall c\in V \qquad
			[a_p,b_q]c = \sum_{j=0}^{+\infty}\binom{p+d_a-1}{j} (a_{(j)}b)_{p+q} c
			\,.
		\end{equation}
	\end{rem} 
	
	A VOSA $V$ is said to be \textit{unitary} if, see \cite[Section 3.1]{CGH}, cf.\ also \cite[Section 2.1]{AL17}, there exists a \textit{scalar product} (that we assume to be linear in the second variable) $\scalar$ on $V$ such that: it is \textit{normalized}, that is $(\Omega|\Omega)=1$; it is \textit{invariant}, that is there exists an antilinear VOSA involution $\theta$ of $V$, called the \textit{PCT operator}, satisfying
	\begin{equation}
		\forall a,b,c\in V\qquad
		(Y(\theta(a),z)b|c)=(b|Y(e^{zL_1}(-1)^{2L_0^2+L_0}z^{-2L_0}a, z^{-1})c)
		\,.
	\end{equation}   
	The above definition of unitarity is equivalent to the following, which will be mostly used in this paper, see \cite[Theorem 3.31]{CGH}, cf.\ also \cite[Section 2]{CT23}: there exists a normalized scalar product and an antilinear vector superspace involution $V\ni a\to \overline{a}\in V$ such that $\overline{\nu}=\nu$ and 
	\begin{equation}
		\forall a,b,c\in V \,\,\,\forall n\in\half\Z \qquad
		(a_nb|c)=(b|\overline{a}_{-n}c)
		\,.
	\end{equation}
	In this case, we have that $\overline{\Omega}=\Omega$ and there exists a unique antilinear VOSA automorphism $\theta$ such that $\overline{a}=e^{L_1}(-1)^{2L_0^2+L_0}\theta(a)$ for all $a\in V$. Moreover, $\theta$ turns out to be the PCT operator of $V$. 
	We say that $a\in V$ is \textit{Hermitian} if $\overline{a}=a$.
	We note that for any two homogeneous vectors $a$ and $b$ of any unitary VOSA, we have that $(a|b)=0$ whenever $d_a\not= d_b$. Moreover, a unitary VOSA $V$ is \textit{simple}, that is there are no non-trivial ideals, if and only if it is of \textit{CFT type}, that is $V_0=\C\Omega$ and $V_n=\{0\}$ for all $n<0$, see \cite[Proposition 3.10]{CGH}. 
	If $V$ is a simple unitary VOSA, then its unitary structure is unique up to unitary VOSA isomorphism, see \cite[Proposition 3.14]{CGH}. 
	Recall that the even part $V_\parzero$ of $V$ is a \textit{unitary} VOA in the sense of \cite{DL14} and \cite[Section 5]{CKLW18}, and it is also a \textit{unitary subalgebra} of $V$, see \cite[Section 3.5]{CGH}.

We recall the definition of a \textit{graded-local M\"{o}bius covariant net} $\A$ from \cite[Section 2]{CGH}, based on \cite[Section 2]{CKL08}, \cite[Section 2]{CHKLX15} and \cite[Section 2]{CHL15}.
This is an \textit{isotonous}, that is inclusion-preserving, map from the set $\J$ of \textit{intervals} (non-dense connected open subsets) of the unit circle $S^1$ to a family of von Neumann algebras $\A(I)$ with $I\in\J$, acting on a separable Hilbert space $\mathcal{H}$. 
The \textit{M\"{o}bius covariance} requires the existence of a \textit{strongly continuous unitary representation} $U$ of the universal cover $\Mob(S^1)^{(\infty)}$ of the \textit{M\"{o}bius group} $\Mob(S^1)$ of $S^1$ acting covariantly on $\A$. This means that for all $I\in\J$ and all $A\in\A(I)$, it holds that $U(\gamma)\A(I)U(\gamma)^*=\A(\dot{\gamma} I)$, where $\dot{\gamma}$ is the image under the covering map $\tilde{p}:\Mob(S^1)^{(\infty)}\to \Mob(S^1)$ of $\gamma$. 
We also assume that: $U$ is \textit{positive-energy}, that is the generator $L_0$ of the lift to $\Mob(S^1)^{(\infty)}$ of the one-parameter subgroup of rotations $\R\ni t\mapsto U(r^{(\infty)}(t))$ is positive;
there exists a \textit{vacuum}, that is a $U$-invariant vector $\Omega\in\mathcal{H}$, which is also cyclic for the von Neumann algebra $\A(S^1)$ generated by $\bigcup_{I\in\J} \A(I)$. 
The \textit{twisted} or \textit{graded locality} is realized by a self-adjoint unitary operator $\Gamma$ on $\mathcal{H}$, called the \textit{grading unitary}, such that $\A(I')\subseteq Z\A(I)'Z^*$, where $Z:=\frac{1_\mathcal{H}-i\Gamma}{1-i}$. 
We say that $\A$ is a \textit{graded-local conformal net} if it is also \textit{diffeomorphism covariant}: there exists an extension of $U$, denoted by the same symbol, to a \textit{positive-energy strongly continuous projective unitary representation} of the universal cover $\Diff^+(S^1)^{(\infty)}$ of the group $\Diff^+(S^1)$ of \textit{orientation-preserving diffeomorphisms} of $S^1$ such that
\begin{equation}
	\begin{split}
		\forall \gamma\in\Diff^+(S^1)^{(\infty)} \,\,\,\forall I\in\J \qquad
		U(\gamma)\A(I)U(\gamma)^*
			&=
		\A(\dot{\gamma} I) \\
		\forall\gamma\in \Diff(I)^{(\infty)} \,\,\,\forall I\in\J \,\,\,\forall A\in\A(I')	\qquad
		U(\gamma)AU(\gamma)^*
			&= A 
	\end{split}
\end{equation}
where: $\dot{\gamma}$ is the image under the covering map from $\Diff^+(S^1)^{(\infty)}$ to $\Diff^+(S^1)$ of $\gamma$, still denoted by $\tilde{p}$; whereas $\Diff(I)^{(\infty)}$ is the connected component to the identity of the pre-image under $\tilde{p}$ of the subgroup of those diffeomorphisms acting as the identity map on $I'$.
A graded-local M\"{o}bius covariant or conformal net is said to be \textit{irreducible} if $\Omega$ is unique, up to a constant factor, among the vectors in $\mathcal{H}$ that are invariant for the action of $\Mob(S^1)^{(\infty)}$ given by $U$. This is equivalent to every $\A(I)$ to be a $III_1$ factor. 
Finally, note that $e^{i2\pi L_0}=U(r^{(\infty)}(2\pi))=\Gamma$, so that $U$ factors through a representation of the double cover $\Diff^+(S^1)^{(2)}$. We denote by $\A^\Gamma$ the \textit{Bose subnet} of $\A$, that is the fixed point subnet of $\A$ with respect to the adjoint action of $\Gamma$. See \cite[Section 2.3]{CGH} for the definition of \textit{covariant subnets} and their related topics. 
If $\A$ is irreducible, then $\A^\Gamma$ can be considered as an \textit{irreducible local M\"{o}bius covariant} or \textit{local conformal net} acting on the $\Gamma$-invariant subspace $\mathcal{H}^\Gamma$ of $\mathcal{H}$. The \textit{Virasoro subnet} $\Virnet_c$ of a graded-local conformal net $\A$ is always a covariant subnet of $\A^\Gamma$, see e.g.\ \cite{Car04, KL04, CGH}.

Therefore, from a simple unitary VOSA $V$, it is possible to define an irreducible graded-local conformal net $\A_V$, if certain analytic assumptions are satisfied. In the following, such analytic assumptions are briefly summarized, see \cite[Section 4]{CGH} for a detailed treatment.

\begin{defin}  \label{defin:energy_bounds}
	Let $(V,\scalar)$ be a unitary VOSA. A vector $a\in V$ is said to satisfy \textit{$k$-th order (polynomial) energy bounds} for some non-negative real number $k$, if there exist non-negative real numbers $M$ and $s$ such that
	\begin{equation}
		\forall n\in\half\Z
		\,\,\,\forall b\in V\qquad
		\norm{a_nb}\leq M(1+\abs{n})^s\norm{(L_0+1_V)^kb}
	\end{equation}
	where $\norm{a}:=\sqrt{(a|a)}$ for all $a\in V$. If $k=1$, $a$ is said to satisfy \textit{linear energy bounds}; whereas if $k$ is not specify, $a$ is simply said to satisfy \textit{energy bounds}. Accordingly, $V$ is said to be \textit{energy-bounded} if every element satisfies energy bounds. 
\end{defin}

We shall need the following proposition:

\begin{prop}  \label{prop:energy_bounds_odd_vectors_anticommutator}
	Let $V$ be a unitary VOSA and let $b\in V_\parone$ be any odd vector. Suppose that there exist some non-negative real numbers $M, s$ and $k$ such that
	\begin{equation}
		\forall n\in\Z-\frac12
		\,\,\,\forall c\in V \qquad
		\norm{[b_n,\overline{b}_{-n}]c} \leq
		M(\abs{n}+1)^s\norm{(L_0+1_V)^kc}
		\,.
	\end{equation}
Then $b$ satisfies the following $\frac{k}{2}$-th order energy bounds:
\begin{equation}
	\forall n\in\Z-\frac12
	\,\,\,\forall c\in V \qquad
	\norm{b_nc} \leq
	\sqrt{M}(\abs{n}+1)^\frac{s}{2}\norm{(L_0+1_V)^\frac{k}{2}c}
	\,.
\end{equation}
\end{prop}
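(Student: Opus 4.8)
The plan is to exploit unitarity to identify $\overline{b}_{-n}$ as the adjoint of $b_n$, to recognize the super-bracket $[b_n,\overline{b}_{-n}]$ as a positive operator, and then to convert the hypothesized \emph{norm} bound into a \emph{quadratic-form} bound by diagonalizing along the finite-dimensional $L_0$-eigenspaces. The gain of a factor $\frac12$ in both the exponent $s$ and the order $k$ is exactly what one expects from squaring, and the argument is designed to make this transparent.

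First I would read off from the unitarity relation $(a_nb|c)=(b|\overline{a}_{-n}c)$ that $\overline{b}_{-n}$ is the adjoint $(b_n)^*$ of $b_n$ on $V$ with respect to the scalar product. Since $b$ is odd and conjugation preserves parity, $\overline{b}$ is odd as well, so the super-bracket is the \emph{anticommutator}: $[b_n,\overline{b}_{-n}]=b_n\overline{b}_{-n}+\overline{b}_{-n}b_n=b_n(b_n)^*+(b_n)^*b_n$. Both summands are positive operators, whence $[b_n,\overline{b}_{-n}]$ is positive and self-adjoint and, in particular, $(b_n)^*b_n\leq[b_n,\overline{b}_{-n}]$. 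Testing against $c$ gives the first inequality $\norm{b_nc}^2=((b_n)^*b_nc|c)\leq([b_n,\overline{b}_{-n}]c|c)$.

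The decisive observation is that $[b_n,\overline{b}_{-n}]$ commutes with $L_0$. Indeed, by the weight conventions $b_n$ maps $V_m$ into $V_{m-n}$ and $\overline{b}_{-n}$ maps $V_m$ into $V_{m+n}$, so $[L_0,b_n]=-nb_n$ and $[L_0,\overline{b}_{-n}]=n\overline{b}_{-n}$, and each of the two products $b_n\overline{b}_{-n}$ and $\overline{b}_{-n}b_n$ therefore commutes with $L_0$. Consequently $[b_n,\overline{b}_{-n}]$ preserves every eigenspace $V_m=\mathrm{Ker}(L_0-m1_V)$, which is finite-dimensional by the VOSA axioms. On $V_m$ the operator $(L_0+1_V)^k$ acts as the scalar $(m+1)^k$, so restricting the hypothesis to vectors in $V_m$ turns the norm bound into the operator-norm bound $\norm{[b_n,\overline{b}_{-n}]|_{V_m}}\leq M(\abs{n}+1)^s(m+1)^k$. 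Since the restriction is positive and self-adjoint on the finite-dimensional space $V_m$, this immediately yields the quadratic-form bound $([b_n,\overline{b}_{-n}]c_m|c_m)\leq M(\abs{n}+1)^s(m+1)^k\norm{c_m}^2$ for every $c_m\in V_m$.

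Finally, decomposing an arbitrary $c\in V$ as a finite sum $c=\sum_m c_m$ with $c_m\in V_m$, and using that $[b_n,\overline{b}_{-n}]$ preserves the mutually orthogonal eigenspaces (distinct $L_0$-eigenspaces are orthogonal in a unitary VOSA), the cross terms drop out and I would sum the eigenspace estimates to get
\[
([b_n,\overline{b}_{-n}]c|c)\leq M(\abs{n}+1)^s\sum_m (m+1)^k\norm{c_m}^2 = M(\abs{n}+1)^s((L_0+1_V)^kc|c)=M(\abs{n}+1)^s\norm{(L_0+1_V)^{\frac{k}{2}}c}^2 .
\]
Combining this with the first step and taking square roots gives the claimed $\frac{k}{2}$-th order energy bound. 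The only genuinely delicate point is the passage from the hypothesized norm estimate to a quadratic-form estimate: the whole idea is that diagonalizing along the $L_0$-eigenspaces, where $(L_0+1_V)^k$ becomes a scalar, makes this conversion automatic, since for a positive operator an operator-norm bound on a reducing subspace is at once a quadratic-form bound.
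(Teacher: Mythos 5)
Your proof is correct, and it shares with the paper the same first half: both arguments reduce the claim to the quadratic-form estimate $\norm{b_nc}^2\leq (c|[b_n,\overline{b}_{-n}]c)$ using that $\overline{b}_{-n}$ is the adjoint of $b_n$ and that the super-bracket of two odd modes is an anticommutator, hence a sum of two positive operators. Where you diverge is in the crucial step of converting the hypothesized \emph{norm} bound on $[b_n,\overline{b}_{-n}]$ into a \emph{form} bound with only half the power of $L_0+1_V$: you diagonalize along the finite-dimensional, mutually orthogonal $L_0$-eigenspaces, observe that the restriction of the anticommutator to each $V_m$ is positive and self-adjoint so that an operator-norm bound there is automatically a form bound, and then resum. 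The paper instead inserts $(L_0+1_V)^{-\frac{k}{2}}(L_0+1_V)^{\frac{k}{2}}$ into the pairing, moves $(L_0+1_V)^{-\frac{k}{2}}$ across using the commutation of $x_ny_{-n}$ with $L_0$, and applies Cauchy--Schwarz together with the hypothesis evaluated at the vector $(L_0+1_V)^{-\frac{k}{2}}c$. The paper's trick is shorter and does not use positivity of the anticommutator at all (only its commutation with $L_0$), so it applies verbatim to any operator commuting with $L_0$ that satisfies such a norm bound; your spectral argument is more structural and makes the halving of the exponent conceptually transparent, at the price of invoking positivity and the finite-dimensionality of the graded pieces. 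Both routes implicitly use that $L_0+1_V>0$ on $V$, which is the same standing assumption built into the definition of energy bounds, so this is not a gap.
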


\begin{proof}
	First, note that for all $x,y\in V$ and all $n\in\half\Z$, the endomorphism $x_ny_{-n}$ commutes with $L_0$.
	Let $c$ be any vector in $V$. Then easy calculations together with the Cauchy-Schwarz inequality in the fifth row do the job:
	for all $n\in\Z-\half$,
	\begin{equation}
		\begin{split}
			\norm{b_nc}^2
			 &\leq
			\norm{\overline{b}_{-n}c}^2 + \norm{b_nc}^2 
			 =
			 (\overline{b}_{-n}c|\overline{b}_{-n}c) +(b_nc|b_nc)\\ 
			 &=
			(c|b_n\overline{b}_{-n}c)
			+ (c|\overline{b}_{-n}b_nc) 
			 =
			(c|[b_n,\overline{b}_{-n}]c) \\
			&=
			((L_0+1_V)^{-\frac{k}{2}}(L_0+1_V)^\frac{k}{2}c|[b_n,\overline{b}_{-n}]c)\\
			&=
			((L_0+1_V)^\frac{k}{2}c|[b_n,\overline{b}_{-n}](L_0+1_V)^{-\frac{k}{2}}c)\\
			 &\leq
			 \norm{(L_0+1_V)^\frac{k}{2}c}\,
			\norm{[b_n,\overline{b}_{-n}](L_0+1_V)^{-\frac{k}{2}}c} \\
			 &\leq
			M(\abs{n}+1)^s \norm{(L_0+1_V)^\frac{k}{2}c}^2 \,.
		\end{split}
	\end{equation}
Then the result follows taking the square roots.
\end{proof}

Let $C^\infty(S^1)$ be the vector space of infinitely differentiable complex-valued functions on $S^1$. Set the function $\chi(z):=e^{i\frac{x}{2}}$ for all $z\in S^1$ such that $z=e^{ix}$ for a unique $x\in(-\pi,\pi]$. Accordingly, we use $C^\infty_\chi(S^1)$ to denote the vector space of functions of kind $g=\chi h$ where $h\in C^\infty(S^1)$. With the following definitions of derivative:
\begin{equation}  \label{eq:defin_derivative_test_functions}
	\begin{split}
	f'(z)
		&:=
	\left.\frac{\mathrm{d}f(e^{ix})}{\mathrm{d}x}\right|_{e^{ix}=z}
	\\
	g'(z)
		&:=
	\left.\frac{\mathrm{d}g(e^{ix})}{\mathrm{d}x}\right|_{e^{ix}=z}=\chi(z)\left(\frac{i}{2}h(z)+h'(z)\right)
	\end{split}
\end{equation}
for all $z\in S^1$, we equip $C^\infty(S^1)$ and $C^\infty_\chi(S^1)$ with their natural Fréchet topologies. 
Therefore, for all non-negative $s\in\R$, the following family of norms are well-defined:
\begin{equation}
	\norm{f}_s:=\sum_{n\in\Z}(1+\abs{n})^s\abs{\widehat{f}_n}
	\quad\mbox{ and }\quad
	\norm{g}_s:=\sum_{n\in\Z-\half}(1+\abs{n})^s\abs{\widehat{g}_n}
\end{equation}
where we are using the Fourier coefficients
\begin{equation}
	\begin{split}
		\forall n\in\Z\qquad
		\widehat{f}_n
			&:=
		\oint_{S^1} f(z)z^{-n}\frac{\mathrm{d}z}{2\pi i z}
		=\frac{1}{2\pi}\int_{-\pi}^\pi f(e^{ix})e^{-inx} \mathrm{d}x
		 \\
		 \forall n\in\Z-\half\qquad
		\widehat{g}_n
			&:=
		\widehat{h}_\frac{2n-1}{2}
		=\frac{1}{2\pi}\int_{-\pi}^\pi h(e^{ix})e^{i\frac{x}{2}}e^{-inx} \mathrm{d}x
		 \,.
	\end{split}
\end{equation}

If $V$ is a unitary VOSA, call $\mathcal{H}$ its Hilbert space completion with respect to its scalar product $\scalar$. For all $a\in V$, the coefficients $a_n$ for all $n\in \half\Z$ are operators on $\mathcal{H}$ with dense domain $V$. Moreover, thanks to the invariance property of $\scalar$, every $a_n$ has densely defined adjoint and thus it is closable. We denote its closure with the same symbol.
Suppose further that $V$ is energy-bounded and let $a\in V_\parzero$ and $b\in V_\parone$, $f\in C^\infty(S^1)$ and $g\in C^\infty_\chi(S^1)$. Then the following operators
\begin{equation} \label{eq:defin_smeared_vertex_operators}
		Y_0(a,f)c:=	\sum_{n\in\Z}\widehat{f}_na_nc 
		\quad\mbox{ and }\quad
		Y_0(b,g)c:= \sum_{n\in\Z-\half}\widehat{g}_nb_nc 
\end{equation}
for all $c\in V$ are well-defined thanks to the energy bounds. Furthermore, they have densely defined adjoint in $\mathcal{H}$ thanks to the invariance property of the scalar product as just noted above. Therefore they are also closable: 

\begin{defin}
	The closures of the densely defined operators $Y_0(a,f)$ and $Y_0(b,g)$ defined in \eqref{eq:defin_smeared_vertex_operators} are called \textit{smeared vertex operators} and they are denoted by $Y(a,f)$ and $Y(b,g)$ respectively.
\end{defin}

It is proved, see \cite[Proposition 4.13]{CGH}, that the subspace $\mathcal{H}^\infty$ of $\mathcal{H}$ of \textit{smooth vectors} for $L_0$ is a common invariant core for the smeared vertex operators $Y(a,f)$ and $Y(b,g)$ and their adjoints. Note that $Y(\overline{a},\overline{f})\subseteq Y(a,f)^*$ and $Y(\overline{b},\overline{g})\subseteq Y(b,g)^*$, where $\overline{g}(z):=\chi(z)\overline{zh(z)}$ for all $z\in S^1$. It follows that if $a$ and $b$ are Hermitian, then the corresponding smeared vertex operators are self-adjoint whenever $f\in C^\infty(S^1,\R)$ and $g\in C^\infty_\chi(S^1,\R)$, that is whenever they are real-valued functions.

For a closed densely defined operator $A$ on a Hilbert space $\mathcal{K}$, the von Neumann algebra $W^*(A)$ generated by $A$ is given by:
\begin{equation}
	W^*(A)
	:=
	\{B\in B(\mathcal{K})\mid BA\subseteq AB \,,\,\, B^*A\subseteq AB^*\}' 
\end{equation}
where $\cdot '$ denotes the commutant in $B(\mathcal{K})$. 
If $\mathscr{F}$ is a family of closed densely defined operators on $\mathcal{K}$, the von Neumann algebra $W^*(\mathscr{F})$ generated by $\mathscr{F}$ is the smallest von Neumann algebra containing $\bigcup_{A\in\mathscr{F}}W^*(A)$.

Therefore, if $(V,\scalar)$ is a simple energy-bounded unitary VOSA, we define the isotonous family of von Neumann algebras on $\mathcal{H}$ generated from $(V,\scalar)$ by
\begin{equation}  \label{eq:defin_nets_from_VOSAs}
	\A_{(V,\scalar)}(I):=W^*\left(\left\{ Y(a,f) \,,\,\, Y(b,g) \,\left|\,
	\begin{array}{l}
		a\in V_\parzero \,,\,\, f\in C^\infty(S^1)\,,\,\,\mathrm{supp}f\subset I \\
		b\in V_\parone \,,\,\, g\in C^\infty_\chi(S^1) \,,\,\,\mathrm{supp}g\subset I
	\end{array} 
	\right.\right\}\right)
\end{equation}
for all $I\in\J$. In \cite[Section 4.2]{CGH}, it is proved that the vacuum vector $\Omega$ of $V$ is a cyclic vector for the von Neumann algebra $A_{(V,\scalar)}(S^1)$. 
Moreover, the \textit{positive-energy unitary representation} of the Virasoro algebra on $V$, arising from the coefficients of $Y(\nu, z)$, integrates to a positive-energy strongly continuous projective unitary representation $U$ of $\Diff^+(S^1)^{(\infty)}$ such that
\begin{equation}
	U(\exp^{(\infty)}(tf))AU(\exp^{(\infty)}(tf))^*
		=
	e^{itY(\nu,f)}Ae^{-itY(\nu,f)}
\end{equation}
where $\exp^{(\infty)}(tf)$ with $t\in\R$ is the one-parameter subgroup of $\Diff^+(S^1)^{(\infty)}$ generated by the smooth real vector field $f$ on $S^1$. 
Note that in this paper, we identify a smooth vector field $f\frac{\mathrm{d}}{\mathrm{d}x}$ with its corresponding smooth function $f$.
In particular, $U$ factors through a representation of the double cover $\Diff^+(S^1)^{(2)}$ as $U(r^{(\infty)}(2\pi))=\Gamma$, which is the extension to $\mathcal{H}$ of $\Gamma_V$. 
Therefore, it is proved that the family \eqref{eq:defin_nets_from_VOSAs} satisfies M\"{o}bius covariance with respect to $U$. It is also irreducible as the vacuum $\Omega$ is unique, $V$ being of CFT type.

Unfortunately, the graded locality of vertex operators is not in general enough to conclude the graded locality of the family \eqref{eq:defin_nets_from_VOSAs} with respect to $\Gamma$ and $Z$, that is the extension to $\mathcal{H}$ of the vector space map $Z_V:=\frac{1-i\Gamma_V}{1-i}$ on $V$. 
Note that $Z^*=Z^{-1}$ and that $Z(a)=(-i)^pa$ for all $a\in V_{\overline{p}}$.
Then we come to the following:

\begin{defin}
	A unitary VOSA $V$ is said to be \textit{strongly graded-local} if it is energy-bounded and $\A_{(V,\scalar)}$ satisfies the graded locality. 
\end{defin}

Therefore, if $(V,\scalar)$ is a simple strongly graded-local unitary VOSA, it can be proved that $\A_{(V,\scalar)}$ is also diffeomorphism covariant with respect to $U$ and thus it defines a proper irreducible graded-local conformal net on $\mathcal{H}$, which is also independent, up to isomorphism, of the choice of the scalar product on $V$. Accordingly, we denote such net simply by $\A_V$. 

Let $V^1$ and $V^2$ be two VOSAs. Then we denote by $V:=V^1\hat{\otimes}V^2$ their \textit{graded tensor product}, see \cite[Section 3.1]{CGH} and references therein. We highlight that if those VOSAs are also unitary, then the unitary structure $(\scalar,\theta)$ on $V$ is given by $\scalar:=\scalar_1\scalar_2$ and by $\theta:=\theta_1\otimes \theta_2$, where the indexes denote the unitary structures on the corresponding VOSAs, see \cite[Proposition 2.4]{AL17} and \cite[Proposition 2.20]{Ten19}. By \cite[Corollary 6.6]{CGH}, $V$ is strongly graded-local if and only if $V^1$ and $V^2$ are. In this case, we also have that $\A_V$ is isomorphic to the graded tensor product $\A_{V^1}\hat{\otimes}\A_{V^2}$, see \cite[Example 2.7]{CGH} and references therein.

\begin{ex} \label{ex:real_free_fermion}
	The easiest example of the correspondence between simple unitary VOSAs and irreducible graded-local conformal nets is given by the \textit{real free fermion models}, see \cite[Example 7.1]{CGH} and references therein. We use $F$ for the \textit{real free fermion VOSA}. Recall that $F$ is a simple unitary VOSA with central charge $\half$, which can be generated by any Hermitian primary vector $\fer\in F_\half$ with norm one. Moreover, $F$ is strongly graded-local and $\F:=\A_F$ is called the \textit{real free fermion net}. The graded tensor product $F^n$ of $n$ copies of $F$ is strongly graded-local and we denote the corresponding graded-local conformal net by $\F^n$. In particular, $F^2$ and $\F^2$ are known as the \textit{charged free fermion VOSA} and \textit{net} respectively.
\end{ex}

	Let $V$ be a vertex superalgebra and let $\nu$ be any Virasoro vector with $Y(\nu, z)=\sum_{m\in\Z}L_nz^{-n-2}$. We call $\tau\in V_\parone$ a \textit{super-Virasoro vector} of $V$ (with respect to $\nu$) if the coefficients of $Y(\tau,z)=\sum_{n\in\Z-\half}G_nz^{-n-\frac{3}{2}}$ satisfy with the ones of $Y(\nu,z)$ the commutation relations of the \textit{Neveu-Schwarz algebra} $NS$ for some central charge $c\in\C$:
	\begin{equation}  \label{eq:NS_cr}
		\begin{split}
			\forall m,n\in\Z \qquad
			[L_m,L_n]  &:=
			(m-n)L_{m+n}+\frac{c(m^3-m)}{12}\delta_{m,-n}
			\\
			\forall m\in\Z
			\,\,\,
			\forall n\in\Z-\half \qquad
			[L_m,G_n] &:= 
			\left(\frac{m}{2}-n\right) G_{m+n}
			\\
			\forall m,n\in\Z-\half \qquad
			[G_m,G_n] &:=
			2L_{m+n}+\frac{c}{3}\left(m^2-\frac{1}{4}\right)\delta_{m,-n}
			\,.
		\end{split}
	\end{equation} 
	Suppose that $\nu$ makes $V$ into a conformal vertex superalgebra.
	This means that the Virasoro vector $\nu$ is a conformal vector, that is $L_{-1}=T$ and $L_0$ is diagonalizable. In this case, $\tau$ is called a \textit{superconformal vector}, see \cite[Definition 5.9 and Proposition 5.9]{Kac01}.
	Then a VOSA $V$ is a \textit{$N=1$ superconformal VOSA} if there is a superconformal vector $\tau$ associated to the conformal vector $\nu$ of $V$. If $V$ is unitary, then it is called a \textit{unitary $N=1$ superconformal VOSA} if the vertex subalgebra generated by $\nu$ and $\tau$ is a unitary subalgebra, see \cite[Definition 7.7]{CGH}, which turns out to be equivalent to ask for $\tau$ to be Hermitian if $V$ is also simple, see \cite[Theorem 7.9(i)]{CGH}. Therefore, if $V$ is a simple $N=1$ superconformal VOSA which is also strongly graded-local, then $V$ is unitary $N=1$ superconformal if and only if $\A_V$ is an irreducible \textit{$N=1$ superconformal net} in the sense of \cite[Definition 7.6]{CGH}, see \cite[Theorem 7.9(ii)]{CGH}. 

Now, we give a well-known criteria useful to prove the strong graded locality of a unitary VOSA, based on linear energy bounds.
Set $C^\infty_\parzero(S^1):=C^\infty(S^1)$ and $C^\infty_\parone(S^1):=C^\infty_\chi(S^1)$ and corresponding symbols for their real-valued subsets. 

\begin{theo}  \label{theo:linear_energy_bounds_strong_locality}
	Let $V$ be a simple energy-bounded unitary VOSA. If $a$ is a Hermitian quasi-primary vector with given parity $p(a)$ satisfying linear energy bounds, then $W^*(Y(a,f))\subseteq W^*(ZY(b,g)Z^*)'$ for all $b\in V$ with given parity $p(b)$, all $f\in C^\infty_{p(a)}(S^1,\R)$ and all $g\in C^\infty_{p(b)}(S^1)$ whenever $f$ and $g$ have disjoint supports. 
	This is true in particular if either $a$ is any vector in $V_\half\cup V_1$ or $a$ is a Hermitian quasi-primary Virasoro or super-Virasoro vector.
\end{theo}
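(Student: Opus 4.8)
The plan is to reduce the von Neumann algebra inclusion to a commutation property of the one-parameter unitary group generated by $A:=Y(a,f)$, and then to establish that property by an energy-bounds argument. Since $a$ is Hermitian and quasi-primary and $f$ is real-valued, the relation $Y(\overline a,\overline f)\subseteq Y(a,f)^*$ shows that $A$ is symmetric on $\mathcal{H}^\infty$; because $a$ satisfies linear energy bounds, $A$ is in fact essentially self-adjoint there and its closure generates a strongly continuous one-parameter unitary group $t\mapsto e^{itA}$ preserving $\mathcal{H}^\infty$ (this is the standard analytic package attached to a linearly energy-bounded Hermitian quasi-primary field, see \cite[Section~4]{CGH} and \cite{CKLW18}). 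Writing $B:=ZY(b,g)Z^*$, which together with $B^*$ has $\mathcal{H}^\infty$ as an invariant core (recall that $Z$ is a function of $\Gamma=e^{i2\pi L_0}$ and hence preserves $\mathcal{H}^\infty$), it then suffices to prove $e^{itA}B\subseteq Be^{itA}$ and $e^{itA}B^*\subseteq B^*e^{itA}$ for all $t\in\R$: this gives $e^{itA}\in W^*(B)'$ for every $t$, whence $W^*(Y(a,f))=\{e^{itA}:t\in\R\}''\subseteq W^*(ZY(b,g)Z^*)'$.

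The algebraic input is the graded commutation of the smeared fields on smooth vectors. Since $f$ and $g$ have disjoint supports, VOSA locality together with the energy bounds (which make the smearing well defined) yields, on $\mathcal{H}^\infty$,
\begin{equation}
	Y(a,f)\,Y(b,g)=(-1)^{p(a)p(b)}\,Y(b,g)\,Y(a,f)\,.
\end{equation}
I would then convert this super-commutation into a genuine commutation by unwinding the Klein twist. Using $\Gamma\,Y(b,g)=(-1)^{p(b)}Y(b,g)\,\Gamma$, a direct computation gives $ZY(b,g)Z^*=Y(b,g)$ when $b$ is even and $ZY(b,g)Z^*=iY(b,g)\Gamma$ when $b$ is odd; in either case the displayed super-commutation is \emph{equivalent} to $[A,B]\phi=0$ for all $\phi\in\mathcal{H}^\infty$. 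Applying the same computation to $Y(\overline b,\overline g)$, which has the same (hence disjoint from $f$) support as $g$ and satisfies $ZY(\overline b,\overline g)Z^*\subseteq B^*$, yields $[A,B^*]\phi=0$ on $\mathcal{H}^\infty$ as well.

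The main obstacle is the analytic upgrade from commutation on the core $\mathcal{H}^\infty$ to commutation of the unitary group $e^{itA}$ with the closed operator $B$. This step fails for general symmetric operators that merely commute on a common core, and it is precisely here that the \emph{linear} energy bounds are indispensable. The idea is to fix $\phi\in\mathcal{H}^\infty$ and to show that $t\mapsto e^{itA}Be^{-itA}\phi$ is differentiable with derivative $i\,e^{itA}[A,B]e^{-itA}\phi=0$, so that it is constant and equal to $B\phi$, giving $e^{itA}B\subseteq Be^{itA}$; the differentiation is legitimate because $e^{-itA}\phi$ remains in $\mathcal{H}^\infty$, $B$ is closed, and the linear energy bounds supply the uniform control of $Be^{-itA}\phi$ needed to move the derivative through $B$. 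I would carry this out by invoking the corresponding lemma of \cite{CGH} (following \cite{CKLW18}); running the same argument for $B^*$ completes the reduction of the first paragraph.

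Finally, for the special cases it only remains to verify the hypotheses of the first part. Every $a\in V_\half\cup V_1$ is automatically quasi-primary: $L_1 V_\half\subseteq V_{-\half}=\{0\}$, while for $a\in V_1$ one has $L_1a\in V_0=\C\Omega$ and $(L_1a|\Omega)=(a|L_{-1}\Omega)=0$, so $L_1a=0$. Such vectors satisfy linear energy bounds, and since the conjugation $a\mapsto\overline a$ preserves $V_\half$ and $V_1$ one reduces to the Hermitian case by treating the Hermitian parts $\tfrac{a+\overline a}{2}$ and $\tfrac{a-\overline a}{2i}$ separately (the inclusion being stable under the join of the resulting von Neumann algebras). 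For a Hermitian quasi-primary super-Virasoro vector $\tau$ one applies Proposition~\ref{prop:energy_bounds_odd_vectors_anticommutator} to the anticommutator $[G_n,G_{-n}]=2L_0+\tfrac{c}{3}\bigl(n^2-\tfrac14\bigr)$, which satisfies its hypothesis with $k=1$ and $s=2$, obtaining $\tfrac12$-th order and \emph{a fortiori} linear energy bounds; for the Virasoro vector $\nu$ one uses the classical linear energy bounds $\norm{L_nc}\leq M(1+\abs{n})\norm{(L_0+1_V)c}$ stemming from $[L_n,L_{-n}]=2nL_0+\tfrac{c}{12}(n^3-n)$. In all these cases the first part applies and gives the asserted inclusion.
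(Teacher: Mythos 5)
Your argument is correct and is essentially the paper's own: the first part is exactly the linear-energy-bounds commutator-theorem argument that the paper delegates to \cite[Lemma 3.6]{CTW22} and \cite[Theorem 7.15]{CGH} (including the reduction of graded commutation to genuine commutation via the Klein transformation $Z$ and the reduction to Hermitian vectors), and the second part matches the paper's list of energy-bound estimates, with your derivation of the $\tfrac12$-th order bound for super-Virasoro vectors via Proposition \ref{prop:energy_bounds_odd_vectors_anticommutator} being the same computation the paper attributes to \cite[Eq.\ (27)]{CKL08}. No substantive differences or gaps.
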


\begin{proof}
	The first part of this theorem can be proved as in the last part of the proof of \cite[Lemma 3.6]{CTW22}, see also \cite{CTW23b} and the proof of \cite[Theorem 7.15]{CGH}. The second part follows recalling that: any vector in $V_\half$ satisfies $0$-th order energy bounds, see \cite[Proposition 4.5]{CGH}; vectors in $V_1$ satisfies $\half$-th order energy bounds, see \cite[Proposition 3.6]{CT23}; Hermitian quasi-primary Virasoro vectors satisfy linear energy bounds, see \cite[Proposition 3.4]{CT23}; Hermitian quasi-primary super-Virasoro vectors satisfy $\half$-th order energy bounds, see \cite[Eq.\ (27)]{CKL08}.
\end{proof}

\section{Strong graded locality}
\label{section:strong_graded_locality}

Let $\g=\g_\parzero\oplus \g_\parone$ be a simple finite-dimensional Lie superalgebra, see \cite{Kac77}, with a non-degenerate even supersymmetric invariant bilinear form $B\bilinear$. Let $x$ and $f$ be two even elements in $\g_\parzero$ such that $\mathrm{ad}x$ is diagonalizable on $\g$ with half-integer eigenvalues and that $[x,f]=-f$. Note that these conditions imply that $f$ is a nilpotent element of $\g$. We denote the eigenspaces of $\mathrm{ad}x$ by $\g_j$ with $j\in\half\Z$. 
Then the pair $(x,f)$ is called \textit{good} if the centralizer $\g^f=\{a\in\g\mid [f,a]=0\}$ of $f$ in $\g$ decomposes as
\begin{equation}
	\g^f=\bigoplus_{j\in\half\Z_{\leq 0}} \g^f_j
	\qquad\mbox{where}\qquad
	\g^f_j=\{a\in \g_j\mid [f,a]=0\} \,.
\end{equation}
In this case, $\mathrm{ad}x$ realizes an equivalence between $\g_\half$ and $\g_{-\half}$, see e.g.\ \cite[Eq.s (1.10)--(1.11)]{KW04}.
A pair $(x,f)$ is automatically good if there is an element $e\in\g_\parzero$ such that $[x,e]=e$, that is $(x,f)$ is part of a \textit{$\mathfrak{sl}_2$-triple} $(e,x,f)$ with $e\in \g_\parzero$. In this case, $e$ is uniquely determined by $(x,f)$. Moreover, $(\g, x, f)$ is called a \textit{Dynkin datum}.
Now, suppose that $\g$ is also \textit{basic}, which means that $\g_\parzero$ is reductive. Then it is not difficult to see that $e$, $x$ and $f$ belong to the semisimple part of $\g_\parzero$, so that $x$ is uniquely determined, up to conjugation, by the nilpotent element $f$, see \cite[Corollary 3.7 ]{Kos59}.
By a procedure of \textit{quantum (Hamiltonian) reduction}, see \cite[Section 2]{KRW03} and \cite[Sections 1--4]{KW04}, see also \cite[Section 5.1]{DK06}, it is possible to associate to every Dynkin datum $(\g,x,f)$, a family of vertex algebras, in the sense of \cite{Kac01}, $W^k(\g,x, f)$ with $k\in\C$, called \textit{universal $W$-algebras}. 
Moreover, $W^k(\g,x,f)$ has a natural conformal vector $\nu$, making it into a conformal vertex superalgebra, whenever $k\not=-h^\vee$, where $h^\vee$ is the \textit{dual Coxeter number} of $\g$. In this case, we denote the simple quotient of $W^k(\g,x,f)$ by $W_k(\g,x,f)$. 
It turns out that $W^k(\g,x, f)$ and its simple quotient $W_k(\g,x,f)$ depend, up to isomorphisms, only on $\g$, $f$ and $k$. Accordingly, as usual, we will use the symbols $W^k(\g,f)$ and $W_k(\g,f)$ to denote $W^k(\g,x, f)$ and $W_k(\g,x,f)$ respectively for a given choice of $x$. 

\textit{Minimal $W$-algebras}, introduced in \cite[Section 4]{KRW03} and \cite[Section 5]{KW04}, see also \cite[Section 2]{AKMPP18}, \cite[Section 7]{KMP22} and \cite[Section 7]{KMP23}, arise for a particular class of Dynkin datum, called \textit{minimal}, characterized by the fact that $\mathrm{ad}x$ gives a \textit{minimal gradation} of $\g$, that is
\begin{equation}  \label{eq:minimal_gradation_g}
	\g=\g_{-1}\oplus \g_{-\half}\oplus \g_0 \oplus \g_\half\oplus \g_1
	\,,\qquad \g_{-1}=\C f \,,\,\,\, \g_1=\C e \,.
\end{equation}
A key feature of universal minimal $W$-algebras is that they have explicit sets of free strong generators, which we are going to describe in the following, after setting some notations. 
Let $\g^\natural$ be the centralizer of $\{e, x, f\}$ in $\g$, then by the property of $B\bilinear$, one can see that
$$
\g_0=\C x\oplus \g^\natural\,,\qquad
\g^\natural=\{a\in \g_0\mid B(a,x)=0\} \,.
$$
It follows that if $\h$ is a Cartan subalgebra of the even part of $\g_0$, then $\h=\C x\oplus\h^\natural$, where $\h^\natural:=\{a\in \h\mid B(a,x)=0\}$ is a Cartan subalgebra of the even part of $\g^\natural$. Moreover, $\h$ is a Cartan subalgebra of $\g_\parzero$ too.
We also normalize the bilinear form $B\bilinear$ by the condition $B(x,x)=\half$. This also fixes the Casimir operator of $\g$ and its eigenvalue $2h^\vee$ on $\g$. Therefore, a complete list of minimal gradation for $\g$, along with a description of $\g^\natural$, $\g_\half$ and $h^\vee$, is given in \cite[Tables 1--3]{KW04}. 

\begin{rem}
	Let $\Delta\subset \h^*$ be the set of roots of $\h$ in $\g$.
	Then, a minimal Dynkin datum $(\g,x,f)$ can be obtained choosing $f$ as the root vector of $\g$ attached to a minimal root, so that $e$ is the root vector attached to the corresponding maximal root. Actually, it can be showed that there is a bijection between minimal gradations of $\g$, up to automorphisms of $\g$, and highest roots of the simple components of $\g_\parzero$, which can be made highest roots of $\g$ for some ordering of $\Delta$, up to the action of the Weyl group, see \cite[Section 5]{KW04}. 
\end{rem}

By \cite[Theorem 5.1]{KW04} with \cite{KW05}, cf.\ also \cite[Section 5.3]{DK06}, 
there are two linear isomorphisms $\g^\natural \ni u\mapsto J^{\{u\}}\in W^k(\g,f)_1$ and $\g_{-\half}\ni v\mapsto G^{\{v\}}\in W^k(\g,f)_\frac{3}{2}$ such that the images of the elements in any pair of bases of $\g^\natural$ and of $\g_{-\half}$ form together with the conformal vector $\nu$ a set of free strong generators for $W^k(\g,f)$.   
The $\lambda$-brackets are as follows:  
\begin{equation} \label{eq:lambda-bracket_generators}
	\begin{split}
		[{J^{\{u\}}}_\lambda G^{\{v\}}]
			&=G^{\{[u,v]\}} \\
		[{J^{\{u\}}}_\lambda J^{\{v\}}]
			&= 
		J^{\{[u,v]\}}+\lambda\left[\left(k+\frac{h^\vee}{2}\right)B(u,v)-\frac{1}{4}\kappa_{\g_0}(u|v)\right]
	\end{split}
\end{equation}
where $\kappa_{\g_0}\scalar$ is the Killing form of $\g_0$ and (see \cite[Eq.\ (7.7)]{KMP23}, which is derived from \cite[Theorem 5.1(e)]{KW04}, but with a correct version of the $\lambda^2$-term, recall \cite{KW05}, from \cite[Eq.\ (3.1)]{AKMPP18})
\begin{equation} \label{eq:lambda-bracket_G_generators}
	\begin{split}
		[{G^{\{u\}}}_\lambda G^{\{v\}}]
			&=
		-2(k+h^\vee)\langle u,v\rangle \nu
		+\langle u,v\rangle 	\sum_{\alpha=1}^{\dim\g^\natural}:J^{\{u^\alpha\}} J^{\{u_\alpha\}}: \\
			&+2\sum_{\alpha,\beta=1}^{\dim\g^\natural}\langle [u_\alpha,u], [v,u^\beta] \rangle :J^{\{u^\alpha\}}J^{\{u_\beta\}}: 
		+2(k+1)L_{-1}J^{\{[[e,u],v]^\natural\}} \\
			&+2\lambda\left(J^{\{[[e,u],v]^\natural\}}+\sum_{\alpha,\beta=1}^{\dim\g^\natural}\langle [u_\alpha,u], [v,u^\beta] \rangle J^{\{[u^\alpha,u_\beta]\}}\right) \\
			&+2\lambda^2\langle u,v\rangle p(k)\Omega 
	\end{split}
\end{equation}
where: $\langle u,v\rangle:=B(f,[u,v])$ for all $u,v\in\g_{-\half}$; $\{u_\alpha\}$ and $\{u^\alpha\}$ are dual basis of $\g^\natural$ with respect to $B\bilinear$; $a\mapsto a^\natural$ is the orthogonal projection from $\g_0$ to $\g^\natural$; $p(k)$ is a monic quadratic polynomial from \cite[Table 4]{AKMPP18}.
Note that these generators descend to strong generators for the simple minimal $W$-algebras.
The central charge is given by \cite[Eq.\ (5.7)]{KW04}:
\begin{equation}  \label{eq:central_charge}
	c(k)=\frac{kd}{k+h^\vee}-6k+h^\vee-4 \,,\quad d:=\operatorname{sdim}(\g):=\dim(\g_\parzero)-\dim(\g_\parone) \,.
\end{equation}

\begin{defin}[{\cite[Definition 3.1]{AKMPP18}}]
	Let $W^k(\g,f)$ be a minimal $W$-algebra with its simple quotient $W_k(\g,f)$, where $k$ is any complex number different from $-h^\vee$. Set $\mathcal{V}^k(\g^\natural)$ as the vertex subalgebra of $W^k(\g,f)$ generated by the elements $J^{\{u\}}$ with $u\in\g^\natural$. Accordingly, set $\mathcal{V}_k(\g^\natural)$ as the image of $\mathcal{V}^k(\g^\natural)$ in $W_k(\g,f)$. If there exists $k\in\C$ such that $W_k(\g,f)$ is equal to $\mathcal{V}_k(\g^\natural)$, then $k$ is called a \textit{collapsing level} and one says that $W_k(\g,f)$ \textit{collapses} to $\mathcal{V}_k(\g^\natural)$.
\end{defin}

Now, we move to the unitarity of minimal $W$-algebras. 
First of all, from \cite[Theorem 3.3 and Proposition 3.4]{AKMPP18}, we have that if $k$ is a collapsing level, then the minimal $W$-algebra $W_k(\g,f)$ is either the \textit{trivial VOA} $\C\Omega$ or it collapses to a simple \textit{affine vertex superalgebra}, see e.g.\ \cite[Section 4.7 and Section 5.7]{Kac01}. In the former case, $W_k(\g,f)$ is trivially unitary, whereas in the latter one it is unitary if and only if the corresponding affine vertex superalgebra is a unitary VOSA; more details about these cases will be given later in Remark \ref{rem:unitary_range_and_collapsing}. 
Therefore, we can restrict to the non-collapsing levels.
In this case, if $k$ has non-zero imaginary part, then it follows from \eqref{eq:lambda-bracket_generators} that $W_k(\g,f)$ cannot be unitary.
If $k$ is real, by \cite[Theorem 7.9]{KMP22}, $W_k(\g,f)$ can be a unitary VOSA only if the $\mathrm{ad}x$-gradation \eqref{eq:minimal_gradation_g} is compatible with the parity of $\g$, that is the parity of $\g_j$ is $2j$ mod $2$ for $j\in\{\pm\half,0,\pm 1\}$. 
Under this compatibility condition, for all $k\in \C\backslash\{-h^\vee\}$, $W^k(\g,f)$ has a structure of VOSA, so that $W_k(\g,f)$ is a simple (possibly non-unitary) VOSA.
A complete list of possible choices for $\g$ satisfying this compatibility condition is given by the entries of \cite[Table 2]{KW04} together with the case $\g=\mathfrak{sl}_2$.
One can note from this list that the $\mathrm{ad}x$-gradation \eqref{eq:minimal_gradation_g} is unique up to automorphisms of $\g$, so that different choices for $f$ will produce isomorphic VOSAs, see also Remark \ref{rem:lie_superalgebras_iso}.
Among possible choices for $\g$, there are three special cases which are the only ones with $\g^\natural$ abelian.
These are $W_k(\mathfrak{sl}_2,f)$, $W_k(\mathfrak{spo}(2|1),f)$ and $W_k(\mathfrak{spo}(2|2),f)$, giving rise to the well-known \textit{Virasoro VOAs} $L(c(k))$, \textit{$N=1$ super-Virasoro VOSAs} $V^{c(k)}(NS)$ and \textit{$N=2$ super-Virasoro VOSAs} $V^{c(k)}(N2)$ respectively with corresponding central charges as in \eqref{eq:central_charge}, see e.g.\ \cite[Remark 4.1]{KRW03} and \cite[Section 8]{KW04}.
Their \textit{unitary series} were already established in the `80s, see Sections \ref{subsec:virasoro_nets}--\ref{subsec:N=2_super-virasoro_nets} respectively and references therein.
The remaining non-trivial unitary minimal $W$-algebras $W_k(\g,f)$ with $\g^\natural$ non-abelian are classified by \cite[Corollary 11.2 and Proposition 8.10]{KMP23}, cf.\ also \cite[Table 1]{KMP25}. 
Following \cite{KMP23}, for any $\g$ from \cite[Table 2]{KW04} with $\g^\natural$ non-abelian, we call the \textit{unitary range} of $W_k(\g,f)$ the set of $k\in\R$ different from $-h^\vee$ and such that $W_k(\g,f)$ is non-trivial and unitary. 

\begin{rem} \label{rem:lie_superalgebras_iso}
	When inspecting \cite[Table 2]{KW04}, recall that: for all $m\in\Zplus$ and all even $n\in\Z_{\geq 2}$, $\mathfrak{osp}(m|n)\cong\mathfrak{spo}(n|m)$; $\mathfrak{spo}(2|2)\cong \mathfrak{sl}(2|1)$; $\mathfrak{spo}(2|4)\cong D(2,1;1)$; for all $m\in\Z_{\geq 2}$, $\mathfrak{gl}_m\cong \C\oplus \mathfrak{sl}_m$; $\mathfrak{so}_4\cong \mathfrak{sl}_2\oplus \mathfrak{sl}_2$; $\mathfrak{spo}(2|0)\cong \mathfrak{sl}_2\cong \mathfrak{sp}_2$.
	Moreover, $D(2,1;a)$ and $D(2,1;a')$ are isomorphic if and only if $a$ and $a'$ lie in the same orbit of the group generated by the transformations $b\mapsto b^{-1}$ and $b\mapsto -b-1$.
\end{rem}

\begin{rem} \label{rem:unitary_range_and_collapsing}
For the readers' convenience, we collect here below some useful data about the unitary ranges, also specifying those collapsing levels giving rise to unitary VOAs:
	\begin{itemize}
		\item $\g=\mathfrak{sl}(2|m)$ for all $m\geq 3$, $k=-1$, $h^\vee=2-m$, $d=m^2-4m+3$, $W_{-1}(\g,f)\cong M(1)$;
		
		\item $\g=\mathfrak{psl}(2|2)$, $k\in \Z_{\leq -2}$, $h^\vee=0$, $d=-2$, $W_{-1}(\g,f)=\C\Omega$;
		
		\item $\g=\mathfrak{spo}(2|3)$, $k\in \frac{1}{4}\Z_{\leq -3}$, $h^\vee=\half$, $d=0$, $W_{-\half}(\g,f)=\C\Omega$, $W_{-\frac{3}{4}}(\g,f)\cong V_1(\mathfrak{sl}_2)$ with $c(-\frac{3}{4})=1$;
		
		\item $\g=\mathfrak{spo}(2|m)$ for all $m\geq 4$, $k\in \half\Z_{\leq -2}$, $h^\vee=2-\frac{m}{2}$, $d=3+\frac{m(m-5)}{2}$, $W_{-\half}(\g,f)=\C\Omega$;
		
		\item $\g=D(2,1;\frac{m}{n})$ for all $m,n\in\Zplus$ coprime and $(m,n)\not=(1,1)$, $k\in \frac{mn}{m+n}\Z_{<0}$, $h^\vee=0$, $d=1$, $W_{\frac{-m}{m+1}}(\g,f)\cong V_{m-1}(\mathfrak{sl}_2)$ with $c(\frac{-m}{m+1})=\frac{3(m-1)}{m+1}$ for all $m\in \Z_{\geq 2}$;
		
		\item $\g=F(4)$, $k\in \frac{2}{3}\Z_{\leq -2}$, $h^\vee=-2$, $d=8$, $W_{-\frac{2}{3}}(\g,f)=\C\Omega$;
		
		\item $\g=G(3)$, $k\in \frac{3}{4}\Z_{\leq -2}$, $h^\vee=-\frac{3}{2}$, $d=3$, $W_{-\frac{3}{4}}(\g,f)=\C\Omega$;
	\end{itemize}
where $M(1)$ is the \textit{Heisenberg VOA} with central charge $c=1$, see e.g.\ \cite[Section 3.5 and Proposition 4.10(a)]{Kac01} and \cite[Section 4.3]{DL14}, and $V_1(\mathfrak{sl}_2)$ and $V_{m-1}(\mathfrak{sl}_2)$ are the \textit{unitary} affine VOAs \cite[Section 4.2]{DL14} associated to the simple complex Lie algebra $\mathfrak{sl}_2$ at levels $1$ and $m-1$ respectively. 
As we have explained above, outside the unitary ranges together with the Virasoro, $N=1$ and $N=2$ super-Virasoro cases, see \cite[Table 1 and Table 2]{KW04}, the unitary minimal $W$-algebras are all collapsing. The ones collapsing to the trivial VOA are listed in \cite[Proposition 3.4]{AKMPP18}, whereas the non-trivial ones are given by \cite[Proposition 7.11 and Corollary 7.12]{KMP22}, cf.\ \cite[Theorem 7.4]{KMP23}:
\begin{itemize}
	\item $W_{-\frac{4}{3}}(G_2,f)\cong V_1(\mathfrak{sl}_2)$ with $c(-\frac{4}{3})=1$;
	
	\item $W_{-1}(\mathfrak{sl}(m|n),f)\cong M(1)$ for all $n\geq 1$ and all $m>2$ such that $m\not\in\{n, n+1, n+2\}$;
	
	\item $W_{-2}(\mathfrak{osp}(m|n),f)\cong V_\frac{{m-n-8}}{2}(\mathfrak{sl}_2)$ with $c(-2)=\frac{3(m-n-8)}{m-n-4}$ for all $m,n\geq 1$ such that $m-n\geq 10$ and even.
\end{itemize}
\end{rem}

\begin{rem}  \label{rem:unitary_structure_minimal_W-algebras}
	We point out some facts about the unitary structure of minimal $W$-algebras. 
	This is determined by a suitable involutive antilinear automorphism $\theta$ of the Lie superalgebra $\g$. In particular, $\theta$ must fix the elements $e$, $x$ and $f$ of the $\mathfrak{sl}_2$-triple. This implies that $\theta$ preserves the eigenspaces of $\mathrm{ad}x$, that is $\theta(\g_j)=\g_j$ for all $j\in\{\pm\half,0,\pm 1\}$. 
	Then, see \cite[Proposition 7.1 and Proposition 7.2]{KMP23}, the above involution $\theta$ determines the PCT operator, which we still denote by $\theta$, of a unitary minimal $W$-algebra by $\theta(J^{\{u\}})=J^{\{\theta(u)\}}$ and $\theta(G^{\{v\}})=G^{\{\theta(v)\}}$.
	It follows that we can choose a pair of bases of $\g^\natural$ and of $\g_{-\half}$ in such a way that the corresponding generators are Hermitian, see also the discussion after the proof of \cite[Lemma 7.3]{KMP22}, that is $\theta(J^{\{u\}})=-J^{\{u\}}$ and $\theta(G^{\{v\}})=G^{\{v\}}$.
\end{rem}

We are now ready to prove the main theorem of this section about the strong graded locality of unitary minimal $W$-algebras.

\begin{theo} \label{theo:W-algebras_nets}
	Every unitary minimal W-algebra $W_k(\g,f)$ is strongly graded-local. Then to every such W-algebra is associated a unique, up to isomorphism, irreducible graded-local conformal net $\A_{W_k(\g,f)}$.
\end{theo}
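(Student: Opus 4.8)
The plan is to verify the two defining properties of strong graded locality for $V:=W_k(\g,f)$, namely that $V$ is energy-bounded and that $\A_{(V,\scalar)}$ is graded-local, after which the construction recalled in Section~\ref{section:preliminaries} produces the net $\A_{W_k(\g,f)}$ and yields the second assertion. Throughout I would work with the free strong generating set $\{\nu\}\cup\{J^{\{u\}}\}\cup\{G^{\{v\}}\}$, choosing the bases of $\g^\natural$ and $\g_{-\half}$ as in Remark~\ref{rem:unitary_structure_minimal_W-algebras} so that every generator is Hermitian; since $J^{\{u\}}$ and $G^{\{v\}}$ are primary with respect to $\nu$, they are in particular quasi-primary. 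The collapsing cases of Remark~\ref{rem:unitary_range_and_collapsing}, where $V$ is trivial or isomorphic to a unitary Heisenberg or affine VOA (all purely even, so graded locality reduces to locality), are disposed of separately, being immediate or already known from \cite{CKLW18}. The genuine work is the non-collapsing case, treated uniformly below; this uniform argument also recovers the special cases $\g=\mathfrak{sl}_2,\mathfrak{spo}(2|1),\mathfrak{spo}(2|2)$ where $\g^\natural$ is abelian.

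The generators $\nu$ and $J^{\{u\}}$ are harmless: $\nu$ is a Hermitian quasi-primary Virasoro vector and each $J^{\{u\}}\in V_1$, so both satisfy linear energy bounds by Theorem~\ref{theo:linear_energy_bounds_strong_locality}. The crux is the odd generators $G^{\{v\}}\in V_{3/2}$, for which I would invoke Proposition~\ref{prop:energy_bounds_odd_vectors_anticommutator}. Using $\overline{G^{\{v\}}}=G^{\{v\}}$, the anticommutator $[G^{\{v\}}_n,\overline{G^{\{v\}}}_{-n}]$ is read off from the $\lambda$-bracket \eqref{eq:lambda-bracket_G_generators} through the Borcherds commutator formula \eqref{eq:borcherds_commutator_formula_weighted} with $p=n$, $q=-n$ and $d_{G^{\{v\}}}=\tfrac32$. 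Only the weight-$0$ modes of $G^{\{v\}}_{(j)}G^{\{v\}}$ enter, and the terms $j=0,1,2$ survive, with binomial coefficients $1$, $n+\half$ and $\tfrac12(n+\half)(n-\half)$. Concretely the result is a linear combination of $L_0$, of the weight-$0$ modes $(:J^{\{u^\alpha\}}J^{\{u_\beta\}}:)_0$ of the current bilinears in \eqref{eq:lambda-bracket_G_generators}, of zero modes $J^{\{w\}}_0$ (from the $L_{-1}J$ and $\lambda^1$ terms, since $(L_{-1}J^{\{w\}})_0=-J^{\{w\}}_0$), and of the scalar $\id$ with coefficient quadratic in $n$ (from the $\lambda^2$-term $2\langle v,v\rangle p(k)\Omega$).

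The key estimate is that the current-bilinear zero modes are dominated by $(L_0+1_V)$. The Hermitian currents $J^{\{u\}}$ generate a unitary affine subalgebra of $V$, whose Sugawara conformal Hamiltonian $L_0^{\g^\natural}$ is non-negative and satisfies $0\leq L_0^{\g^\natural}\leq L_0$ by positivity of the associated coset; hence the full Casimir bilinear $\sum_\alpha(:J^{\{u^\alpha\}}J^{\{u_\alpha\}}:)_0$ is bounded by $(L_0+1_V)$, and each individual $(:J^{\{a\}}J^{\{b\}}:)_0$ follows by polarization (together with the single-current Sugawara bound), up to current zero modes controlled by $(L_0+1_V)^{1/2}$. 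Combining this with $\norm{L_0 c}\leq\norm{(L_0+1_V)c}$ and $\norm{J^{\{w\}}_0 c}\leq M\norm{(L_0+1_V)^{1/2}c}$, the whole anticommutator obeys
\begin{equation}
	\norm{[G^{\{v\}}_n,\overline{G^{\{v\}}}_{-n}]c}\leq M(\abs{n}+1)^2\norm{(L_0+1_V)c}
	\,,
\end{equation}
so Proposition~\ref{prop:energy_bounds_odd_vectors_anticommutator} yields $\tfrac12$-th order, in particular linear, energy bounds for $G^{\{v\}}$. Since energy bounds propagate from a strong generating set to the whole vertex superalgebra, $V$ is energy-bounded; see \cite{CGH}.

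It remains to deduce graded locality. Each generator falls under Theorem~\ref{theo:linear_energy_bounds_strong_locality}: $J^{\{u\}}$ as an element of $V_1$, $\nu$ as a Hermitian quasi-primary Virasoro vector, and $G^{\{v\}}$ as a Hermitian quasi-primary vector with the linear energy bounds established above. Hence, for a generator $a$ and arbitrary $b\in V$ with disjoint supports, one gets the twisted commutation $W^*(Y(a,f))\subseteq W^*(ZY(b,g)Z^*)'$. Because $V$ is strongly generated by these vectors and is energy-bounded, the local algebras $\A_{(V,\scalar)}(I)$ are already generated by the smeared fields of the generators supported in $I$, see \cite{CGH,CKLW18}; summing the commutations over the generators then gives $\A_{(V,\scalar)}(I')\subseteq Z\A_{(V,\scalar)}(I)'Z^*$ for all $I\in\J$, which is exactly graded locality. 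Therefore $V$ is strongly graded-local, and the construction of Section~\ref{section:preliminaries} produces the irreducible graded-local conformal net $\A_{W_k(\g,f)}$, unique up to isomorphism and independent of the chosen scalar product. The main obstacle is the weight-$0$ current-bilinear bound of the third paragraph; the rest is bookkeeping with the structure constants appearing in \eqref{eq:lambda-bracket_G_generators}.
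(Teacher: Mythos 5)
Your proposal is correct and follows the same overall skeleton as the paper: dispose of the collapsing levels and the already-known cases, choose Hermitian generators as in Remark \ref{rem:unitary_structure_minimal_W-algebras}, establish linear energy bounds for the odd generators via Proposition \ref{prop:energy_bounds_odd_vectors_anticommutator} applied to $[G^{\{v\}}_n,G^{\{v\}}_{-n}]$, and conclude with Theorem \ref{theo:linear_energy_bounds_strong_locality} and \cite[Theorem 6.4]{CGH}. The one genuinely different step is the estimate on the anticommutator. The paper does not restrict to zero modes at all: it bounds the vectors $G^{\{v\}}_{(j)}G^{\{v\}}$ themselves, quoting from \cite{CKLW18} that $L_{-1}J^{\{u\}}$ satisfies $\tfrac{3}{2}$-th order and $:J^{\{u\}}J^{\{w\}}:$ satisfies $2$-nd order energy bounds, so that the hypothesis of Proposition \ref{prop:energy_bounds_odd_vectors_anticommutator} holds with $k=2$ and $G^{\{v\}}$ gets linear energy bounds. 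You instead work only with the weight-$0$ modes and invoke a Sugawara/coset-positivity argument ($0\leq L_0^{\g^\natural}\leq L_0$ plus polarization) to dominate $(:J^{\{a\}}J^{\{b\}}:)_0$ by $L_0+1_V$, which yields $k=1$ and hence the sharper $\tfrac12$-th order bounds. Your route works in the unitary setting, but it carries extra burdens the paper avoids: you must verify that the currents generate a \emph{unitary} affine subalgebra whose conformal vector is the Sugawara vector at a non-critical level (with the appropriate treatment of the abelian and the several simple components of $\g^\natural$, each with its own normalization), and that the coset Hamiltonian is positive. The paper's citation-based estimate is cruder but entirely sufficient, since Theorem \ref{theo:linear_energy_bounds_strong_locality} only needs linear energy bounds; your sharper conclusion buys nothing for the theorem at hand. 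The remaining bookkeeping (binomial coefficients $\binom{n+1/2}{j}$, the identity $(L_{-1}J^{\{w\}})_0=-J^{\{w\}}_0$, propagation of energy bounds from a strong generating set, and generation of the local algebras by the smeared generators) matches the paper.
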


\begin{proof}
First of all, note that if $W_k(\g,f)=\C\Omega$, then it is trivially strongly local. Moreover, the Heisenberg VOA $M(1)$ and the unitary affine VOAs are already known to be strongly local, see \cite[Exmaple 8.6 and Example 8.7]{CKLW18} respectively. 
For the cases with $\g^\natural$ abelian, the strong graded locality of $W_k(\g,f)$ is also already known, see Sections \ref{subsec:virasoro_nets}--\ref{subsec:N=2_super-virasoro_nets} and references therein.  

Let $W_k(\g,f)$ be any unitary minimal $W$-algebra such that $k$ is not a collapsing level and $\g^\natural$ is not abelian. 
As explained in Remark \ref{rem:unitary_structure_minimal_W-algebras}, we can consider a set of Hermitian generators $\nu$, $J^{\{u\}}$ and $G^{\{v\}}$.
We are going to prove that these generators, with the $\lambda$-brackets in \eqref{eq:lambda-bracket_generators}--\eqref{eq:lambda-bracket_G_generators}, satisfy linear energy bounds. It is well-known that the conformal vector $\nu$ satisfies linear energy bounds and the currents $J^{\{u\}}$ satisfy $\half$-th order energy bounds, see the proof of Theorem \ref{theo:linear_energy_bounds_strong_locality}. Then it remains to prove it for the primary generators of type $G^{\{v\}}$.

First, we have that $\overline{G^{\{v\}}}=\theta(G^{\{v\}})=G^{\{v\}}$. Recalling the formula for the $\lambda$-bracket \eqref{eq:lambda_bracket}, we can calculate the vectors $G^{\{v\}}_{(j)}G^{\{v\}}$ appearing in the commutators $[G^{\{v\}}_n, G^{\{v\}}_{-n}]$ for all $n\in\Z-\half$, according to the Borcherds commutator formula \eqref{eq:borcherds_commutator_formula_weighted}. Indeed, from \eqref{eq:lambda-bracket_G_generators} we have that $G^{\{v\}}_{(j)}G^{\{v\}}=0$ for all $j>2$ and
\begin{equation}
	\begin{split}
		G^{\{v\}}_{(0)}G^{\{v\}}
			&= 
		-2(k+h^\vee)\langle v,v\rangle \nu
		+\langle v,v\rangle 	\sum_{\alpha=1}^{\dim\g^\natural}:J^{\{u^\alpha\}} J^{\{u_\alpha\}}: \\
			&
		+2\sum_{\alpha,\beta=1}^{\dim\g^\natural}\langle [u_\alpha,v], [v,u^\beta] \rangle :J^{\{u^\alpha\}}J^{\{u_\beta\}}: 
		+2(k+1)L_{-1}J^{\{[[e,v],v]^\natural\}} \\
		G^{\{v\}}_{(1)}G^{\{v\}}
			&=
		2J^{\{[[e,v],v]^\natural\}}+2\sum_{\alpha,\beta=1}^{\dim\g^\natural}\langle [u_\alpha,v], [v,u^\beta] \rangle J^{\{[u^\alpha,u_\beta]\}} \\
		G^{\{v\}}_{(2)}G^{\{v\}}
			&=
		4\langle v,v \rangle p(k)\Omega	\,.
	\end{split}
\end{equation}
It is then manifest that $G^{\{v\}}_{(1)}G^{\{v\}}$ and $G^{\{v\}}_{(2)}G^{\{v\}}$ satisfy $\half$-th order energy bounds as currents do it. By \cite[Eq.\ (102)]{CKLW18}, for any basis element $u\in\g^\natural$, $L_{-1}J^{\{u\}}$ satisfies $\frac{3}{2}$-th order energy bounds; whereas for basis elements $u,v\in\g^\natural$, $:J^{\{u\}}J^{\{v\}}:$ satisfies $2$-nd order energy bounds, see \cite[Eq.s (102) and (104)]{CKLW18} with the discussion thereafter. It follows that $G^{\{v\}}_{(0)}G^{\{v\}}$ satisfies $2$-nd order energy bounds. To sum up, for all $n\in\Z-\half$, the commutator $[G^{\{v\}}_n, G^{\{v\}}_{-n}]$ satisfies the estimate in the hypotheses of Proposition \ref{prop:energy_bounds_odd_vectors_anticommutator} with $k=2$ there. Then for all basis elements $v\in \g_{-\half}$, $G^{\{v\}}$ satisfies linear energy bounds as desired. 

By Theorem \ref{theo:linear_energy_bounds_strong_locality}, $W^*(Y(a,f))\subseteq W^*(ZY(b,g)Z^*)'$ with $b\in W_k(\g,f)$ whenever $a$ is any of the generators of $W_k(\g,f)$, $f$ and $g$ are suitable test functions with disjoint supports. Therefore, $W_k(\g,f)$ is strongly graded-local by \cite[Theorem 6.4]{CGH}.
\end{proof}

\section{Representation theory}
\label{section:representations}

In this section, we investigate the representation theories of some of the \textit{minimal $W$-algebra nets} $\A_{W_k(\g,f)}$ arising from Theorem \ref{theo:W-algebras_nets}.
After giving some definitions with related bibliographical references, we present some results on the representation theory of graded-local conformal nets.

The definitions of \textit{weak}, \textit{admissible} and \textit{ordinary} $V$-\textit{module} for a VOSA $V$ can be found in e.g.\ \cite[Section 4]{DZ05}, \cite[Section 2]{DZ06}, see also \cite{ABD04}. If not specified, by a $V$-module, we mean an ordinary $V$-module.
A VOSA $V$ is said to be \textit{self-contragredient} if, as $V$-module, it is isomorphic to its contragredient module. This property is equivalent to the existence of a non-degenerate \textit{invariant bilinear form} on $V$, see e.g.\ \cite[Definition 3.6]{CGH}, so that self-contragredient VOSAs of CFT type are automatically simple by \cite[Proposition 3.8 (iv)]{CGH}.
A VOSA $V$ is said to be \textit{$C_2$-cofinite} if  
$V/\langle v_{(-2)}u\mid v,u\in V\rangle$ is finite-dimensional, where $\langle \cdot \rangle$ denotes the linear span. 
It is said to be \textit{rational} if every admissible $V$-module is a direct sum of irreducible ordinary $V$-modules.
By \cite[Theorem 6.6]{DZ06}, the rationality implies that $V$ has only a finite number of irreducible $V$-modules.

\begin{defin}
	A VOSA $V$ is \textit{strongly rational} if it is of CFT type, self-contragredient, $C_2$-cofinite and rational (and thus also simple). 
\end{defin}

\begin{rem}
	By \cite[Proposition 2.4]{HM}, if $V$ is a strongly rational VOSA, then
	$V_\parzero$ is strongly rational too. The converse is also true if $V_\parone$ satisfies certain conditions as $V_\parzero$-module as we explain in the following. If $V$ is a simple VOSA, then $V_\parzero$ is simple and $V_\parone$ is an irreducible $V_\parzero$-module, see e.g.\ \cite[Remark 4.17]{CGGH23}. If it is also of CFT type, then $V_\parzero$ is of CFT type and $V_\parone$ has positive $L_0$-grading. Moreover, if $V_\parzero$ is strongly rational, then its modules form a modular tensor category, see e.g.\ \cite{Hua08} and references therein. Therefore, $V_\parone$ turns out to be a simple current of $V_\parzero$, see \cite[Theorem 3.1 and Remark A.2]{CKLR19} and the hypotheses of \cite[Proposition 2.6]{HM} apply, so that $V$ is strongly rational too.
	To sum up, we have that a simple VOSA $V$ of CFT type (which will be always our case) is strongly rational if and only if $V_\parzero$ is.
\end{rem}

Let $\A$ be an irreducible graded-local conformal net. 
Then $\A$ satisfies the \textit{twisted Haag duality}, see \cite[Theorem 5]{CKL08}, that is for all $I\in\J$,
\begin{equation}  \label{eq:twisted_Haag_duality}
	\A(I')=Z\A(I)'Z^*=Z^*\A(I)'Z 
	\,.
\end{equation}
Let $I\in\J$, consider any point $p\in I$ and call $I_1, I_2\in \J$ the two connected components of $I\backslash\{p\}$. Then we say that $\A$ is \textit{strongly additive} if $\A(I)$ is equal to the von Neumann algebra $\A(I_1)\vee\A(I_2)$ generated by $\A(I_1)$ and $\A(I_2)$.
Furthermore, we say that $\A$ satisfies the \textit{split property} if for all $I_0,I\in\J$ such that $\overline{I_0}\subset I$, there exists a type I factor $\mathcal{R}$ such that $\A(I_0)\subset \mathcal{R} \subset\A(I)$.
A classical argument, see e.g.\ \cite[Lemma 5.4.2]{BW92}, shows that the split property is equivalent to require that $\A(I_0)\otimes \A(I)'$ is naturally spatially isomorphic to $\A(I_0)\vee \A(I)'$, that is the map  $AB \mapsto A \otimes B$ with $A \in \A(I_0)$ and $B\in \A(I)'$, extends to a spatial isomorphism. See also \cite{DL83} for other characterizations of the split property. 
Thanks to twisted Haag duality, we also have that $\A$ is split if and only if $\A(I_0)\otimes Z\A(I')Z^*$ is naturally spatially isomorphic to $\A(I_0)\vee Z\A(I')Z^*$.
Another equivalent requirement to the split property is the existence of faithful normal product states on $\A(I_0)\vee \A(I)'$, see \cite[Proposition 2.2]{MTW18}, see also \cite{DL83} and \cite[below Definition 1.4]{DL84}.
Now, let $I_1, I_2\in \J$ such that $\overline{I_1}\cap \overline{I_2}=\emptyset$ and call $I_3, I_4\in\J$ the two connected components of $I_1'\cap I_2'$. Set $\hat{\A}(E)$ with $E:=I_1\cup I_2$ as the von Neumann algebra $Z(\A(I_3)\vee \A(I_4))'Z^*$. Note that $\A(E):=\A(I_1)\vee \A(I_2)$ is contained in $\hat{\A}(E)$, so that it makes sense to consider the \textit{Kosaki index} $[\hat{\A}(E):\A(E)]$ of this inclusion, see e.g.\ \cite{Kos98}, see \cite[Section 2]{LR95} for a review. Easily adapting the proof of \cite[Proposition 5(a)]{KLM01}, we have that the index $[\hat{\A}(E):\A(E)]$ is independent of the choice of $I_1, I_2\in\J$, provided that $\A$ is strongly additive and split. Accordingly, we call this index the $\mu$-\textit{index} of $\A$ and we denote it by $\mu_\A$.
Then we have the following definition on the line of \cite[Defition 8]{KLM01} in the local case:

\begin{defin}
	An irreducible graded-local conformal net $\A$ is said to be \textit{completely rational} if it is strongly additive, it satisfies the split property and it has finite $\mu$-index.
\end{defin}

\begin{rem}  \label{rem:complete_rationality}
	Let $\A^\Gamma$ be the Bose subnet of an irreducible graded-local conformal net $\A$. Recall that $\A^\Gamma$ can be considered as an irreducible local conformal net acting on $\mathcal{H}^\Gamma$.
	By \cite[Proposition 36]{Lon03} (based in part on an adaptation of \cite[Lemma 22 and Lemma 23]{Lon03} to the graded-local case), we have that $\A$ is completely rational if and only if $\A^\Gamma$ is completely rational. In this case, $\mu_{\A^\Gamma}=[\A:\A^\Gamma]^2\mu_\A = 4\mu_\A$, see the proof of \cite[Proposition 24]{KLM01} and the proof of \cite[Lemma 22]{Lon03}.
\end{rem}	
 
It is worthwhile to recall \cite[Theorem 5.4]{MTW18}, which proves that the split property is automatic in the local case. As a consequence, if the local conformal net has finite $\mu$-index, then also the strong additivity follows, see \cite[Theorem 5.3]{LX04}.  Accordingly, an irreducible local conformal net is completely rational if and only if it has finite $\mu$-index. This fact remains true in the graded-local case as we are going to discuss.

In \cite{Dop82}, it is proved under natural assumptions that if a local net $\B$ on the Minkowski space-time satisfying Haag duality is the fixed-point subnet of a graded-local net $\A$ under the action of a finite abelian group of internal symmetries, then $\B$ is split if and only if $\A$ is split. 
Note that the split property for $\A$ and the split property for $\B$ correspond to assumptions (v) and (v$'$) in page 76 of \cite{Dop82} respectively. The equivalence (v)$\Leftrightarrow$(v$'$) is discussed in page 84 of \cite{Dop82}. The implication  (v)$\Rightarrow$(v$'$) is rather straightforward. The implication  (v$'$)$\Rightarrow$(v) is a consequence of the fact that the action of the finite group can be locally implemented by unitaries in $\B$. This follows from the split property for $\B$ together with \cite[Proposition 3.3]{Dop82}.  A close inspection of the proofs shows that the argument in \cite{Dop82} can be adapted to the setting of M\"{o}bius covariant nets on $S^1$. As a consequence, we have the following. 

\begin{prop}
\label{propsplitgradedmobius}
Let $\A$ be an irreducible graded-local M\"{o}bius covariant net on $S^1$. Then $\A$ satisfies the split property if and only if $\A^\Gamma$ satisfies the split property. 
\end{prop}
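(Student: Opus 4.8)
The plan is to follow the strategy laid out in the text: the split property is a local condition relating $\A(I_0)$ and $\A(I)'$ for $I_0 \Subset I$, and the key input is the paper \cite{Dop82}, where Doplicher--Longo-type arguments show that for a graded-local net $\A$ with fixed-point (Bose) subnet $\B = \A^\Gamma$ under the $\Z/2\Z$-action generated by $\Gamma$, the split property passes between $\A$ and $\B$ in both directions. Since the excerpt explicitly states that the argument of \cite{Dop82} adapts to the setting of M\"{o}bius covariant nets on $S^1$, the proof should consist of transcribing that adaptation rather than reproving it from scratch. First I would fix intervals $I_0, I \in \J$ with $\overline{I_0} \subset I$ and recall that the split property for $\A$ is equivalent, via \cite[Proposition 2.2]{MTW18}, to the existence of a faithful normal product state on $\A(I_0) \vee \A(I)'$, and similarly for $\B$ on $\B(I_0) \vee \B(I)'$.

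The forward direction (if $\A$ is split then $\A^\Gamma$ is split) is the easy one: $\B(I) = \A(I)^\Gamma \subseteq \A(I)$ and $\B(I)' \supseteq \A(I)'$, so a product state on $\A(I_0) \vee \A(I)'$ restricts to a product state on the smaller algebra $\B(I_0) \vee \B(I)'$; alternatively, an intermediate type I factor $\mathcal{R}$ for the inclusion $\A(I_0) \subset \A(I)$ can be intersected with, or averaged over, the $\Gamma$-action to produce an intermediate factor for $\B(I_0) \subset \B(I)$. The reverse direction (if $\A^\Gamma$ is split then $\A$ is split) is where the real work of \cite{Dop82} lies: here one must reconstruct a product state or intermediate type I factor for $\A$ out of data for the fixed-point net $\B$, using that $\A$ is a finite (order two) extension of $\B$ obtained by adjoining the grading symmetry. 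I would invoke the abelian cohomological / disintegration argument of \cite{Dop82}, whose role is precisely to control how the split structure on the invariant part lifts across the $\Z/2\Z$-extension.

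The hard part will be justifying the transfer of \cite{Dop82} from the Minkowski-space, finite-abelian-symmetry setting to the $S^1$ M\"{o}bius-covariant, $\Z/2\Z$-graded setting. Concretely, the obstacle is that the original formulation assumes a net on Minkowski space-time with a local net as fixed-point subnet of a \emph{local} (or Bose) net, whereas here the ambient net $\A$ is only \emph{graded}-local: commutation relations among $\A(I_0)$, $\A(I)'$ are twisted by $Z = \frac{1_\mathcal{H} - i\Gamma}{1-i}$, as recorded in the twisted Haag duality \eqref{eq:twisted_Haag_duality}. I would therefore check that the split property for $\A$ is equivalently phrased, thanks to twisted Haag duality, as natural spatial isomorphism of $\A(I_0) \otimes Z\A(I')Z^*$ with $\A(I_0) \vee Z\A(I')Z^*$, as the text already notes, so that the graded twist is absorbed into the definition and the Doplicher--Longo machinery applies verbatim to the untwisted relative commutants. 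The remaining verification is that the standing hypotheses of \cite{Dop82} (separability of the Hilbert space, factoriality of the local algebras, covariance) hold in our situation: $\mathcal{H}$ is separable by assumption, each $\A(I)$ is a factor since $\A$ is irreducible, and M\"{o}bius covariance of $\A$ supplies the geometric action needed to run the argument on $S^1$.

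Once these compatibility points are settled, the conclusion follows by citing the adapted version of \cite{Dop82}: $\B = \A^\Gamma$ is split if and only if $\A$ is split, which is exactly the statement of the proposition.
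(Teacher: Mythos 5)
Your proposal follows essentially the same route as the paper: the paper's entire ``proof'' of this proposition is the observation, stated in the paragraph preceding it, that the argument of \cite{Dop82} for fixed points of a graded-local net under a finite abelian symmetry group adapts to M\"{o}bius covariant nets on $S^1$, which is exactly what you propose, including absorbing the grading via twisted Haag duality and checking the standing hypotheses (separability, factoriality from irreducibility, covariance). One small inaccuracy in your aside on the easy direction: since $\A^\Gamma(I)'\supseteq \A(I)'$, the algebra $\A^\Gamma(I_0)\vee\A^\Gamma(I)'$ is \emph{not} contained in $\A(I_0)\vee\A(I)'$, so a product state does not simply restrict; this is immaterial, however, because your argument in any case derives both implications from the adapted version of \cite{Dop82}.
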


Thanks to \cite[Theorem 5.4]{MTW18} we have the following corollary.

\begin{cor}
\label{corsplitgradedconformal}
 Every irreducible graded-local conformal net $\A$ satisfies the split property. 
\end{cor}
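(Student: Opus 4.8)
The plan is to derive the corollary as an immediate consequence of the two results just recalled. The statement to prove is that every irreducible graded-local conformal net $\A$ satisfies the split property, and the obvious strategy is to transfer the question to the Bose subnet $\A^\Gamma$, where the analogous statement is already available. Concretely, I would first invoke Proposition \ref{propsplitgradedmobius}: since a graded-local conformal net is in particular an irreducible graded-local M\"{o}bius covariant net, that proposition tells us that $\A$ satisfies the split property if and only if its Bose subnet $\A^\Gamma$ does. This reduces the problem entirely to establishing the split property for $\A^\Gamma$.

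Next I would use the fact, recorded in the discussion preceding the corollary (see the paragraph around Remark \ref{rem:complete_rationality}), that $\A^\Gamma$ can be regarded as an irreducible \emph{local} conformal net acting on the subspace $\mathcal{H}^\Gamma$. For such a local net the split property is automatic by \cite[Theorem 5.4]{MTW18}. Hence $\A^\Gamma$ satisfies the split property unconditionally. Combining this with the equivalence from Proposition \ref{propsplitgradedmobius} then yields that $\A$ itself satisfies the split property, which is exactly the assertion of the corollary.

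I do not anticipate any genuine obstacle here, since all the analytic and operator-algebraic work has already been carried out: the hard part—adapting the argument of \cite{Dop82} to the setting of M\"{o}bius covariant nets on $S^1$ in order to pass between a graded-local net and its Bose fixed-point subnet—is precisely what Proposition \ref{propsplitgradedmobius} accomplishes, while the deep input on the local side is \cite[Theorem 5.4]{MTW18}. The only point requiring a word of care is the verification that $\A^\Gamma$ genuinely falls under the hypotheses of \cite[Theorem 5.4]{MTW18}, i.e.\ that it is an irreducible local conformal net; this is guaranteed by the general theory of covariant subnets discussed in the preliminaries and by the identification of $\A^\Gamma$ as a local conformal net on $\mathcal{H}^\Gamma$. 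With these two ingredients the corollary is a one-line deduction.
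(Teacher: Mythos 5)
Your proposal is correct and follows exactly the paper's (implicit) argument: reduce to the Bose subnet $\A^\Gamma$ via Proposition \ref{propsplitgradedmobius} and then apply \cite[Theorem 5.4]{MTW18} to the irreducible local conformal net $\A^\Gamma$. Nothing is missing.
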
	

We are now ready to state and prove the following proposition. 

\begin{prop} Let $\A$ be an irreducible graded-local conformal net. Then $\mu_{\A^\Gamma} = 4 \mu_\A$. Moreover, $\A$ is completely rational if and only if it has finite $\mu$-index.
\end{prop}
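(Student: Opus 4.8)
The plan is to derive both claims from the results already assembled in the excerpt, reducing everything to the corresponding statements for the Bose subnet $\A^\Gamma$, which is an irreducible \emph{local} conformal net and for which the local theory applies directly. The two assertions are really two faces of the same reduction: first the index relation $\mu_{\A^\Gamma}=4\mu_\A$, and then the characterization of complete rationality in terms of finiteness of $\mu_\A$ alone.

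First I would establish the index relation. This is essentially stated already in Remark \ref{rem:complete_rationality}, where $\mu_{\A^\Gamma}=[\A:\A^\Gamma]^2\mu_\A$, and the index of the inclusion $\A^\Gamma\subset\A$ is $2$ because $\A^\Gamma$ is the fixed-point subnet under the adjoint action of the order-two grading unitary $\Gamma$. Hence $[\A:\A^\Gamma]^2=4$ and $\mu_{\A^\Gamma}=4\mu_\A$. The cleanest way to present this is to invoke \cite[Proposition 36]{Lon03} together with the computations in \cite[Proposition 24]{KLM01} and \cite[Lemma 22]{Lon03} referenced in that remark, so that the relation holds whenever the relevant $\mu$-indices are defined (i.e.\ whenever $\A$, equivalently $\A^\Gamma$, is strongly additive and split). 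I would note that Corollary \ref{corsplitgradedconformal} already guarantees that $\A$, and hence $\A^\Gamma$, satisfies the split property, so the $\mu$-indices are always meaningful in our setting.

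For the second claim, the nontrivial direction is that finiteness of $\mu_\A$ already forces complete rationality; the converse is immediate from the definition. So assume $\mu_\A<\infty$. By the index relation just proved, $\mu_{\A^\Gamma}=4\mu_\A<\infty$ as well. Now $\A^\Gamma$ is an irreducible local conformal net, so by the local theory its finite $\mu$-index implies strong additivity via \cite[Theorem 5.3]{LX04}, and the split property for $\A^\Gamma$ holds automatically by \cite[Theorem 5.4]{MTW18}. Thus $\A^\Gamma$ is completely rational. Finally, applying \cite[Proposition 36]{Lon03} in the reverse direction, complete rationality of $\A^\Gamma$ is equivalent to complete rationality of $\A$, which yields the desired conclusion; alternatively, strong additivity of $\A$ transfers from that of $\A^\Gamma$, and the split property of $\A$ is Corollary \ref{corsplitgradedconformal}, while finiteness of $\mu_\A$ was the hypothesis.

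The main obstacle is conceptual rather than computational: one must be careful that the equivalences of \cite[Proposition 36]{Lon03} and the index computation are genuinely available in the graded-local setting, since the original references are stated in the local case. This is exactly why Remark \ref{rem:complete_rationality} flags the adaptation of \cite[Lemma 22 and Lemma 23]{Lon03} to the graded-local case, and why Proposition \ref{propsplitgradedmobius} and Corollary \ref{corsplitgradedconformal} are needed to secure the split property for $\A$ independently. Once those adaptations are granted, the argument is a short chain of implications with no further analysis required.
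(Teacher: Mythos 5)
Your proposal is correct and follows essentially the same route as the paper: split property from Corollary \ref{corsplitgradedconformal}, the index relation $\mu_{\A^\Gamma}=[\A:\A^\Gamma]^2\mu_\A=4\mu_\A$ via the computations of \cite[Proposition 24]{KLM01} and \cite[Lemma 22]{Lon03}, strong additivity of $\A^\Gamma$ from \cite[Theorem 5.3]{LX04} when $\mu_\A<\infty$, and transfer of complete rationality back to $\A$ by the graded-local adaptation of \cite[Lemma 22 and Lemma 23]{Lon03}. No gaps beyond those the paper itself leaves to the cited adaptations.
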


\begin{proof} 
	By Corollary \ref{corsplitgradedconformal}, $\A$ satisfies the split property. Then by the proof of \cite[Proposition 24]{KLM01}, see also the proof of \cite[Lemma 22]{Lon03}, we have that $[\A:\A^\Gamma]^2\mu_{\A^\Gamma}=[\A:\A^\Gamma]^4\mu_\A$, where $[\A:\A^\Gamma]=2$, see e.g.\ the beginning of \cite[Section 7]{KLM01}. Therefore, if $\mu_\A$ is finite, then $\mu_{\A^\Gamma}=4\mu_\A$ is finite, so that $\A^\Gamma$ is also strongly additive by \cite[Theorem 5.3]{LX04}. Consequently, $\A^\Gamma$ is completely rational and thus $\A$ is too by an adaption of \cite[Lemma 22 and Lemma 23]{Lon03}. 
\end{proof}
 
The notion of complete rationality is strongly related to the representation theory of conformal nets, which we are going to briefly recall, see \cite[Section 2.4 and Section 4.2]{CKL08}, \cite[Section 2]{CHKLX15} and \cite[Section 2]{CHL15} for details, see also \cite{GL96}.
For a \textit{(locally normal) representation} $\pi$ of a graded-local conformal net $\A$, we mean a collection of normal representations $\pi_I:\A(I)\to B(\mathcal{H}_\pi)$ for all $I\in\J$ on a common Hilbert space $\mathcal{H}_\pi$, satisfying $\pi_J\restriction_{\A(I)}=\pi_I$ for all $I,J\in \J$ such that $I\subseteq J$.
Note that any representation of $\A$ restricts to a representation of $\A^\Gamma$.
Let $\J^\R$ be the subset of $\J$ of intervals whose closures do not contain the point $-1\in S^1$. Then a \textit{soliton} $\pi$ of $\A$ is a collection of normal representations $\pi_I:\A(I)\to B(\mathcal{H}_{\pi})$ for all $I\in\J^\R$ on a common Hilbert space $\mathcal{H}_\pi$, satisfying $\pi_J\restriction_{\A(I)}=\pi_I$ for all $I,J\in \J^\R$ such that $I\subseteq J$.
A representation $\pi$ of $\A$ is called \textit{M\"{o}bius} (resp.\ \textit{diffeomorphism}) \textit{covariant} if there exists a strongly continuous (projective) unitary representation $U_\pi$ of $\Mob(S^1)^{(\infty)}$ (resp.\ $\Diff^+(S^1)^{(\infty)}$) on $\mathcal{H}_\pi$ such that for all $\gamma\in\Mob(S^1)^{(\infty)}$ (resp.\ $\Diff^+(S^1)^{(\infty)}$) all $I\in\J$ and all $X\in\A(I)$,
\begin{equation}
	U_\pi(\gamma)\pi_I(X)U_\pi(\gamma)^*=\pi_{\dot{\gamma}I}(U(\gamma)XU(\gamma)^*) \,.
\end{equation}
Moreover, $\pi$ has \textit{positive-energy} if the generator $L_0^\pi$ of the lift to $\Mob(S^1)^{(\infty)}$ of the one-parameter subgroup of rotations $\R\ni t\mapsto U_\pi(r^{(\infty)}(t))$ is positive.
A M\"{o}bius or diffeomorphism covariant representation $\pi$ is \textit{graded} if there exists a self-adjoint unitary operator $\Gamma_\pi$ of $\mathcal{H}_\pi$, commuting with $U_\pi$ and such that for all $I\in\J$ and all $X\in\A(I)$,
\begin{equation}
	\Gamma_\pi\pi_I(X)\Gamma_\pi=\pi_I(\Gamma X\Gamma) \,.
\end{equation}
Note that the local normality of any representation or soliton $\pi$, which we require by definition, is automatic whenever $\mathcal{H}_\pi$ is separable, see e.g.\ \cite[Appendix B]{KLM01}. 
Vice versa, if a representation $\pi$ is cyclic, then $\mathcal{H}_\pi$ is separable. The following proposition is given in \cite[Proposition 12]{CKL08} with the additional assumption that $\pi$ is irreducible.

\begin{prop}  \label{prop:diff_cov_grade-local_net}
	Let $\pi$ be a locally normal representation of an irreducible graded-local conformal net $\A$. 
	Then there exists a unique positive-energy strongly continuous unitary representation $U_\pi$ of $\Mob(S^1)^{(\infty)}$ making
	the restriction of $\pi$ to ${\A^\Gamma}$ M\"{o}bius covariant and such that $U_\pi(\Mob(S^1)^{(\infty)})$ $\subseteq$ $\bigvee_{I\in\J}\pi_I(\Virnet_c(I))$. 
	If $U_\pi(r^{(\infty)}(2\pi))$ is diagonalizable on $\mathcal{H}_\pi$, then there exists a unique positive-energy strongly continuous projective unitary representation $V_\pi$ of $\Diff^+(S^1)^{(\infty)}$ extending $U_\pi$, making $\pi$ a positive-energy diffeomorphism covariant representation of $\A$ and such that (up to a phase) $\pi_I(U(\mathrm{exp}^{(\infty)}(tf)))$ $=$ $V_\pi(\mathrm{exp}^{(\infty)}(tf))$ for all smooth real vector field $f$ on $S^1$ with $\mathrm{supp} f \subseteq I$. Furthermore, $\pi$ is graded by $V_\pi(r^{(\infty)}(2\pi))$ and $V_\pi(\Diff^+(S^1)^{(\infty)})$ $\subseteq$ $\bigvee_{I\in\J}\pi_I(\Virnet_c(I))$. 
\end{prop}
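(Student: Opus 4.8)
The plan is to show that the implementing unitaries can be built \emph{internally and locally} from the Virasoro subnet, so that the construction of \cite[Proposition 12]{CKL08} goes through without the irreducibility hypothesis. The starting observation is that the one-parameter groups generated by \emph{localized} vector fields already live inside the net: if $f$ is a smooth real vector field with $\mathrm{supp}f\subseteq I$, then $U(\exp^{(\infty)}(tf))=e^{itY(\nu,f)}\in\Virnet_c(I)\subseteq\A(I)$, by the very construction of $\A$ from the smeared stress-energy tensor. Hence, for a locally normal representation $\pi$, I can \emph{define} $V_\pi(\exp^{(\infty)}(tf)):=\pi_I\big(U(\exp^{(\infty)}(tf))\big)$; by normality of $\pi_I$ this is a strongly continuous one-parameter group, and by the compatibility relations among the $\pi_I$ it is independent of the interval $I\supseteq\mathrm{supp}f$ used to localize $f$. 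No irreducibility is used here.

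The next step is to integrate these localized one-parameter groups into a projective representation of the whole group $\Diff^+(S^1)^{(\infty)}$. This is exactly the integration procedure of \cite[Proposition 12]{CKL08}: the group is generated by flows of localized vector fields, the relations among such flows are reflected by the corresponding relations inside the net, and the resulting projective cocycle is governed by the single central charge $c$ of $\Virnet_c$, which is the same in every representation since each $\pi_I$ is a unital normal $*$-homomorphism and hence preserves the cocycle. The global rotation subgroup, whose generating vector field is not localized, is recovered from the localized data by the partition-of-unity/conformal-Hamiltonian argument; this produces, unconditionally, the positive-energy strongly continuous M\"{o}bius representation $U_\pi$ of the first part, with generator $L_0^\pi$ affiliated to $\bigvee_{I\in\J}\pi_I(\Virnet_c(I))$. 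Restricting to $\A^\Gamma$ and using that $\Virnet_c\subseteq\A^\Gamma$ is a covariant subnet, the intertwining $U_\pi(\gamma)\pi_I(X)U_\pi(\gamma)^*=\pi_{\dot\gamma I}(U(\gamma)XU(\gamma)^*)$ holds by transport of the vacuum covariance, while internality $U_\pi(\Mob(S^1)^{(\infty)})\subseteq\bigvee_{I\in\J}\pi_I(\Virnet_c(I))$ is built in.

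For the diffeomorphism part the role of the hypothesis becomes transparent: $U_\pi(r^{(\infty)}(2\pi))=e^{i2\pi L_0^\pi}$ is the image of the central generator of $\Diff^+(S^1)^{(\infty)}$, and its diagonalizability lets me decompose $\mathcal{H}_\pi=\bigoplus_h\mathcal{H}_\pi^{(h)}$ into its (countably many) eigenspaces. On each summand the central $2\pi$-rotation acts by a scalar, so the localized flows integrate to an honest positive-energy projective representation there, and these assemble into $V_\pi$ on $\mathcal{H}_\pi$ extending $U_\pi$, again internal to $\bigvee_{I\in\J}\pi_I(\Virnet_c(I))$ and with the common cocycle fixed by $c$. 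By construction $V_\pi$ satisfies the local implementation formula $\pi_I(U(\exp^{(\infty)}(tf)))=V_\pi(\exp^{(\infty)}(tf))$, grades $\pi$ through $V_\pi(r^{(\infty)}(2\pi))$, and---because the local formula reconstructs the action of all of $\Diff^+(S^1)^{(\infty)}$---makes the \emph{full} net $\A$, not only $\A^\Gamma$, diffeomorphism covariant. Uniqueness of both $U_\pi$ and $V_\pi$ is forced by the internality requirement: any two implementations contained in $\bigvee_{I\in\J}\pi_I(\Virnet_c(I))$ agree on the localized flows, which generate, and therefore coincide.

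\textbf{The main obstacle} I expect is the integration step, i.e.\ passing from the internally defined localized one-parameter groups to a genuine strongly continuous projective representation of $\Diff^+(S^1)^{(\infty)}$ with positive energy. Verifying that the localized relations close up to the correct $c$-dependent cocycle, recovering the non-localized rotation subgroup, and establishing positivity of $L_0^\pi$ for a general (non-irreducible) representation are the delicate points; the two facts that rescue the argument are that the central charge, hence the cocycle, is representation-independent, and that the diagonalizability hypothesis replaces a potentially wild direct-integral decomposition by the harmless countable direct sum over the eigenspaces of the central $2\pi$-rotation.
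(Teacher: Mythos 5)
Your overall strategy --- defining $V_\pi$ on localized flows by $V_\pi(\mathrm{exp}^{(\infty)}(tf)):=\pi_I(U(\mathrm{exp}^{(\infty)}(tf)))$ and then integrating --- is the same as the paper's, but there is a genuine gap at the step you dismiss in a single clause: proving that the resulting $V_\pi$ implements the covariance of $\pi$ for \emph{all} of $\Diff^+(S^1)^{(\infty)}$ and not merely for exponentials of localized vector fields. For localized $f$ the relation $V_\pi(\mathrm{exp}^{(\infty)}(tf))\pi_I(X)V_\pi(\mathrm{exp}^{(\infty)}(tf))^*=\pi_{\mathrm{exp}(tf)I}(U(\mathrm{exp}^{(\infty)}(tf))XU(\mathrm{exp}^{(\infty)}(tf))^*)$ does follow from normality and compatibility of the $\pi_I$'s, but a general exponential (e.g.\ a rotation) is not of this form, and it is not obvious that the localized flows generate the universal cover. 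The paper fills this hole with the Buchholz--Schulz-Mirbach argument: for an arbitrary smooth real vector field $f$ and an interval $I$ one picks $f_1$ localized in a larger interval $I_2$ and agreeing with $f$ on a neighbourhood $I_1$ of $\overline{I}$, observes that $U(\mathrm{exp}^{(\infty)}(t(f-f_1)))\in\A(I_1)'$, and applies the Trotter product formula both to $V_\pi=e^{t\eta(\cdot)}$ on $C^\infty(L_0^\pi)$ and to $U$ on $\mathcal{H}$ to conclude that $V_\pi(\mathrm{exp}^{(\infty)}(tf))$ acts on $\pi_I(\A(I))$ exactly as $U(\mathrm{exp}^{(\infty)}(tf))$ does on $\A(I)$ for small $t$; an open--closed argument extends this to all $t\in\R$, and the simplicity of $\Diff^+(S^1)$ together with $\mathrm{Ker}\,\tilde{p}=\langle r^{(\infty)}(2\pi)\rangle$ shows that exponentials generate $\Diff^+(S^1)^{(\infty)}$. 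Without some version of this, your assertion that ``the local formula reconstructs the action of all of $\Diff^+(S^1)^{(\infty)}$'' is unsupported, and your uniqueness argument, which also rests on localized flows generating, inherits the same gap.

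A secondary point: the first part of the statement --- existence, uniqueness and positivity of energy of $U_\pi$ for a merely locally normal representation, with no irreducibility or a priori covariance assumed --- cannot be obtained from a hand-waved partition-of-unity argument; it is precisely the content of \cite[Proposition 3.2 and Theorem 3.8]{Wei06}, which the paper invokes. Likewise, the construction of $V_\pi$ itself comes from the proof of \cite[Proposition 2.2]{Car04}, by integrating a Lie algebra representation $\eta$ by essentially skew-adjoint operators on the smooth domain $C^\infty(L_0^\pi)$; the diagonalizability of $U_\pi(r^{(\infty)}(2\pi))$ enters there, and your eigenspace decomposition plays the analogous role and is fine as far as it goes.
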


\begin{proof}
	The first part follows directly from \cite[Proposition 3.3 and Theorem 3.8]{Wei06}.
	It follows from the proof of \cite[Proposition 2.2]{Car04} that there exists a strongly-continuous projective unitary representation $V_\pi$ of $\Diff^+(S^1)^{(\infty)}$ extending $U_\pi$ and such that for all $I\in\J$, $V_\pi(\mathrm{exp}^{(\infty)}(tf))=\pi_I(U(\mathrm{exp}^{(\infty)}(tf)))$, up to a phase, for all smooth real vector field $f$ with $\mathrm{supp}f$ contained in $I$. Actually, the representation $V_\pi$ is obtained by integrating a representation $\eta$ by essentially skew-adjoint operators of the Lie algebra of smooth real vector fields on $S^1$ on the common invariant domain 
	$C^\infty(L_0^\pi)$ of smoot vectors for the generator $L^\pi_0$ of the one-parameter group $t \mapsto U_\pi(r^{(\infty)}(t))$, see also \cite{Tol99}.

	Now, we follow an argument in \cite[p.\ 121]{BS90}. 	Let  $I, I_1, I_2 \in\J$ be such that $\overline{I} \subseteq I_1$ and $\overline{I_1} \subseteq I_2$ and let $f$ be any smooth real vector field on $S^1$. Let $f_1$ be a smooth real vector field on $S^1$ with support in $I_2$ and such that $f\restriction_{I_1} = f_1\restriction_{I_1}$. Fix $\delta> 0$ such that $\mathrm{exp}(tf)I \subseteq I_1$ for all $t\in (-\delta,\delta)$. Then, 
$U(\mathrm{exp}^{(\infty)}(t(f-f_1))\in  \A(I_1)'$.  We have, up to a phase, 
$ V_\pi(\mathrm{exp}^{(\infty)}(tf)) = e^{t\eta(f)}$  $= e^{t\eta(f_1) + t\eta(f-f_1)}$. Hence, by the Trotter product formula \cite[Theorem VIII.31]{RSI}, 
again up to a phase, 
$$
V_\pi(\mathrm{exp}^{(\infty)}(tf)) = \lim_{n\to \infty} \left(  e^{\frac{t}{n}\eta(f_1)} e^{\frac{t}{n}\eta(f-f_1)}  \right)^n
$$
in the strong operator topology of $B(\mathcal{H}_\pi)$. As a consequence,
$$ 
  V_\pi(\mathrm{exp}^{(\infty)}(tf))\pi_I(X)V_\pi(\mathrm{exp}^{(\infty)}(tf))^* = 
  \pi_{I_1}(U(\mathrm{exp}^{(\infty)}(tf_1))XU(\mathrm{exp}^{(\infty)}(tf_1))^*) 
$$
for all $X\in \A(I)$ and all $t\in (-\delta,\delta)$. 
On the other hand, again by the Trotter product formula we have, up to a phase,
$$
	U(\mathrm{exp}^{(\infty)}(tf)) = \lim_{n\to \infty} \left(  e^{\frac{t}{n}Y(\nu, f_1)} e^{\frac{t}{n}Y(\nu,f-f_1)}  \right)^n
$$	
in the strong operator topology of $B(\mathcal{H})$. Then 	
$$
   U(\mathrm{exp}^{(\infty)}(tf_1))XU(\mathrm{exp}^{(\infty)}(tf_1))^* =
   U(\mathrm{exp}^{(\infty)}(tf))XU(\mathrm{exp}^{(\infty)}(tf))^*  
$$
for all $X\in \A(I)$ and all $t\in (-\delta,\delta)$,
and thus
$$ 
	V_\pi(\mathrm{exp}^{(\infty)}(tf))\pi_I(X)V_\pi(\mathrm{exp}^{(\infty)}(tf))^* = 
	\pi_{I_1}(U(\mathrm{exp}^{(\infty)}(tf))XU(\mathrm{exp}^{(\infty)}(tf))^*) 
$$       
for all $X\in \A(I)$ and all $t\in (-\delta,\delta)$. It follows that for any $X\in\A(I)$, the set of the real numbers $t$ for which 
$$
  V_\pi(\mathrm{exp}^{(\infty)}(tf))\pi_I(X)V_\pi(\mathrm{exp}^{(\infty)}(tf))^* = \pi_{\mathrm{exp}(tf)I}(U(\mathrm{exp}^{(\infty)}       
  (tf))XU(\mathrm{exp}^{(\infty)}(tf))^*)
$$
is open and closed and hence it coincides with $\mathbb{R}$. Accordingly, if $H$ denotes the subgroup of $\Diff^+(S^1)^{(\infty)}$ generated by the exponentials we find that   
$$
 V_\pi(\gamma)\pi_I(X)V_\pi(\gamma)^* = \pi_{\dot{\gamma}I}(U(\gamma)XU(\gamma)^*) 
$$
for all $\gamma \in H$ and all $X\in\A(I)$. Now let $\tilde{p}:  \Diff^+(S^1)^{(\infty)}  \to \Diff^+(S^1)$ be the covering map. Its kernel is generated by $r^{(\infty)}(2\pi)$ and hence it is a subroup of $H$. Since $\Diff^+(S^1)$ is a simple group \cite[Remark 1.7]{Milnor84}, it is generated by exponentials. It follows that $\tilde{p}(H) = \Diff^+(S^1)$ that, together with $\mathrm{Ker} (\tilde{p}) \subseteq H$, implies that $H= \Diff^+(S^1)^{(\infty)}$. Then the diffeomorphism covariance of the representation $\pi$ of $\A$ follows.
\end{proof}

\begin{rem}  \label{rem:irrationality_local_net_from_graded_net}
	By \cite[Theorem 4.9]{LX04}, the complete rationality of a local conformal net is equivalent to say that it has a finite number of equivalence classes of irreducible representations and all of them have finite index.
	By Proposition \ref{prop:diff_cov_grade-local_net}, for every representation $\pi$ we can consider the infimum $h_\pi\geq0$ of the spectrum of $L_0^\pi$. 
	Then if there exists a family $\mathscr{E}$ of representations of $\A$ such that the set $\{h_\pi\mid \pi\in\mathscr{E}\}$ is infinite, then $\A^\Gamma$ cannot be completely rational. Therefore, also $\A$ is not completely rational.
\end{rem}

Now, we are ready to discuss the complete rationality of minimal $W$-algebra net and their relation with the representation theories of the $W$-algebras they come from. We first list some known facts and then we move to our main result on the representation theory of minimal $W$-algebra nets. 
In what follows and in the subsequent sections, we will use graded tensor products of VOSA modules and of conformal net representations, see e.g.\ \cite[Sections 4.6--4.7]{FHL93} and \cite[Section 2.6]{CKL08} respectively.

In \cite[Section 11]{KMP23} and \cite[Section 6]{AKMP24}, it is shown that a unitary minimal $W$-algebra $W^k(\g,f)$ is not rational if $k$ is not collapsing and it is not a Virasoro VOA, a $N=1$ or $N=2$ super-Virasoro VOSA with central charges in their corresponding unitary discrete series, see \eqref{eq:unitary_series_vir}, \eqref{eq:unitary_series_N=1} and \eqref{eq:unitary_series_N=2} respectively. 
It is known that unitary affine VOAs are strongly rational if and only if their corresponding local conformal nets are completely rational. 
Indeed, if $V$ is a unitary affine VOA, then its weight-$1$ subspace $V_1$ is a reductive Lie algebra, see \cite[Theorem 4.3]{AL17}. Then $V$ is a tensor product of some copies of the ($c=1$) Heisenberg VOA and of some unitary affine VOAs coming from simple finite-dimensional complex Lie algebras, see \cite[Remark 5.7c]{Kac01}.
The latter VOAs and their tensor products are strongly rational by \cite[Theorem 3.7 and Proposition 3.3]{DLM97} and the corresponding conformal nets are completely rational by \cite[Theorem 5.1]{Ten24} and \cite[Theorem I]{Gui}. 
The Heisenberg VOA is not rational by e.g.\ \cite[Section 6.3]{LL04} and its corresponding conformal net is not completely rational by e.g.\ \cite[Section III]{BMT88}. 
Then the claim follows.
Moreover, it is already known that Virasoro conformal nets are completely rational if and only if their central charges are in the unitary discrete series \eqref{eq:unitary_series_vir}, see Section \ref{subsec:virasoro_nets} for details and references.

The $N=3,4$ and the big $N=4$ super-Virasoro nets can be obtained from the $W$-algebras $W_k(\g,f)$ for $\g=\mathfrak{spo}(2|3)$, $\g= \mathfrak{psl}(2|2)$ and $\g=D(2,1;a)$ with $a\in\C\backslash\{-1,0\}$ respectively. Details of that are given in Sections \ref{subsec:N=3_super-virasoro_nets}--\ref{subsec:big_N=4_super-virasoro_nets} respectively.
To prove the following theorem, beyond the above discussion, we first construct for each of the $N=1,2,3,4$ and the big $N=4$ super-Virasoro nets, an infinite family of non-equivalent representations for suitable values of the central charge, see Sections \ref{subsec:N=1_super-virasoro_nets}--\ref{subsec:big_N=4_super-virasoro_nets} respectively.
Then we apply the argument in Remark \ref{rem:irrationality_local_net_from_graded_net}.

\begin{theo}
	Let $W_k(\g,f)$ be a unitary minimal $W$-algebra. If $W_k(\g,f)$ is (strongly) rational, then $\A_{W_k(\g,f)}$ is completely rational.
	The converse is also true whenever $\g\in\{\mathfrak{sl}_2,\, \mathfrak{spo}(2|1),\, \mathfrak{spo}(2|2),\, \mathfrak{spo}(2|3),\, \mathfrak{psl}(2|2)\}$ or $k$ is a collapsing level.
	Furthermore, the $N=0,1,2,3,4$ and the big $N=4$ super-Virasoro nets are completely rational if and only if their corresponding VOSAs are strongly rational.
\end{theo}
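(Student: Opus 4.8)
The plan is to treat the two implications separately and to reduce everything to the Bose subnet together with the representation-theoretic criterion of Remark~\ref{rem:irrationality_local_net_from_graded_net}. Throughout I would use the equivalence, established in the preceding proposition, that $\A_{W_k(\g,f)}$ is completely rational if and only if it has finite $\mu$-index, equivalently if and only if its Bose subnet $\A_{W_k(\g,f)}^\Gamma$ is completely rational (Remark~\ref{rem:complete_rationality}); since the latter is an irreducible local conformal net, \cite[Theorem~4.9]{LX04} applies to it and reduces complete rationality to having finitely many irreducible representations, all of finite index.

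For the forward implication I would invoke the classification of unitary minimal $W$-algebras together with the rationality analysis of \cite{KMP23,AKMP24}: a strongly rational $W_k(\g,f)$ is, up to isomorphism, either the trivial VOA, an affine VOA arising at a collapsing level, or a Virasoro ($N=0$), $N=1$, or $N=2$ super-Virasoro VOSA with central charge in the corresponding unitary discrete series. For each of these finitely many families the attached net is already known to be completely rational---the collapsing affine $\mathfrak{sl}_2$ nets by the results recalled in \cite[Section~III]{BMT88}, \cite[Theorem~3.7]{DLM97} and \cite[Theorem~2.7.3]{Gui}, and the Virasoro, $N=1$ and $N=2$ discrete-series nets by the constructions discussed in Sections~\ref{subsec:virasoro_nets}--\ref{subsec:N=2_super-virasoro_nets}. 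As $\A_{W_k(\g,f)}$ is the net canonically attached to the VOSA and hence isomorphic to the relevant known net, complete rationality follows.

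For the converse I would argue by contraposition. When $k$ is a collapsing level, $W_k(\g,f)$ is a unitary affine VOA or the trivial VOA, and the equivalence between strong rationality of unitary affine VOAs and complete rationality of the associated nets, recalled before the statement and following from \cite[Section~III]{BMT88}, \cite[Theorem~3.7]{DLM97}, \cite[Theorem~5.1]{Ten24} and \cite[Theorem~2.7.3]{Gui}, yields the claim at once. For $\g\in\{\mathfrak{sl}_2,\mathfrak{spo}(2|1),\mathfrak{spo}(2|2),\mathfrak{spo}(2|3),\mathfrak{psl}(2|2)\}$ and for the big $N=4$ case arising from $D(2,1;a)$, at a non-collapsing level where the VOSA fails to be rational, it suffices by Remark~\ref{rem:irrationality_local_net_from_graded_net} to exhibit an infinite family $\mathscr{E}$ of inequivalent irreducible locally normal representations of $\A_{W_k(\g,f)}$ with pairwise distinct lowest weights $h_\pi$. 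The key input is that at such a central charge $c(k)$, lying outside the discrete series, the underlying superconformal algebra admits infinitely many inequivalent unitary positive-energy lowest-weight representations with distinct minimal $L_0$-eigenvalues; this is classical for $N=0,1,2$ and is supplied by \cite{KMP23,AKMP24} for $N=3,4$ and big $N=4$. Each such unitary module satisfies the very same linear energy bounds on the generators $\nu$, $J^{\{u\}}$ and $G^{\{v\}}$ that were established in the proof of Theorem~\ref{theo:W-algebras_nets}, so the smeared generating fields are well-defined closable operators on its Hilbert space completion; graded locality then lets them generate a locally normal representation of $\A_{W_k(\g,f)}$, which by Proposition~\ref{prop:diff_cov_grade-local_net} is diffeomorphism covariant and whose lowest weight $h_\pi$ coincides with the lowest weight of the module. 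The infinitely many distinct weights then force, via Remark~\ref{rem:irrationality_local_net_from_graded_net}, that $\A_{W_k(\g,f)}^\Gamma$, and hence $\A_{W_k(\g,f)}$, is not completely rational.

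The main obstacle I anticipate is the integration step in the converse for the genuinely new cases $N=3,4$ and big $N=4$: upgrading the energy bounds from the vacuum module to arbitrary unitary modules, and verifying that the operators obtained by smearing the generating fields on a module actually assemble into a graded locally normal representation of the net whose $h_\pi$ equals the module's lowest weight. For $N=0,1,2$ this passage is available in the literature; for the remaining cases I expect it to follow from the same smearing and graded-locality arguments, using that the generators satisfy linear energy bounds and that the modules are unitary, so that the analytic estimates transfer verbatim from the vacuum sector. The representation-theoretic count of distinct lowest weights is then the only model-dependent ingredient, and it is exactly what \cite{KMP23,AKMP24} provide.
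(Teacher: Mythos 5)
Your forward implication and the overall reduction (Bose subnet, finite $\mu$-index, Remark \ref{rem:irrationality_local_net_from_graded_net}, classification of strongly rational unitary minimal $W$-algebras into collapsing/affine cases and discrete-series $N=0,1,2$ cases) match the paper. The gap is in the converse. Your plan is to take the infinitely many inequivalent unitary lowest-weight modules of the superconformal algebra supplied by the classification results and ``integrate'' each one to a locally normal representation of the net by smearing the generating fields on the module, claiming the energy bounds ``transfer verbatim from the vacuum sector.'' This step does not work as stated: passing from a unitary positive-energy module of a strongly graded-local VOSA to a locally normal representation of the associated net is precisely the unresolved issue the paper flags in its introduction (graded locality of the von Neumann algebras is not known to follow from graded locality of the smeared fields, and energy bounds on a module are not automatic consequences of energy bounds on the vacuum module). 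Indeed, the paper explicitly states that even for $N=1,2$ with $c\geq\frac{3}{2}$ (resp.\ $c\geq 3$) the failure of complete rationality ``has not been proved, although commonly accepted,'' so your assertion that this passage is available in the literature for $N=0,1,2$ is incorrect beyond the Virasoro case.

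What the paper actually does is construct the representations concretely so that local normality is manifest. For $N=1$ (Section \ref{subsec:N=1_super-virasoro_nets}) and $N=2$ (Section \ref{subsec:N=2_super-virasoro_nets}) it realizes shifted stress tensors and supercurrents on $M(1)\hat{\otimes}F$ (resp.\ two copies of each) \`a la Buchholz--Schulz-Mirbach, proves unitarity of the resulting $NS$ and $N2$ representations via the ``weak symmetry'' Theorems \ref{theo:unitarity_BS_representation_NS} and \ref{theo:unitarity_BS_representation_N2}, builds solitons of the super-Virasoro nets inside the free-field net using Weyl operators implementing BMT automorphisms (Lemma \ref{lem:properties_smeared_Tkappa_Gkappa}), and extends the solitons to covariant representations via Proposition \ref{prop:extending_solitons}. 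For $N=3,4$ and big $N=4$ it instead embeds the relevant VOSA as a unitary subalgebra of a free-field VOSA \emph{with the same conformal vector} and restricts BMT-type representations of the ambient net, which preserves $h_\pi$ by Proposition \ref{prop:diff_cov_grade-local_net}; higher central charges are handled by graded tensor products. If you want to salvage your approach you would have to replace the ``integration of abstract modules'' step by one of these explicit free-field constructions; without it the converse direction is not proved.
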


We conjecture that the second statement of the above theorem remains true for all unitary minimal $W$-algebras, that is:

\begin{con}
	Let $W_k(\g,f)$ be a unitary minimal $W$-algebra. Then $W_k(\g,f)$ is strongly rational if and only if $\A_{W_k(\g,f)}$ is completely rational.
\end{con}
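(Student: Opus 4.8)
The plan is to reduce complete rationality of the graded-local net to the finiteness of its $\mu$-index and to read this finiteness off from the module theory of the vertex superalgebra. Indeed, by Corollary \ref{corsplitgradedconformal} every irreducible graded-local conformal net is split, so by the Proposition characterising complete rationality through finiteness of the $\mu$-index, $\A_{W_k(\g,f)}$ is completely rational if and only if it has finite $\mu$-index, and by Remark \ref{rem:complete_rationality} this holds if and only if the Bose subnet $\A_{W_k(\g,f)}^\Gamma$ is completely rational. Since $\A_{W_k(\g,f)}^\Gamma$ is naturally the local net associated with the even part $(W_k(\g,f))_\parzero$, which by the Remark relating strong rationality of a VOSA to that of its even part is strongly rational as soon as $W_k(\g,f)$ is, the forward direction is reduced to a local, purely even situation.

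For the forward implication I would then invoke the classification of the strongly rational unitary minimal $W$-algebras coming from \cite{KMP23, AKMP24}: up to the collapsing identifications of Remark \ref{rem:unitary_range_and_collapsing}, such an algebra is either the trivial VOA, a rational affine VOA, or a discrete-series Virasoro, $N=1$ or $N=2$ super-Virasoro vertex (super)algebra. In each case the corresponding net is a previously known completely rational net — the trivial net, a loop group net, or a discrete-series (super-)Virasoro net (see e.g.\ \cite{KL04, CKL08}) — and the identification of $\A_{W_k(\g,f)}$ with this model follows for collapsing levels from the VOSA isomorphisms in Remark \ref{rem:unitary_range_and_collapsing}, and for the abelian-$\g^\natural$ cases from the identification of $W_k(\g,f)$ with the super-Virasoro VOSAs. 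Complete rationality then follows by citing the known results.

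For the converse — the listed families $\g\in\{\mathfrak{sl}_2,\mathfrak{spo}(2|1),\mathfrak{spo}(2|2),\mathfrak{spo}(2|3),\mathfrak{psl}(2|2)\}$ and the big $N=4$ case $D(2,1;a)$, i.e.\ the $N=0,1,2,3,4$ and big $N=4$ super-Virasoro nets, together with all collapsing levels — I would argue by contraposition through Remark \ref{rem:irrationality_local_net_from_graded_net}: if a net carries a family of irreducible representations whose lowest conformal weights $h_\pi$ form an infinite set, then $\A^\Gamma$, and hence $\A$, fails to be completely rational. When $k$ is non-collapsing and lies outside the rational (discrete-series) range, the relevant VOSA is non-rational, and I would produce infinitely many inequivalent irreducible unitary lowest-weight modules of the pertinent (super-)Virasoro algebra with pairwise distinct lowest weights (the continuous series), check that they satisfy energy bounds so that they integrate to positive-energy graded representations of the net via Proposition \ref{prop:diff_cov_grade-local_net}, and conclude with Remark \ref{rem:irrationality_local_net_from_graded_net}. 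The collapsing levels are handled directly from Remark \ref{rem:unitary_range_and_collapsing}: the trivial and rational affine cases are strongly rational with completely rational nets, while the Heisenberg case $M(1)$ is non-rational and its $U(1)$-current net carries a one-parameter family of sectors with distinct lowest weights, again non-completely rational by Remark \ref{rem:irrationality_local_net_from_graded_net}.

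I expect the main obstacle to be precisely this converse for the continuous-series central charges: one must not only exhibit infinitely many inequivalent irreducible unitary (super-)Virasoro modules with distinct lowest weights, but also establish the analytic input — suitable energy bounds and the resulting diffeomorphism covariance — guaranteeing that they integrate to genuine locally normal positive-energy representations of the graded-local net, so that Proposition \ref{prop:diff_cov_grade-local_net} and Remark \ref{rem:irrationality_local_net_from_graded_net} apply. By contrast, the forward direction and the collapsing cases are comparatively routine once the classification of rational minimal $W$-algebras and the known complete rationality of the standard models are in hand.
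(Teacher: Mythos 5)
The statement you are trying to prove is stated in the paper as a \emph{conjecture}, and the paper does not prove it: it only establishes the partial result in the theorem immediately preceding it, namely the forward implication in general plus the converse for $\g\in\{\mathfrak{sl}_2,\mathfrak{spo}(2|1),\mathfrak{spo}(2|2),\mathfrak{spo}(2|3),\mathfrak{psl}(2|2)\}$, for collapsing levels, and for the big $N=4$ nets. Your proposal in fact reproduces exactly that scope. The forward direction and the collapsing cases are handled as you describe (via the classification of rational unitary minimal $W$-algebras from \cite{KMP23,AKMP24}, the identifications in Remark \ref{rem:unitary_range_and_collapsing}, and the known complete rationality of the discrete-series and affine models), and your treatment of the converse for the listed families follows the paper's route through Remark \ref{rem:irrationality_local_net_from_graded_net} and Proposition \ref{prop:diff_cov_grade-local_net}. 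But none of this touches the cases that make the statement a conjecture.

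The genuine gap is the converse for the remaining non-collapsing unitary minimal $W$-algebras with $\g^\natural$ non-abelian outside your list, e.g.\ $\g=\mathfrak{spo}(2|m)$ with $m\geq 4$, $D(2,1;\tfrac{m}{n})$ (the bare net $\A_{W_k(\g,f)}$, not its tensor product with $M(1)\hat{\otimes}F^{\hat{\otimes}4}$), $F(4)$ and $G(3)$. For these you would need to exhibit infinitely many inequivalent irreducible locally normal representations of $\A_{W_k(\g,f)}$ with distinct lowest energies, and your sketch simply postulates ``the continuous series'' of unitary lowest-weight modules. That is precisely what is not available: even in the $N=1,2$ cases the paper has to work hard (free-field realizations, the shift automorphisms $\varphi_{\kappa,\eta}$, the weak-symmetry unitarity theorems, linear energy bounds, Weyl-operator implementation of BMT sectors, and the soliton-extension Proposition \ref{prop:extending_solitons}) to produce and integrate such families, and no analogous free-field realization with controlled unitarity is known for the general minimal $W$-algebras at irrational central charge. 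Moreover, unitarity of lowest-weight modules of these $W$-algebras is a delicate classification problem (\cite{KMP23,AKMP24,KMP}) and does not come with the energy bounds needed to pass from Lie-superalgebra modules to locally normal net representations. Without that construction, the equivalence remains open, which is why the paper records it as a conjecture rather than a theorem.
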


\begin{rem} It is worthwhile to remark that the method used to construct the representations for the $N=1$ and $N=2$ super-Virasoro nets in Section \ref{subsec:N=1_super-virasoro_nets} and Section \ref{subsec:N=2_super-virasoro_nets} respectively, accidentally give alternative proofs of the unitarity of the ``vacuum'' representations, that is the ones with zero lowest weight, of the Neveu-Schwarz algebra with central charge $c\geq \frac{3}{2}$ and of the $N=2$ super-Virasoro algebra with central charge $c\geq 3$, see Corollary \ref{cor:unitarity_vacuum_representations_NS} and Corollary \ref{cor:unitarity_vacuum_representations_N2} respectively.
Actually, the same happens for the Virasoro algebra with central charge $c\geq 1$ in \cite[Proposition 8]{CTW23}.
\end{rem}

\subsection{Virasoro nets}  \label{subsec:virasoro_nets}

The simple minimal $W$-algebras $W_k(\g,f)$ with $\g=\mathfrak{sl}_2$ are the (simple) \textit{Virasoro VOAs} $L(c(k))$ with central charges $c(k)$ as in the left hand side of \eqref{eq:unitary_series_vir}, see \cite[Section 5]{KRW03} and \cite[Section 8.1]{KW04}. 
These are usually obtained through representations of the Virasoro algebra $\Vir$ defined by \eqref{eq:virasoro_cr}, see \cite[Example 4.10]{Kac01} and \cite[Section 4.1]{DL14}.
The \textit{unitary series} for $L(c(k))$ is as follows:
\begin{equation}  \label{eq:unitary_series_vir}
	c(k)=1-\frac{6(k+1)^2}{k+2}
	\left\{\begin{array}{ll}
		 =1-\frac{6}{p(p+1)} \,, &  k=\frac{1}{p}-1 \,,\,\, p\in \Z_{\geq 2} \\
		 =1 \,,&  k=-1 \\
		 > 1 \,,& k<-2
	\end{array}	\right. 
\end{equation}
One refers to the set of values of $c(k)$ as in the first row in \eqref{eq:unitary_series_vir} as the \textit{unitary discrete series}, for which it is known that $L(c(k))$ is strongly rational, see \cite[Theorem 3.13]{DLM97}. On the contrary, $L(c(k))$ is not rational for $c(k)\geq 1$. Classical references are \cite{FQS85, GKO85, GKO86, TK86}, \cite[Section 4]{KW86} and \cite[Lecture 3.3]{KR87}.

The corresponding \textit{Virasoro net} $\Virnet_{c(k)}:=\A_{L(c(k))}$, see \cite[Example 8.4]{CKLW18} and references therein, is known to be completely rational for $c(k)$ in the unitary discrete series, see \cite[Corollary 3.4]{KL04}, whereas it is not completely rational for $c(k)\geq 1$, see \cite[Section 4, p.\ 124]{BS90}, see also \cite{Car04, Wei17}.

It is worthwhile to mention that, in both settings, the extensions of this class of models had been completely classified when $c(k)$ is in the unitary discrete series, see \cite[Section 4]{KL04}, \cite[Section 4]{DL15}, \cite[Corollary 2.22]{Gui22}, \cite[Theorem 6.1]{CGGH23} and \cite[Corollary 0.6]{Guib}; whereas just partial results have been reached for the case with $c(k)=1$, see \cite{DM04, DJ13, DJ14, DJ15, Lin17} and \cite[Section IV]{BMT88}, \cite[Section 3]{Car04}, \cite[Section 4]{Xu05}, see also \cite{CGH19}.

\subsection{\texorpdfstring{$N=1$ super-Virasoro nets}{N=1 super-Virasoro nets}}  \label{subsec:N=1_super-virasoro_nets}
The simple minimal $W$-algebras $W_k(\g,f)$ with $\g=\mathfrak{spo}(2|1)$ are the (simple) $N=1$ \textit{super-Virasoro VOSAs} $V^{c(k)}(NS)$ with central charges $c(k)$ as in the left hand side of \eqref{eq:unitary_series_N=1}, see \cite[Section 6]{KRW03} and \cite[Section 8.2]{KW04}. These models are usually obtained through representations of the Neveu-Schwarz algebra $NS$ defined by \eqref{eq:NS_cr}, see \cite[Lemma 5.9]{Kac01} and \cite[Section 2.2]{AL17}.
The \textit{unitary series} for $V^{c(k)}(NS)$ is as follows:
\begin{equation}  \label{eq:unitary_series_N=1}
	c(k)=\frac{3}{2}-\frac{12(k+1)^2}{2k+3}
	\left\{\begin{array}{ll}
		=\frac{3}{2}\left(1-\frac{8}{p(p+2)}\right) \,, &  k=\frac{1}{p}-1 \,,\,\, p\in \Z_{\geq 2} \\
		=\frac{3}{2} \,,&  k=-1 \\
		>\frac{3}{2} \,,& k<-\frac{3}{2}
	\end{array}	\right. 
\end{equation}
As in the Virasoro case, $V^{c(k)}(NS)$ turns out to be strongly rational for $c(k)$ in its \textit{unitary discrete series}, that is the first row in \eqref{eq:unitary_series_N=1}, whereas it is not rational for $c(k)\geq \frac{3}{2}$, see \cite{FQS85, GKO86}, \cite[Section 5]{KW86}.

We denote the corresponding $N=1$ \textit{super-Virasoro net} by $\SVirnet_{c(k)}:=\A_{V^{c(k)}(NS)}$, see \cite[Example 7.5]{CGH} and references therein. Then the complete rationality of $\SVirnet_{c(k)}$ with $c(k)$ in the unitary discrete series was established in \cite[Section 6.4]{CKL08}. Moreover, a complete classification of irreducible graded-local conformal net extensions of $\SVirnet_{c(k)}$ with $c(k)$ in the unitary discrete series was obtained in \cite[Section 7]{CKL08}, which has been recently transposed in the VOSA setting by \cite[Theorem 6.2]{CGGH23}.
To our knowledge, the failure of complete rationality of the case $c(k)\geq \frac{3}{2}$ has not been proved, although commonly accepted. In the following, we will prove this fact, constructing an infinite number of irreducible representations for $\SVirnet_{c(k)}$ with $c(k)\geq \frac{3}{2}$, integrating unitary representations of $NS$. The idea, divided into two steps in Subsection \ref{subsubsec:step1} and Subsection \ref{subsubsec:step2}, follows mostly the contents of \cite{CTW23} and \cite[Section 4]{BS90} respectively, cf.\ \cite[Section 4.4]{CTW22}. 

\subsubsection{Formal series and fields}
\label{subsubsec:formal_series_fields}

Preliminarily, we introduce the forthcoming notations following \cite[Section 2.1, Section 2.2, Section 2.4]{CTW23}. 
Let $W=W_\parzero\oplus W_\parone$ be a vector superspace.
A \textit{formal series} in $W$ is an expression  $A(z)=\sum_{n\in\half\Z}A_nz^{-n}$ in the \textit{formal variable} $z$ with $A_n\in\End(W)$ for all $n\in\half\Z$. We say that $A(z)$ has \textit{parity} $q\in \Z/2\Z$ if for all $n\in \half\Z$, $A_nV_p\subseteq V_{p+q}$ for all $p\in\Z/2\Z$.
We refer to the $A_n$'s as the \textit{Fourier coefficients} of $A(z)$.
Moreover, the two expressions
\begin{equation}
	\partial_zA(z):=\sum_{n\in\half\Z}(-n)A_nz^{-n-1}
	\quad\mbox{and}\quad
	A'(z):=iz\partial_z A(z) 
\end{equation}
are called the \textit{formal derivative} and the \textit{circle derivative} of $A(z)$ respectively.
We say that $A(z)$ is a \textit{field} if for all $w\in W$, there exists $N\in\Z$ such that $A_nw=0$ for all $n> N$.
Note that the formal and the circle derivatives of a field are fields too.  
The \textit{normally ordered product} $:A(z)B(z):$ of two fields $A(z)$ and $B(z)$ with parity $p$ and $q$ respectively is defined by
\begin{equation}
	:A(z)B(z): = A(z)_+ B(z)+(-1)^{pq}B(z)A(z)_- 
\end{equation}
where $A(z)_+:=\sum_{n\leq 0}A_nz^{-n}$ and $A(z)_-:=\sum_{n>0}A_nz^{-n}$.
Suppose that $W$ is equipped with a scalar product.
Then we say that a formal series $A(z)$ is \textit{symmetric} if
\begin{equation}
	A(z)^\dagger:= \sum_{n\in\half\Z}A_n^\dagger z^n= A(z)
\end{equation}
where for any operator $A:W\to W$, $A^\dagger$ denotes the \textit{adjoint} of $A$, that is an operator $B:W\to W$ (which may not exists) such that
\begin{equation}
	\forall \Psi_1,\Psi_2\in W\qquad
	(A\Psi_1|\Psi_2)=(\Psi_1|B\Psi_2) \,.
\end{equation}
In other words, $A(z)$ is symmetric if and only if $A_n^\dagger=A_{-n}$ for all $n\in\half\Z$. Note that if $A(z)$ is symmetric, then also its formal and circle derivatives are.
Consider \textit{formal polynomials}, allowing half-integer powers of the formal variable $z$, with their complex conjugates: 
$$
r(z):=\sum_{\substack{n\in\half\Z \\ \abs{n}<N}}r_nz^{-n}
\quad\mbox{ and }\quad
\overline{r}(z):=\sum_{\substack{n\in\half\Z \\ \abs{n}<N}}\overline{r_{-n}}z^{-n}
$$ 
where $N\in \Zpluseq$. Then as formal series, we have that
\begin{equation}
	\left(r(z)A(z)\right)^\dagger=\overline{r}(z)A(z)^\dagger \,.
\end{equation}
In particular, if $\overline{r_{-n}}=r_n$ for all $n$, then $r(z)A(z)$ is symmetric if $A(z)$ is.
Finally, for fields in a vector superspace equipped with a scalar product, we can define \textit{energy bounds} in the obvious way following Definition \ref{defin:energy_bounds}.

\subsubsection{Representations of the Neveu-Schwarz algebra}
\label{subsubsec:step1}

Consider the graded tensor product $V:=M(1)\hat{\otimes} F$, where $M(1)$ is the Heisenberg VOA and $F$ is the real free fermion VOSA, see references in Remark \ref{rem:unitary_range_and_collapsing} and Example \ref{ex:real_free_fermion} respectively. The unitary structure on $V$ is the one induced by the ones of $M(1)$ and $F$ as explained in Section \ref{section:preliminaries}.
By \cite[Example 5.9a]{Kac01}, there is a family in $V$ of Virasoro vectors with associated superconformal vectors, depending on a parameter $\varsigma\in \C$ and giving a family of representations of $NS$ on the vector superspace $V$ with central charge $c(\varsigma)=\frac{3}{2}-3\varsigma^2$:
\begin{equation}  \label{eq:bosonxfermion_representations_NS}
	\begin{split}
		\nu^\varsigma 
		&:=
		\half\left((\bos_{-1})^2+\fer_{-\frac{3}{2}}\fer_{-\frac{1}{2}}+\varsigma \bos_{-2}\right)\Omega \\
		\tau^\varsigma
		&:=
		\left(\bos_{-1}\fer_{-\half}+\varsigma \fer_{-\frac{3}{2}}\right)\Omega 
	\end{split}
\end{equation}
where $\bos\in M(1)$ and $\fer\in F$ are Hermitian primary vectors of conformal weight $1$ and $\half$ with norm one respectively. Note also that $\nu^\varsigma$ and $\tau^\varsigma$ are not Hermitian vectors unless $\varsigma=0$, which is the only case where \eqref{eq:bosonxfermion_representations_NS} gives a unitary representation of $NS$. Nevertheless, we get a unitary embedding of $V^\frac{3}{2}(NS)$ into $V$, making $V$ a unitary $N=1$ superconformal VOSA.
Accordingly, set the standard notation $\nu:=\nu^0$ and $\tau:=\tau^0$, $Y(\nu, z)=\sum L_mz^{-m-2}$ and $Y(\tau,z)=\sum G_n z^{-n-\frac{3}{2}}$.
Recall also that $V$ is strongly graded-local and thus $\mathrm{SVir}_\frac{3}{2}$ is contained in the irreducible graded-local conformal net $\A_V=\A_{\Uone}\hat{\otimes} \F$ as covariant subnet, making $\A_V$ a $N=1$ superconformal net (this is the one constructed in the proof of \cite[Lemma 3.3]{Hil15}). 
We are interested in unitary representations of the Lie superalgebra $NS$, which can exist only for $c(\varsigma)>0$ as we have recalled just above. Then we are forced to consider only purely complex values for $\varsigma$ and thus to modify $\nu^\varsigma$ and $\tau^\varsigma$ to obtain the desired unitary representations of $NS$.  

We rewrite \eqref{eq:bosonxfermion_representations_NS} in terms of fields, see Subsection \ref{subsubsec:formal_series_fields}.
Choose $\varsigma=2i\kappa$ with $\kappa\in \R$ and let $J(z):=zY(\bos,z)$, $\Phi(z):=z^\half Y(\fer,z)$, $J_m:=\bos_m$ for all $m\in\Z$ and $\Phi_n:=\fer_n$ for all $n\in\Z-\half$.
Note that the fact that $\bos$ and $\fer$ are Hermitian is equivalent to be symmetric for $J(z)$ and $\Phi(z)$.
Using the normally ordered product, define the following \textit{shifted normal powers}:
\begin{equation} \label{eq:shifted_normal_powers}
	\begin{split}
		:J^2:(z)&:=z^2:Y(\bos,z)Y(\bos,z):  \\
		:\partial\Phi\Phi:(z)&:=z^2:\partial_z( Y(\fer,z)) Y(\fer,z): \\
		:J\Phi:(z)&:= z^\frac{3}{2}:Y(\bos,z)Y(\fer,z): \,.
	\end{split}
\end{equation}
Note that these shifted normal powers are symmetric as $J(z)$ and $\Phi(z)$ are.
Then for all $\kappa\in\R$, we set the following fields from \eqref{eq:bosonxfermion_representations_NS}:
\begin{equation}  \label{eq:bosonxfermion_representations_NS_fields}
	\begin{split}
		\widetilde{T}(z;\kappa)
			&:=
		z^2Y(\nu^\varsigma,z) =
		\half:J^2:(z)+\half:\partial\Phi\Phi:(z)-i\kappa \left(J(z)+iJ'(z)\right) \\
		\widetilde{G}(z;\kappa)
			&:=
		z^\frac{3}{2}Y(\tau^\varsigma,z)=
		:J\Phi:(z)-i\kappa\left(\Phi(z)+2i\Phi'(z)\right) 
	\end{split}
\end{equation}
whose Fourier coefficients, namely 
$$
\left\{\left. \widetilde{L}_{\kappa,m}\,,\,\, \widetilde{G}_{\kappa,n} \,\right|\, m\in\Z\,,\,\, n\in\Z-\half \right\}
$$ 
give rise to a lowest weight representation of $NS$ with central charge $c(\kappa):=\frac{3}{2}+12\kappa^2$ and lowest weight vector $\Omega$ of lowest weight $h=0$, where $\widetilde{L}_{\kappa,0}\Omega=h\Omega$.
Accordingly, we set $T(z):=\widetilde{T}(z;0)=z^2Y(\nu,z)$ and $G(z):=\widetilde{G}(z;0)=z^\frac{3}{2}Y(\tau,z)$.

Let $f(z)=\sum_{n\in\Z}c_nz^{-n}$ be a scalar-valued field, that is  $c_n\in\C$ for all $n\in\Z$ and $c_n=0$ for all $n>N$ for some $N\in\Z$.
Following \cite[Section 3.2]{CTW23}, we have that the transformation $\varphi_f$, sending $J(z)\mapsto J(z)+f(z)$ and $\Phi(z)\mapsto \Phi(z)$, preserves the commutation relations involving the Fourier coefficients of $J(z)$ and $\Phi(z)$.
Then $\varphi_f$ can be understood as a composition of the representation on $V$ and of an automorphism of the infinite-dimensional Lie superalgebra given by the generators $\{J_n \,,\,\, \Phi_m\mid n\in\Zplus\,,\,\, m\in\Z-\half\}$.
In particular, if $N=0$ and $\Psi$ is a lowest weight vector for $J(z)$ with lowest weight $q\in \C$ (that is $J_0\Psi=q\Psi$ and $J_n\Psi=0$ for all $n>0$), then $\Psi$ is a lowest weight vector for $J(z)+f(z)$ too with lowest weight $q+c_0$.
It also follows that if we modify $\widetilde{T}(z;\kappa)$ and $\widetilde{G}(z;\kappa)$ inserting such transformed $\varphi_f(J(z))$ in place of $J(z)$ in their defining formulae \eqref{eq:bosonxfermion_representations_NS_fields}, then we still have a representation of $NS$ with the same central charge and lowest vector, but with possibly different lowest weight. 
In particular, inspired by \cite[Section 4, p.\ 123--124]{BS90}, we will be mostly interested in transformations of the form $\varphi_{\kappa,\eta}:=\varphi_{f_{\kappa,\eta}}$ with
\begin{equation}
	f_{\kappa,\eta}(z):=\kappa\widetilde{\rho}(z)+\eta
	\,,\quad
	\widetilde{\rho}(z):= -i\frac{z-1}{z+1}
\end{equation}
for all $\kappa,\eta\in\C$, where the meromorphic function $\widetilde{\rho}(z)$ is interpreted as the formal series arising from its expansion around the origin, that is
\begin{equation}  \label{eq:series_expansion_rho}
	\widetilde{\rho}(z)=i\left(1+2\sum_{n= 1}^{+\infty}(-1)^nz^n \right)
	\,.
\end{equation}
Note that $\widetilde{\rho}(z)$ satisfies the differential equation
\begin{equation}  \label{eq:rho_differential_eq}
	\frac{\widetilde{\rho}(z)^2}{2}+\half-iz\partial_z\widetilde{\rho}(z)=0 \,.
\end{equation} 
 
With the same $\kappa\in\R$ as in \eqref{eq:bosonxfermion_representations_NS_fields}, we first perform the transformation $\varphi_{-\kappa,i\kappa}$, to get:
\begin{equation}
\label{eq:bosonxfermion_representations_NS_fields_BS}
\begin{split}
	T_\kappa(z)
	&:=
	\half:J^2:(z)+\half:\partial\Phi\Phi:(z)+\kappa \left(J'(z)-\widetilde{\rho}(z)J(z)\right) \\
	G_\kappa(z)
	&:=
	:J\Phi:(z)+\kappa\left(2\Phi'(z)-\widetilde{\rho}(z)\Phi(z) \right)
\end{split}
\end{equation}
where we have used \eqref{eq:rho_differential_eq} in the calculation.
According to what said above, for all $\kappa\in\R$ the Fourier coefficients 
\begin{equation}  \label{eq:bosonxfermion_representations_NS_fields_BS_fourier}
	\begin{split}
		L_{\kappa, m}
			&= 
		L_m-i\kappa\left[(1+m)J_m+2\sum_{j<0}(-1)^jJ_{m-j}\right]
		,\quad m\in\Z
		\\
		G_{\kappa,n}
			&=  
		G_n -i\kappa\left[(1+2n)\Phi_n+2\sum_{j<0}(-1)^j\Phi_{n-j}\right] ,\quad n\in\Z-\half
	\end{split}
\end{equation}
of the fields in \eqref{eq:bosonxfermion_representations_NS_fields_BS} respectively give a representation of $NS$ with central charge $c(\kappa)$ and lowest weight vector $\Omega$ of lowest weight $h=0$. 
We highlight that this is the version for $NS$ of \cite[Eq.\ (4.6)]{BS90}, which will be useful in Subsection \ref{subsubsec:step2}.
However, note that this representation is not unitary as the fields $T_\kappa(z)$ and $G_\kappa(z)$ are not symmetric due to the series expansion \eqref{eq:series_expansion_rho}. Nevertheless, we note that $\widetilde{\rho}(z)\in\R$ for all $z\in\pSone$ and thus a weak version of symmetry holds, that is for every trigonometric polynomial $p(z)$ such that $p(-1)=0$, we have that for all $\kappa\in\R$
\begin{equation}  \label{eq:weak_symmetry_NS}
	\left(p(z)T_\kappa(z)\right)^\dagger=\overline{p}(z)T_\kappa(z)
	\quad\mbox{ and }\quad
	\left(p(z)z^\half G_\kappa(z)\right)^\dagger=\overline{p}(z)z^{-\half}G_\kappa(z) 
\end{equation}
as formal series.
Now, we prove that this weak symmetry is actually enough for the Fourier coefficients $L_{\kappa, m}$ and $G_{\kappa,n}$ to give rise to a unitary representation of $NS$ from the vacuum vector $\Omega$. This is a straightforward consequence of the following general result:

\begin{theo}  \label{theo:unitarity_BS_representation_NS}
	Let $\left\{L_m \,,\,\, G_n\mid m\in\Z \,,\,\,n\in\Z-\half\right\}$ be a representation of the Neveu-Schwarz algebra $NS$ on a vector superspace $V$ with central charge $c\geq \frac{3}{2}$ such that: 
	\begin{itemize}
		\item [(i)] there exists a cyclic lowest weight vector $\Omega$ such that $L_0\Omega=0$;
		
		\item [(ii)] there exists a scalar product $\scalar$ such that the fields $T(z)=\sum_{m\in\Z} L_mz^{-m}$ and $G(z)=\sum_{n\in\Z-\half} G_nz^{-n}$ satisfy
		\begin{equation}  \label{eq:weak_symmetry_NS_theo}
			\left(p(z)T(z)\right)^\dagger=\overline{p}(z)T_\kappa(z)
			\quad\mbox{ and }\quad
			\left(p(z)z^\half G(z)\right)^\dagger=\overline{p}(z)z^{-\half}G(z) 
		\end{equation}
		for all trigonometric polynomials $p(z)$ such that $p(-1)=0$. 
	\end{itemize}
	Then $L_m^\dagger=L_{-m}$ for all $m\in\Z$ and $G_n^\dagger=G_{-n}$ all $n\in\Z-\half$, that is the representation is unitary.
	
	Furthermore, this representation has a structure of simple unitary VOSA with central charge $c$ (unitarily) isomorphic to the $N=1$ super-Virasoro VOSA $V^c(NS)$.
\end{theo}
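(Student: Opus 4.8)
The plan is to first promote the weak symmetry in (ii) to the genuine symmetry relations $L_m^\dagger=L_{-m}$ and $G_n^\dagger=G_{-n}$, and only then to produce the vertex superalgebra structure and identify it with $V^c(NS)$.

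\emph{Extracting the defect.} I would first convert the formal identities \eqref{eq:weak_symmetry_NS_theo} into relations among Fourier coefficients by testing them against the trigonometric polynomials $p(z)=z^{-a}+z^{-a-1}$, which vanish at $-1$, and reading off the coefficient of each power of $z$. Writing $D_m:=L_m^\dagger-L_{-m}$ and the analogous odd quantity, this yields $D_m+D_{m-1}=0$ for all $m$ (and similarly for the $G_n$), so that there are fixed operators $D_0=L_0^\dagger-L_0$ and $E_0=G_{1/2}^\dagger-G_{-1/2}$ with
\[
L_m^\dagger=L_{-m}+(-1)^m D_0,\qquad G_n^\dagger=G_{-n}+(-1)^{n-\frac12}E_0,
\]
and $D_0^\dagger=-D_0$, $E_0^\dagger=E_0$. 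The codimension-one constraint $p(-1)=0$ is exactly what leaves the $1$-parameter ambiguity $D_0,E_0$: the hypothesis says the symmetry can fail only through this ``boundary contribution'' at the excluded point $-1$.

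\emph{Killing the defect.} The crux of the argument, and the step I expect to be the main obstacle, is to show $D_0=E_0=0$; the weak symmetry alone cannot achieve this, and one must bring in positivity of $\scalar$ together with the lowest-weight structure. Taking the adjoint of $G_{1/2}^2=L_1$ and using the displayed formulas gives the identity $\{G_{-1/2},E_0\}+E_0^2=-D_0$. Computing $\norm{E_0\Omega}^2=(\Omega|E_0^2\Omega)$ by means of this identity, and repeatedly using $L_n\Omega=G_n\Omega=0$ for $n>0$ together with $D_0^\dagger=-D_0$, the expression collapses to $\norm{G_{-1/2}\Omega}^2=0$, whence $G_{-1/2}\Omega=0$; an easier computation of $\norm{L_{-1}\Omega}^2$ gives $L_{-1}\Omega=0$. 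In particular the weight-$\frac12$ and weight-$1$ subspaces of the cyclic module are trivial. Next I would prove, by induction on the conformal weight, that $(\Omega|v)=0$ for every homogeneous $v$ of positive weight: peeling off the leftmost creation operator and applying the adjoint formulas rewrites $(\Omega|X_{-a}w)$ in terms of pairings of strictly smaller weight, the reduction $(D_0\Omega|w)=d_w(\Omega|w)$ handling the even case and $G_{-1/2}\Omega=0$ together with triviality of the weight-$\frac12$ space handling the odd one. This forces $D_0\Omega=0$ and $E_0\Omega=0$. Finally, using the super-Virasoro relations and their adjoints, the commutators $[L_m,D_0]$, $[G_n,D_0]$, $[L_m,E_0]$ and $\{G_n,E_0\}$ can all be expressed in terms of $D_0$ and $E_0$; hence the common kernel $\{v\mid D_0 v=E_0 v=0\}$ is invariant under all generators, and as it contains the cyclic vector $\Omega$ it is all of $V$. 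Thus $D_0=E_0=0$ and the representation is unitary.

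\emph{Vertex superalgebra structure and identification.} For the last assertion I would exploit $L_{-1}\Omega=0$ and $G_{-1/2}\Omega=0$: together with $L_n\Omega=G_n\Omega=0$ for $n>0$ and cyclicity, they say precisely that $\Omega$ is a vacuum vector, so $V$ is a quotient of the universal $N=1$ super-Virasoro vertex superalgebra of central charge $c$ (the vacuum module of \cite[Example 5.9]{Kac01}), with conformal vector $\nu:=L_{-2}\Omega$ and superconformal vector $\tau:=G_{-3/2}\Omega$. Since $\nu,\tau$ are Hermitian with symmetric modes and generate $V$, the scalar product is invariant (invariance need only be checked on a generating set, see \cite[Section 3]{CGH}), so $V$ is a unitary VOSA; being of CFT type ($V_0=\C\Omega$ and no negative weights), it is simple by \cite[Proposition 3.10]{CGH}. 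As the simple quotient of the universal $N=1$ super-Virasoro vertex superalgebra is unique and equals $V^c(NS)$, we get $V\cong V^c(NS)$, and by uniqueness of the unitary structure on a simple unitary VOSA \cite[Proposition 3.14]{CGH} the isomorphism may be taken unitary; in particular this proves a posteriori that $V^c(NS)$ is unitary for $c\geq\frac32$.
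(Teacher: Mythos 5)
There is a genuine gap at the very first step, and it is not a technicality: your whole strategy rests on the ``defect operators'' $D_0=L_0^\dagger-L_0$ and $E_0=G_{\half}^\dagger-G_{-\half}$, but hypothesis (ii) does not guarantee that the individual adjoints $L_m^\dagger$ and $G_n^\dagger$ exist. What \eqref{eq:weak_symmetry_NS_theo} gives you, reading off Fourier coefficients, is that the \emph{combinations} $\sum_n p_{-n}L_n$ with $p(-1)=0$ admit adjoints, equal to $\sum_n\overline{p_{-n}}L_{-n}$; from $(L_m+L_{m-1})^\dagger=L_{-m}+L_{-m+1}$ you cannot split off $L_m^\dagger$ and $L_{m-1}^\dagger$ separately. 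This is exactly the situation in the intended application: the modes $L_{\kappa,m}$ in \eqref{eq:bosonxfermion_representations_NS_fields_BS_fourier} contain the tails $2\sum_{j<0}(-1)^jJ_{m-j}$, whose formal adjoints are infinite sums of creation operators and hence do not define operators on the ambient space; only the $p(-1)=0$ combinations cancel these tails. The existence of $L_m^\dagger$ and $G_n^\dagger$ is therefore part of the \emph{conclusion} of the theorem, not an input, and the identities you build on ($D_m+D_{m-1}=0$, $\{G_{-\half},E_0\}+E_0^2=-D_0$, invariance of $\ker D_0\cap\ker E_0$) are not available. Reinterpreting $D_0$ as a sesquilinear form does not rescue the argument, since you need to compose $E_0$ with itself and with $G_{-\half}$.

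For comparison, the paper's proof never splits the combinations. Following \cite[Theorem 6]{CTW23}, it works exclusively with the relations $(L_n-(-1)^{n-m}L_m)^\dagger=L_{-n}-(-1)^{n-m}L_{-m}$ and their odd analogues, obtains $L_{-1}\Omega=0$ and $G_{-\half}\Omega=0$, and then proves directly that distinct $L_0$-eigenspaces are mutually orthogonal by an induction on a ``total grade'' of pairs of PBW-type vectors, choosing in each step a two-term combination (as in \eqref{eq:operator_A_for_superconformal}) whose sign is tuned to the weight difference so that the unwanted term can be absorbed by the inductive hypothesis. Orthogonality of the eigenspaces is what \emph{establishes} that $L_0^\dagger$ exists and equals $L_0$, after which the individual adjoints of all modes follow from the two-term relations. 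Your computation of $\norm{E_0\Omega}^2$ and your closing identification of the VOSA structure with $V^c(NS)$ are in the right spirit (the last paragraph essentially coincides with the paper's), but to repair the proof you would need to replace the defect-operator bookkeeping by an argument, like the paper's grade induction, that only ever invokes adjoints of the admissible combinations.
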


\begin{proof}
	For the first statement, we adapt the proof of \cite[Theorem 6]{CTW23} to our setting.
	First, we give some preliminary result. 
	Suppose that $p(z)=\sum_{\abs{n}<N}p_nz^{-n}$ for some positive integer $N$.
	Then looking at the Fourier coefficients of the zero power of \eqref{eq:weak_symmetry_NS_theo}, we have that
	\begin{equation}  \label{eq:zero_mode_dagger}
		\begin{split}
		\left(\sum_{\abs{n}<N}p_{-n}L_n\right)^\dagger
			&=
		\sum_{\abs{n}<N}\overline{p_{-n}}L_{-n}
		\\
		\left(\sum_{\abs{n}<N}p_{-n}G_{n+\half}\right)^\dagger
			&=
		\sum_{\abs{n}<N}\overline{p_{-n}}G_{-n-\half} \,.
		\end{split}
	\end{equation}
	Therefore, choosing $p(z)=z^{n}-(-1)^{n-m}z^m$ for all $n,m\in\Z$, we have that
	\begin{align}  
		\left(L_n-(-1)^{n-m}L_m\right)^\dagger
		&=L_{-n}-(-1)^{n-m}L_{-m}  
		\label{eq:dagger_virasoro_field_coefficients} \\
		\left(G_{n+\half}-(-1)^{n-m}G_{m+\half} \right)^\dagger
		&=G_{-n-\half}-(-1)^{n-m}G_{-m-\half}
		\label{eq:dagger_superconformal_field_coefficients}
	\end{align}
	By the proof of \cite[Lemma 5]{CTW23}, we have that $L_{-1}\Omega=0$. Then by the $NS$ commutation relations \eqref{eq:NS_cr}, we get that
	\begin{equation}
		G_{-\half}\Omega=[G_\half,L_{-1}]\Omega=0 \,.
	\end{equation}
	
	Now, as $\Omega$ is a cyclic lowest weight vector, we have that vectors of the following type generate the whole representation:
	\begin{equation}  \label{eq:generic_generator_NS_ordered}
		L_{-m_1}\cdots L_{-m_s}G_{-n_1}\cdots G_{-n_t}\Omega
	\end{equation}
	for $s,t\in\Zpluseq$, integers $0<m_s\leq\cdots\leq m_1$ and half-integers $0<n_t\leq\cdots\leq n_1$.
	Our first aim is to prove that $L_0^\dagger=L_0$. We proceed proving that $\scalar$ makes the different eigenspaces of $L_0$ orthogonal to each other. To this goal, it is sufficient to prove that given two vectors
	\begin{equation}  \label{eq:generic_generators_NS}
		\Psi:=L_{-m_1}\cdots L_{-m_s}G_{-n_1}\cdots G_{-n_t}\Omega
		\quad\mbox{ and }\quad
		\Psi':=L_{-m_1'}\cdots L_{-m_{s'}'}G_{-n_1'}\cdots G_{-n_{t'}'}\Omega
	\end{equation}
	for $s,s',t,t'\in\Zpluseq$, any (non-necessarily ordered) positive integers $m_1,\cdots,m_t$, $m_1', \cdots, m_{t'}'$ and any (non-necessarily ordered) positive semi-integers $n_1,\cdots, n_s$, $n_1',\cdots, n_{s'}'$, we have that $\langle \Psi,\Psi'\rangle=0$ whenever $d\not=d'$, where $d:=\sum_j m_j+\sum_i n_i$ and $d':=\sum_j m_j'+\sum_i n_i'$ are the eigenvalues of $L_0$ of $\Psi$ and $\Psi'$ respectively. 
	For any vector of the type of $\Psi$, we call the quantity $4s+3t$ the \textit{grade} of $\Psi$.
	Then we adopt the following strategy, based on an induction argument on the \textit{total grade value} 
	\begin{equation}
		\mathrm{gr}:=4(s+s')+3(t+t') \,.
	\end{equation}
	If $\mathrm{gr}=0$, the only possibility is that $\Psi=\Psi'=\Omega$ and our statement is trivially true. Suppose by the inductive hypothesis that it is true for $\mathrm{gr}\leq j$ for a fixed $j\geq 0$, then we have to prove that the statement holds also for $\mathrm{gr}=j+1$. We can further assume that $d\not=d'$ otherwise we trivially have nothing to prove.
	
	We proceed splitting the problem into two cases: $s+s'>0$ and $s+s'=0$. The first case works very similarly to the \textit{Case 1} of the proof of \cite[Theorem 6]{CTW23} as the only ingredients we need are \eqref{eq:dagger_virasoro_field_coefficients}, the fact that $L_{-1}\Omega=0$ and that the $NS$ commutation relations \eqref{eq:NS_cr} ``lower the grade''; whereas we need to adapt the argument for the second case. Accordingly, suppose that $s+s'=0$, so that $\Psi=G_{-n_1}\cdots G_{-n_t}\Omega$ and $\Psi'=G_{-n_1'}\cdots G_{-n_{t'}'}\Omega$. Then we can further suppose that $t+t'>0$ otherwise $\Psi=\Psi'=\Omega$, which is not the case. Without loss of generality, assume that $t>0$. Note that $\Psi\not=(G_{-\half})^r\Omega$ for all positive integers $r$ otherwise it is $0$. Moreover, if $n_1=\half$, using the $NS$ commutation relations \eqref{eq:NS_cr} to ``move'' $G_{-\half}$ to the right, we can rewrite $\Psi$ as a linear combination of the vector $G_{-n_2}\cdots G_{-n_t}G_{-\half}\Omega=0$ and vectors of kind $L\cdots LG\cdots G\Omega$, where we are omitting the indexes, which are negative, all having the same weight $d$ but a strictly smaller grade. Then we are in the first case and with $\mathrm{gr}<j+1$, so that the conclusion for $\mathrm{gr}=j+1$ follows from the inductive hypothesis.
	Therefore, suppose that $n_1>\half$	and define $\xi$ such that $G_{-n_1}\xi=\Psi$. Set $A$ to be equal to
	\begin{equation}  \label{eq:operator_A_for_superconformal}
		\left\{\begin{array}{lr}
			G_{n_1}-(-1)^{n_1-\half}G_\half
			=
			\left(G_{-n_1}-(-1)^{n_1-\half}G_{-\half}\right)^\dagger
			&\mbox{if}\,\,\, d-n_1+\half\not=d'\\
			G_{n_1}+(-1)^{n_1-\half}G_{-\half}
			=
			\left(G_{-n_1}+(-1)^{n_1-\half}G_\half\right)^\dagger 
			&\mbox{if}\,\,\, d-n_1+\half=d'
		\end{array}\right.
	\end{equation}
	where the two equalities hold by \eqref{eq:dagger_superconformal_field_coefficients}. 
	Then we have that
	\begin{equation}
		( \Psi|\Psi' )
		= ( (A^\dagger \pm (-1)^{n_1-\half}G_{\mp\half})\xi |\Psi' )
		= ( \xi|A\Psi' )
		+( \pm (-1)^{n_1-\half}G_{\mp\half} \xi|\Psi' )
	\end{equation}
	where the upper signs are chosen when $d-n_1+\half\not=d'$ and the downer otherwise.
	Note that $A$ and $G_{\pm\half}$ annihilate $\Omega$ and using the $NS$ commutation relations \eqref{eq:NS_cr} as done before, we can rewrite both pairings above as linear combinations of pairing whose terms have $\mathrm{g}<j+1$ and different eigenvalues thanks to the choice on the sign made on $A$. This means that we are in charge to apply the inductive hypothesis to get the desired result.
	
	To move to the conclusion of this first part, we have just proved that $L_0^\dagger=L_0$, which implies that $L_m^\dagger=L_{-m}$ for all $m\in\Z$ thanks to \eqref{eq:dagger_virasoro_field_coefficients}, choosing $m=0$ there.
	By \eqref{eq:dagger_superconformal_field_coefficients}, $(G_\half+G_{-\half})^\dagger=G_\half+G_{-\half}$ and thus using the $NS$ commutation relations \eqref{eq:NS_cr}, we get that
	\begin{equation}
		G_\half^\dagger
		=[L_1,G_\half+G_{-\half}]^\dagger
		=[(G_\half+G_{-\half})^\dagger, L_1^\dagger]
		=[G_\half+G_{-\half}, L_{-1}]
		=G_{-\half} \,.
	\end{equation}
	Therefore, using \eqref{eq:dagger_superconformal_field_coefficients} with $m=0$, we also have that $G_n^\dagger=G_{-n}$ for all $n\in\Z-\half$.
	
	The vertex superalgebra structure is assured by \cite[Theorem 4.5]{Kac01}, where the vector superspace $V$ is obviously the one generated by vectors of kind \eqref{eq:generic_generator_NS_ordered} from the vacuum vector $\Omega$, and with $L_{-1}$ playing the role of the infinitesimal translation operator $T$. We also have a conformal vector $\nu:=L_{-2}\Omega$ making $V$ a VOSA of CFT type with central charge $c$. Note also that $\nu$ together with the vector $\tau:=G_{-\frac{3}{2}}\Omega$ generate the whole VOSA. Moreover, they are Hermitian thanks to the symmetry of the fields $T(z)=z^2Y(\nu,z)$ and $G(z)=z^\frac{3}{2}Y(\tau,z)$, so that $V$ turns out to be unitary and simple by \cite[Proposition 3.29 and Propostion 3.10]{CGH} respectively. Then this is unitarily isomorphic to the $N=1$ super-Virasoro VOSA $V^c(NS)$, see \cite[Example 7.5 and Proposition 3.14]{CGH}.
\end{proof}

It follows an alternative proof to the one given in \cite{FQS85, GKO86}, \cite[Section 5]{KW86}:

\begin{cor}  \label{cor:unitarity_vacuum_representations_NS}
	The irreducible lowest weight representations of the Neveu-Schwarz algebra $NS$ with central charge $c\geq \frac{3}{2}$ and lowest weight $h=0$ are unitary.
\end{cor}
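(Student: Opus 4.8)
The plan is to realize, for every $c\geq\frac{3}{2}$, the desired representation inside the concrete model $M(1)\hat{\otimes}F$ and then to invoke Theorem \ref{theo:unitarity_BS_representation_NS}. First I would solve $c=\frac{3}{2}+12\kappa^2$ for $\kappa\in\R$, which is possible precisely because $c\geq\frac{3}{2}$, namely $\kappa:=\sqrt{(c-\frac{3}{2})/12}$ (with $\kappa=0$ when $c=\frac{3}{2}$). For this value of $\kappa$, the Fourier coefficients $L_{\kappa,m}$ and $G_{\kappa,n}$ of \eqref{eq:bosonxfermion_representations_NS_fields_BS_fourier} define a representation of $NS$ on $M(1)\hat{\otimes}F$ with central charge $c$ and lowest weight vector $\Omega$ of lowest weight $h=0$. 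I would then restrict this representation to the cyclic $NS$-submodule $V_\kappa$ generated by $\Omega$, equipped with the scalar product inherited from $M(1)\hat{\otimes}F$.

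Next I would check that $V_\kappa$ satisfies the two hypotheses of Theorem \ref{theo:unitarity_BS_representation_NS}. Condition (i) holds by construction, since $\Omega$ is cyclic in $V_\kappa$ and $L_{\kappa,0}\Omega=0$. Condition (ii) is the weak symmetry \eqref{eq:weak_symmetry_NS}, already established on $M(1)\hat{\otimes}F$ as a consequence of the Hermiticity of $\bos$ and $\fer$ (equivalently, of the symmetry of the fields $J(z)$ and $\Phi(z)$ together with the fact that $\widetilde{\rho}(z)$ is real on $\pSone$); since $V_\kappa$ is an $NS$-submodule and hence invariant under all the $L_{\kappa,m}$ and $G_{\kappa,n}$, the pairing identities encoded by \eqref{eq:weak_symmetry_NS} restrict verbatim to vectors of $V_\kappa$. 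Applying Theorem \ref{theo:unitarity_BS_representation_NS} then yields at once that the representation on $V_\kappa$ is unitary and, moreover, that $V_\kappa$ carries the structure of a simple unitary VOSA (unitarily) isomorphic to $V^c(NS)$.

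To conclude, I would identify $V_\kappa$ with the irreducible lowest weight $NS$-module of central charge $c$ and lowest weight $0$. This can be done in two equivalent ways. On the one hand, the simplicity of the VOSA $V^c(NS)$ provided by the theorem means precisely that its underlying $NS$-module is the irreducible lowest weight module with these parameters, so $V_\kappa$ is that module. On the other hand, one can argue intrinsically: the scalar product on $V_\kappa$ is positive-definite, being the restriction of the one on $M(1)\hat{\otimes}F$, so the radical of the invariant form vanishes; since for a lowest weight module this radical coincides with the unique maximal proper submodule, $V_\kappa$ is irreducible. Either way, the irreducible lowest weight representation of $NS$ with $c\geq\frac{3}{2}$ and $h=0$ is unitary.

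The bulk of the work has already been carried out in Theorem \ref{theo:unitarity_BS_representation_NS} and in the explicit construction \eqref{eq:bosonxfermion_representations_NS_fields_BS}--\eqref{eq:bosonxfermion_representations_NS_fields_BS_fourier}, so the point that still requires genuine care is the identification in the last paragraph, namely matching the concrete cyclic unitary module $V_\kappa$ with the abstract irreducible module; by contrast, the transfer of the weak symmetry from the ambient space $M(1)\hat{\otimes}F$ down to the submodule $V_\kappa$ is routine.
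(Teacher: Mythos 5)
Your proposal is correct and follows essentially the same route as the paper: the corollary is stated there as an immediate consequence of Theorem \ref{theo:unitarity_BS_representation_NS} applied to the explicit realization \eqref{eq:bosonxfermion_representations_NS_fields_BS}--\eqref{eq:bosonxfermion_representations_NS_fields_BS_fourier} on $M(1)\hat{\otimes}F$, with $c=\frac{3}{2}+12\kappa^2$ and the weak symmetry \eqref{eq:weak_symmetry_NS} supplying the hypotheses. Your extra care in identifying the cyclic module generated by $\Omega$ with the irreducible lowest weight module (via positive-definiteness forcing the radical, hence the maximal proper submodule, to vanish) is a legitimate filling-in of a step the paper leaves implicit in the phrase ``(unitarily) isomorphic to $V^c(NS)$''.
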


Turning back to the initial problem of constructing suitable unitary representations of $NS$, we implement the transformation $\varphi_{\kappa,\eta}=\varphi_{0,\eta}\circ\varphi_{\kappa,0}$ to \eqref{eq:bosonxfermion_representations_NS_fields_BS} with the same $\kappa\in\R$ and all $\eta\in\R$, so that we have:
\begin{equation}  \label{eq:bosonxfermion_representations_NS_fields_unitary_higher}
	\begin{split}
		T_{\kappa,\eta}(z)
		&:=
		\half:J^2:(z)+\half:\partial\Phi\Phi:(z)+\eta J(z)+\kappa J'(z)+\frac{\kappa^2+\eta^2}{2} \\
		G_{\kappa,\eta}(z)
		&:=
		:J\Phi:(z)+\eta\Phi(z)+2\kappa\Phi'(z) 
	\end{split}
\end{equation}   
where we have used that $\widetilde{\rho}$ satisfies the differential equation \eqref{eq:rho_differential_eq}.
Note that \eqref{eq:bosonxfermion_representations_NS_fields_unitary_higher} defines unitary representations of $NS$ as $T_{\kappa,\eta}(z)$ and $G_{\kappa,\eta}(z)$ are symmetric for all $\kappa,\eta\in\R$. 
Moreover, $\Omega$ is a lowest weight vector of lowest weight $h(\kappa,\eta):=\frac{\kappa^2+\eta^2}{2}$ for all $\kappa,\eta\in\R$. To sum up, we have construct an infinite number of unitary representations of $NS$ with central charge $c(\kappa)=\frac{3}{2}+12\kappa^2$ with $\kappa\in\R$ and lowest weight $h\geq \frac{c(k)-\frac{3}{2}}{24}$. 
Just for the sake of completeness, we also remark that we could have obtained \eqref{eq:bosonxfermion_representations_NS_fields_unitary_higher} directly from \eqref{eq:bosonxfermion_representations_NS_fields} implementing the transformations of type $\varphi_{0,i\kappa+\eta}$.

\subsubsection{Representations of the $N=1$ super-Virasoro nets}
\label{subsubsec:step2}

The goal is to translate the algebraic language of Subsection \ref{subsubsec:step1} into the operator algebraic one by adapting the approach pursued in \cite[Section 4]{BS90} and \cite[Section 4.4]{CTW22}, cf.\ also \cite[Section 4]{Car04}.

From now onward, $\kappa$ is any real number.
Let us call $V^\kappa$ the simple unitary VOSA obtained by Theorem \ref{theo:unitarity_BS_representation_NS} from fields \eqref{eq:bosonxfermion_representations_NS_fields_BS}. 
If $\widehat{Y}_\kappa$ is the corresponding state-field correspondence, $\widehat{\nu}^\kappa:=L_{\kappa,-2}\Omega$ and $\widehat{\tau}^\kappa:=G_{\kappa,-\frac{3}{2}}\Omega$, then $\widehat{Y}(\widehat{\nu}^\kappa,z)=z^{-2}T_\kappa(z)$ and $\widehat{Y}(\widehat{\tau}^\kappa,z)=z^{-\frac{3}{2}}G_\kappa(z)$.
Recall that $V$ is strongly graded-local and
$V^\kappa$ is too because it is isomorphic to the $N=1$ super-Virasoro VOSA with central charge $c(\kappa)=1+12\kappa^2$. 
Accordingly, let $\A_V$ and $\A_{V^\kappa}$ be the irreducible graded-local conformal nets arising by strong graded locality from the corresponding VOSAs on the Hilbert space completions $\mathcal{H}$ of $V$ and $\mathcal{H}^\kappa$ of $V^\kappa$ respectively. 
We highlight that every $V^\kappa$ is an inner product subspace of $V$, so that we can consider the orthogonal projection $E_\kappa$ from $\mathcal{H}$ to $\mathcal{H}^\kappa$.
Moreover, we denote the positive-energy strongly continuous projective unitary representations of $\Diff^+(S^1)^{(\infty)}$ of $\A_V$ and of $\A_{V^\kappa}$ by $U$ and $U_\kappa$ respectively.

We will use $C^\infty_c(\pSone)$ and $C^\infty_{\chi,c}(\pSone)$, with corresponding symbols for their subsets of real-valued functions, to denote the subspaces of $C^\infty(S^1)$ and $C^\infty_\chi(S^1)$ respectively of only functions with compact support in $\pSone$.
For all $f\in C^\infty_c(\pSone)$ and all $g\in C^\infty_{\chi, c}(\pSone)$, $\widetilde{\rho}f$ and $\widetilde{\rho}g$ are well-defined functions in $C^\infty_c(\pSone)$ and in $C^\infty_{\chi,c}(\pSone)$ respectively.
As in \eqref{eq:defin_smeared_vertex_operators}, we smeared the fields \eqref{eq:bosonxfermion_representations_NS_fields_BS}, defining $T_\kappa(f)$ and $G_\kappa(g)$ as the closures of the following respective operators on $V$:
\begin{equation}  \label{eq:defin_smeared_Tkappa_Gkappa}
	\begin{split}
		T_\kappa(f)_0 &:= Y_0(\nu,f)+i\kappa Y_0(\bos,-f')-\kappa Y_0(\bos,\widetilde{\rho}f) \\
		G_\kappa(g)_0 &:=  Y_0(\tau,g) +2i\kappa Y_0(\fer,-g') -\kappa Y_0(\fer, \widetilde{\rho}g) 
	\end{split}
\end{equation}
with $f'$ and $g'$ as in \eqref{eq:defin_derivative_test_functions}. 
In other words, for all $x\in V$,
\begin{equation}  \label{eq:defin_smeared_Tkappa_Gkappa_series}
	\begin{split}
		T_\kappa(f)_0x
		&:=
		\sum_{m\in\Z}\widehat{f}_m(L_m-i\kappa m J_m)x-\kappa\sum_{m\in\Z}\widehat{(\widetilde{\rho}f)}_mJ_mx\\
		G_\kappa(g)_0x
		&:=
		\sum_{n\in\Z-\half}\widehat{g}_n(G_n-2i\kappa n \Phi_n)x-\kappa\sum_{n\in\Z-\half}\widehat{(\widetilde{\rho}g)}_n\Phi_nx  
	\end{split}
\end{equation} 
where the right hand sides are well-defined thanks to the energy boundedness of $V$.

\begin{rem}
	An equivalent definition of the operators in \eqref{eq:defin_smeared_Tkappa_Gkappa} and \eqref{eq:defin_smeared_Tkappa_Gkappa_series} can be given in the \textit{real picture}, see \cite[Remark 5.1]{CGH} and references therein. Cf.\ \cite[Eq.\ (1.15)]{FST89}.
\end{rem}

\begin{lem}  \label{lem:properties_smeared_Tkappa_Gkappa}
	For all $f\in C^\infty_c(\pSone,\R)$ and all $g\in C^\infty_{\chi,c}(\pSone,\R)$, $T_\kappa(f)$ and $G_\kappa(g)$ are self-adjoint operators on $\mathcal{H}$.
	For all $f,\tilde{f}\in C^\infty_c(\pSone)$ and all $g,\widetilde{g}\in C^\infty_{\chi,c}(\pSone)$ such that $f$ and $g$ have supports disjoint from the ones of $\widetilde{f}$ and $\widetilde{g}$, 
	\begin{equation}
		W^*(\{T_\kappa(f), G_\kappa(g)\})
		\subseteq W^*(\{Y(\bos,\tilde{f}), ZY(\fer,\tilde{g})Z^*\})' \,.
	\end{equation}
	Furthermore, for all $f\in C^\infty_c(\pSone)$ and all $g\in C^\infty_{\chi,c}(\pSone)$,
	the orthogonal projection $E_\kappa$ is in $W^*(\{T_\kappa(f), G_\kappa(g)\})'$ and $E_\kappa T_\kappa(f)E_\kappa=\widehat{Y}(\widehat{\nu}^\kappa,f)$ and $E_\kappa G(g)E_\kappa =\widehat{Y}(\widehat{\tau}^\kappa, g)$.
\end{lem}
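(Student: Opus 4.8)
The plan is to reduce every assertion to facts about ordinary smeared vertex operators of the Hermitian generators $\nu,\bos$ (even) and $\tau,\fer$ (odd) of $V$, to which the machinery of Section~\ref{section:preliminaries} applies. The decisive point is that, since $\widetilde{\rho}(z)\in\R$ for $z\in\pSone$ while $f$ and $g$ vanish near $-1$, the functions $\widetilde{\rho}f$ and $\widetilde{\rho}g$ are again smooth test functions with $\mathrm{supp}(\widetilde{\rho}f)\subseteq\mathrm{supp}f$ and $\mathrm{supp}(\widetilde{\rho}g)\subseteq\mathrm{supp}g$, and that the combinations of the derivative and $\widetilde{\rho}$-terms occurring in \eqref{eq:defin_smeared_Tkappa_Gkappa} are real-valued whenever $f,g$ are. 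Thus I would first rewrite, on $V$, $T_\kappa(f)_0=Y_0(\nu,f)-\kappa Y_0(\bos,h_f)$ and $G_\kappa(g)_0=Y_0(\tau,g)-\kappa Y_0(\fer,h_g)$, where $h_f\in C^\infty_c(\pSone)$ and $h_g\in C^\infty_{\chi,c}(\pSone)$ are the combined test functions built from $f',g'$ and from $\widetilde{\rho}f,\widetilde{\rho}g$; this rewriting is the operator-algebraic form of the weak symmetry \eqref{eq:weak_symmetry_NS}, and for real $f,g$ the functions $h_f,h_g$ are real as well, so each summand is symmetric on $\mathcal{H}^\infty$.

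For the self-adjointness assertion I would observe that $T_\kappa(f)$ and $G_\kappa(g)$ satisfy linear energy bounds, being finite linear combinations of smeared vertex operators of $\nu$ (linear bounds), $\bos\in V_1$ ($\half$-th order), $\tau$ ($\half$-th order) and $\fer\in V_\half$ ($0$-th order), cf.\ the proof of Theorem~\ref{theo:linear_energy_bounds_strong_locality}. A symmetric operator satisfying linear energy bounds whose commutator with $L_0$ is again dominated by $(L_0+1_V)$—which holds here because differentiating the test function preserves the bounds—is essentially self-adjoint on $\mathcal{H}^\infty$ by a commutator theorem argument as in \cite[Section~4]{CGH}, and its closure is $T_\kappa(f)$ (resp.\ $G_\kappa(g)$). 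I expect this upgrade from symmetry to self-adjointness to be the main obstacle, since the sum of two self-adjoint operators need not be essentially self-adjoint and one must invoke the energy bounds for the combined operator directly rather than appeal to the already-known self-adjointness of the individual summands.

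For the locality statement, set $\mathcal{M}:=W^*(\{Y(\bos,\tilde f),ZY(\fer,\tilde g)Z^*\})$; splitting into real and imaginary parts we may assume $f,g$ real. Applying Theorem~\ref{theo:linear_energy_bounds_strong_locality} to each summand—legitimate because $\nu,\tau$ are Hermitian quasi-primary (super-)Virasoro vectors, $\bos\in V_1$, $\fer\in V_\half$, $\bos$ is even so that $ZY(\bos,\tilde f)Z^*=Y(\bos,\tilde f)$, and all of $f,h_f,g,h_g$ are supported away from $\tilde f,\tilde g$—gives $W^*(Y(\nu,f)),W^*(Y(\bos,h_f)),W^*(Y(\tau,g)),W^*(Y(\fer,h_g))\subseteq\mathcal{M}'$. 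To pass to $T_\kappa(f),G_\kappa(g)$ I would use that if two self-adjoint operators are affiliated to the von Neumann algebra $\mathcal{M}'$ and their sum is essentially self-adjoint on the common core $\mathcal{H}^\infty$, then the sum is affiliated to $\mathcal{M}'$ too: any unitary $u\in\mathcal{M}$ commutes strongly with each summand, hence preserves $\mathcal{H}^\infty\subseteq D(T_\kappa(f))$ and commutes with the closure. This yields $W^*(\{T_\kappa(f),G_\kappa(g)\})\subseteq\mathcal{M}'$; the unbounded-operator bookkeeping here is the only delicate part.

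Finally, for $E_\kappa$ I would use that the twisted modes $L_{\kappa,m}$ and $G_{\kappa,n}$ are exactly the Virasoro and Neveu--Schwarz modes of the sub-VOSA $V^\kappa\cong V^{c(\kappa)}(NS)$, so both $T_\kappa(f),G_\kappa(g)$ and their adjoints map $V^\kappa$ into itself, acting there as $\widehat{Y}(\widehat{\nu}^\kappa,f)$ and $\widehat{Y}(\widehat{\tau}^\kappa,g)$ by construction. Preservation of $V^\kappa$ by both the operator and its adjoint forces preservation of $(V^\kappa)^\perp$: for $x\in V^\kappa$ and $y$ in the smooth part of $(\mathcal{H}^\kappa)^\perp$ one has $(T_\kappa(f)y|x)=(y|T_\kappa(f)^*x)=0$ since $T_\kappa(f)^*x\in V^\kappa$. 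Hence $E_\kappa$ reduces the self-adjoint operators $T_\kappa(f),G_\kappa(g)$, giving $E_\kappa\in W^*(\{T_\kappa(f),G_\kappa(g)\})'$, and since the compression then coincides with the restriction to $\mathcal{H}^\kappa$ we obtain $E_\kappa T_\kappa(f)E_\kappa=\widehat{Y}(\widehat{\nu}^\kappa,f)$ and $E_\kappa G_\kappa(g)E_\kappa=\widehat{Y}(\widehat{\tau}^\kappa,g)$.
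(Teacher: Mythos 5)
Your handling of the first two assertions matches the paper's in substance: the paper likewise establishes linear energy bounds for $T_\kappa(f)$ and $G_\kappa(g)$ together with bounds on their (iterated) commutators with $L_0$, and then obtains self-adjointness and the locality inclusion simultaneously from the commutator theorem \cite[Theorem 19.4.4]{GJ87} applied to the pairs of operators, rather than summand by summand. Your variant (Theorem \ref{theo:linear_energy_bounds_strong_locality} for each Hermitian generator, then an affiliation argument for the sum) is workable, but note one slip: a unitary $u$ in $W^*(\{Y(\bos,\tilde f),ZY(\fer,\tilde g)Z^*\})$ need not preserve $\mathcal{H}^\infty$; you should instead observe that $u$ preserves the intersection of the domains of the two self-adjoint summands and that the sum there is a symmetric extension of the essentially self-adjoint operator defined on $\mathcal{H}^\infty$, hence has the same closure.

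The genuine gap is in the $E_\kappa$ statement. From the fact that $T_\kappa(f)$ and its adjoint map the dense subspace $V^\kappa\subseteq\mathcal{H}^\kappa$ into $\mathcal{H}^\kappa$, and domain vectors of $(\mathcal{H}^\kappa)^\perp$ into $(\mathcal{H}^\kappa)^\perp$, one cannot conclude that $E_\kappa$ reduces $T_\kappa(f)$: take $A=-i\frac{d}{dx}$ on $L^2(\R)$, $K=L^2(0,\infty)$ and the dense subspace $C_c^\infty(0,\infty)\subseteq K$, which is invariant under $A$ and $A^*$, yet $P_K$ does not commute with $A$. The missing ingredient is that $T_\kappa(f)\restriction_{V^\kappa}$ is essentially self-adjoint \emph{as an operator in} $\mathcal{H}^\kappa$, with closure the self-adjoint smeared field $\widehat{Y}(\widehat{\nu}^\kappa,f)$ of the energy-bounded unitary VOSA $V^\kappa$. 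This is exactly what the paper's proof supplies: it exhibits the common invariant core $\mathcal{D}^\kappa$ (polynomials in the smeared operators applied to $\Omega$, a core for both $T_\kappa(f)$ and $\widehat{Y}(\widehat{\nu}^\kappa,f)$ by the argument of \cite[Lemma 7.2]{CKLW18}), deduces $\widehat{Y}(\widehat{\nu}^\kappa,f)\subseteq E_\kappa T_\kappa(f)E_\kappa$, and invokes maximal symmetry of the self-adjoint left-hand side to get equality, whence $T_\kappa(f)E_\kappa=E_\kappa T_\kappa(f)E_\kappa$ is self-adjoint and $T_\kappa(f)E_\kappa=(T_\kappa(f)E_\kappa)^*\supseteq E_\kappa T_\kappa(f)$. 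Your appeal to ``the smooth part of $(\mathcal{H}^\kappa)^\perp$'' does not repair this, since $E_\kappa$ does not commute with $L_0$ ($V^\kappa$ is graded by $L_{\kappa,0}\neq L_0$), so it is not clear that $\mathcal{H}^\infty\cap(\mathcal{H}^\kappa)^\perp$ together with $V^\kappa$ is a core for $T_\kappa(f)$.
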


\begin{proof} 
If not differently specified, $f$ is in $C^\infty_c(\pSone)$ and $g$ is in $C^\infty_{\chi,c}(\pSone)$. 
Recall from Section \ref{section:preliminaries} that the subspace $\mathcal{H}^\infty$ of $\mathcal{H}$ of smooth vectors for $L_0$ is a common invariant core for all the smeared vertex operators on $V$.
It follows that $\mathcal{H}^\infty$ is also a common invariant core for the smeared fields $T_\kappa(f)$ and $G_\kappa(g)$. On $V$, $\Phi(z)$ satisfies $0$-th order energy bounds, $J(z)$ and $G(z)$ satisfy $\half$-th order energy bounds, whereas $T(z)$ satisfy linear energy bounds, see the proof of Theorem \ref{theo:linear_energy_bounds_strong_locality}. 
This implies that $T_\kappa(f)$ and $G_\kappa(g)$ satisfy linear energy bounds on $\mathcal{H}^\infty$, that is there exists non-negative real numbers $M_f$ and $M_g$ such that for all $c\in \mathcal{H}^\infty$,
\begin{equation}
			\norm{T_\kappa(f)c}
				\leq 
			M_f\norm{(L_0+1_{\mathcal{H}^\infty})c} 
			\quad\mbox{and}\quad
			\norm{G_\kappa(g)c}
			\leq 
			M_g\norm{(L_0+1_{\mathcal{H}^\infty}) c} \,.
\end{equation}
Moreover, it can be shown that also the commutators 
$$
[L_0,T_\kappa(f)]\,,\quad
[L_0,G_\kappa(g)]\,,\quad
[L_0,[L_0,T_\kappa(f)]]\,,\quad
[L_0,[L_0,G_\kappa(g)]]
$$
satisfy at most linear energy bounds in the above sense. Therefore, the first two statements follows at once from \cite[Theorem 19.4.4]{GJ87}, see also \cite[Theorem A.2]{AGT23}.

Working with the series expansion of $\widetilde{\rho}$ around $0$, it can be shown that for all $l\in\{f,g\}$
\begin{equation}
	\widehat{(\widetilde{\rho} l)}_m =i\widehat{l}_m+2i\sum_{j=1}^{+\infty}(-1)^j \widehat{l}_{m-j} \,.
\end{equation}
Then we have that $T_\kappa(f)$ and $G_\kappa(g)$ coincide with $\widehat{Y}(\widehat{\nu}^\kappa,f)$ and $\widehat{Y}(\widehat{\tau}^\kappa, g)$ on $V^\kappa$ respectively.
Set $\mathcal{D}^\kappa$ as the subspace of $\mathcal{H}^\infty$ generated by the polynomials 
\begin{equation}
T_\kappa(f_1)\cdots T_\kappa(f_n)G_\kappa(g_1)\cdots G_\kappa(g_m)\Omega
\end{equation}
for arbitrary finite collections of functions $\{f_i\}$ in $C^\infty_c(\pSone)$ and $\{g_j\}$ in $C^\infty_{\chi,c}(\pSone)$.
Then $\mathcal{D}^\kappa$ is a common invariant core for $T_\kappa(f)$ and $G_\kappa(g)$. It is not difficult to see that $\mathcal{D}^\kappa$ is also a core for both $\widehat{Y}(\widehat{\nu}^\kappa,f)$ and $\widehat{Y}(\widehat{\tau}^\kappa, g)$, see \cite[Lemma 7.2]{CKLW18}. Moreover, the latter operators coincide with $T_\kappa(f)$ and $G_\kappa(g)$ on $\mathcal{D}^\kappa$ respectively.

Set $A_f:=E_\kappa T_\kappa(f)E_\kappa$ and $B_g:=E_\kappa G(g)E_\kappa $ and consider them as operators on $\mathcal{H}^\kappa$.
Note that $A_f$ and $B_g$ are symmetric with domains given by $\mathcal{D}(A_f)=\mathcal{D}(T_\kappa(f))\cap \mathcal{H}^\kappa$ and $\mathcal{D}(B_g)=\mathcal{D}(G_\kappa(g))\cap \mathcal{H}^\kappa$ respectively.
Now, if $\Psi$ is in the domain of $\widehat{Y}(\widehat{\nu}^\kappa,f)$, then there exists a sequence $\{\Psi_n\}$ in $\mathcal{D}^\kappa$ converging to $\Psi$ and such that 
$$
\{T_\kappa(f)\Psi_n=\widehat{Y}(\widehat{\nu}^\kappa,f)\Psi_n\}
\to \widehat{Y}(\widehat{\nu}^\kappa,f)\Psi \,.
$$
Therefore, $\Psi\in \mathcal{D}(A_f)$ and $T_\kappa(f)\Psi =\widehat{Y}(\widehat{\nu}^\kappa,f)\Psi$. Moreover, 
\begin{equation}
	A_f=E_\kappa T_\kappa(f)\Psi
	=E_\kappa\widehat{Y}(\widehat{\nu}^\kappa,f)\Psi
	=\widehat{Y}(\widehat{\nu}^\kappa,f)\Psi
\end{equation}
which implies that $\widehat{Y}(\widehat{\nu}^\kappa,f)\subseteq A_f$. Let $f\in C^\infty_c(\pSone,\R)$. As $\widehat{Y}(\widehat{\nu}^\kappa,f)$ is self-adjoint ad thus maximally symmetric, then it must be equal to $A_f$. Moreover, we also know that if $\Psi\in \mathcal{D}(A_f)$, then $\Psi\in \mathcal{D}(T_\kappa(f))$ and thus
\begin{align}
	T_\kappa(f)\Psi &=\widehat{Y}(\widehat{\nu}^\kappa,f)\Psi \\
	E_\kappa T_\kappa(f)\Psi &=T_\kappa(f)\Psi \\
	A_f\Psi	&=T_\kappa(f)\Psi \,.
\end{align}
In other words, $T_\kappa(f) E_\kappa$ is equal to $E_\kappa T_\kappa(f)E_\kappa$ on $\mathcal{H}$. 
As $E_\kappa T_\kappa(f)E_\kappa$ is self-adjoint on $\mathcal{H}^\kappa$, then it is self-adjoint also on $\mathcal{H}$.
It follows that 
\begin{equation}
	T_\kappa(f)E_\kappa=(T_\kappa(f)E_\kappa)^*\supseteq E_\kappa T_\kappa(f)
\end{equation}
that is $E_\kappa\in W^*(T_\kappa(f))'$.
Then the result for generic $f\in C^\infty_c(\pSone)$ follows. 
We can conclude the proof noting that a similar argument works for $B_g$ too.
\end{proof}

Let $\J^\R$ be the subset of $\J$ of intervals whose closures do not contain the point $-1\in S^1$.
Set $\B^\kappa$ as the family of von Neumann algebras on $\mathcal{H}$ given by
\begin{equation}  \label{eq:defin_net_Bkappa}
	\B^\kappa(I):=W^*\left(\left\{ T_\kappa(f) \,,\,\, G_\kappa(g) \,\left|\,
	\begin{array}{l}
		f\in C^\infty_c(\pSone)\,,\,\,\mathrm{supp}f\subset I \\
		g\in C^\infty_{\chi,c}(\pSone) \,,\,\,\mathrm{supp}g\subset I
	\end{array} 
	\right.\right\}\right) 
\end{equation}
for all $I\in\J^\R$.
By Lemma \ref{lem:properties_smeared_Tkappa_Gkappa}, for all $I\in\J^\R$, $\B^\kappa(I)\subseteq \A_V(I)$ and there is an isomorphism $p^\kappa_I:\B^\kappa(I)\to\A_{V^\kappa}(I)$, giving rise to a soliton of $\A_{V^\kappa}$ on $\mathcal{H}$. 

From now onward, $\eta$ is any real number.
From \eqref{eq:bosonxfermion_representations_NS_fields_unitary_higher}, define the smeared fields $T_{\kappa,\eta}(f)$ and $G_{\kappa,\eta}(g)$ as in \eqref{eq:defin_smeared_vertex_operators}, allowing all $f\in C^\infty(S^1)$ and all $g\in C^\infty_\chi(S^1)$, which is possible as no $\widetilde{\rho}(z)$ appears in those formulae.
Moreover, thanks to the unitarity of the $NS$ representation that they generate, $T_{\kappa,\eta}(f)$ and $G_{\kappa,\eta}(g)$ are self-adjoint whenever $f$ and $g$ are chosen to be real-valued. 
Reasoning as in the first paragraph of the proof of Lemma \ref{lem:properties_smeared_Tkappa_Gkappa}, we can deduce that $T_{\kappa,\eta}(f)$ and $G_{\kappa,\eta}(g)$ satisfy linear energy bounds in the sense given there. It follows that for all $f,\tilde{f}\in C^\infty(S^1)$ and all $g,\widetilde{g}\in C^\infty_\chi(S^1)$ such that $f$ and $g$ have supports disjoint from the ones of $\widetilde{f}$ and $\widetilde{g}$, $W^*(\{T_{\kappa,\eta}(f), G_{\kappa,\eta}(g)\})$ is contained in $W^*(\{Y(\bos,\tilde{f}), ZY(\fer,\tilde{g})Z^*\})'$.
Denote the Fourier coefficients of $T_{\kappa,\eta}(z)$ and $G_{\kappa,\eta}(z)$ by $\{L_{\kappa,\eta,m}\mid m\in\Z\}$ and $\{G_{\kappa,\eta,n}\mid n\in\Z-\half\}$ respectively.  
Let $\mathcal{H}^{\kappa,\eta}\subseteq\mathcal{H}$ be the completions of the vector subspace $V^{\kappa,\eta}$ of $V$ arising from those Fourier coefficients applied to the vacuum vector $\Omega$. 
Note that $L_{\kappa,\eta,0}=L_0+\frac{\kappa^2+\eta^2}{2}$.
It follows that the orthogonal projection $E_{\kappa,\eta}$ from $\mathcal{H}$ to $\mathcal{H}^{\kappa,\eta}$ restricts to an orthogonal projection from $V$ to $V^{\kappa,\eta}$.
As a consequence, $E_{\kappa,\eta}$ is in $W^*(\{T_{\kappa,\eta}(f), G_{\kappa,\eta}(g)\})'$.
Set the operators $\widehat{T_{\kappa,\eta}}(f)$ and $\widehat{G_{\kappa,\eta}}(g)$ on $\mathcal{H}^{\kappa,\eta}$ as the closures of the operators $T_{\kappa,\eta}(f)\restriction_{V^{\kappa,\eta}}$ and $G_{\kappa,\eta}(g)\restriction_{V^{\kappa,\eta}}$ respectively.
Similarly to what done in \eqref{eq:defin_net_Bkappa}, we can define $\B^{\kappa,\eta}$ as the family of von Neumann algebras on $\mathcal{H}$
\begin{equation}  
	\B^{\kappa,\eta}(I):=W^*\left(\left\{ T_{\kappa,\eta}(f) \,,\,\, G_{\kappa,\eta}(g) \,\left|\,
	\begin{array}{l}
		f\in C^\infty(S^1)\,,\,\,\mathrm{supp}f\subset I \\
		g\in C^\infty_\chi(S^1) \,,\,\,\mathrm{supp}g\subset I
	\end{array} 
	\right.\right\}\right) 
\end{equation}
for all $I\in\J$.
Then for all $I\in\J$, we have that $\B^{\kappa,\eta}(I)\subseteq \A_V(I)$. Moreover, let $\A^{\kappa,\eta}$ be the family of von Neumann algebras on $\mathcal{H}^{\kappa,\eta}$
\begin{equation}  
	\A^{\kappa,\eta}(I):=W^*\left(\left\{ \widehat{T_{\kappa,\eta}}(f) \,,\,\, \widehat{G_{\kappa,\eta}}(g) \,\left|\,
	\begin{array}{l}
		f\in C^\infty(S^1)\,,\,\,\mathrm{supp}f\subset I \\
		g\in C^\infty_\chi(S^1) \,,\,\,\mathrm{supp}g\subset I
	\end{array} 
	\right.\right\}\right) 
\end{equation}
for all $I\in\J$. Then for all $I\in\J$, there is an isomorphism $p^{\kappa,\eta}_I:\B^{\kappa,\eta}(I)\to \A^{\kappa,\eta}(I)$. 
We denote by $U_{\kappa,\eta}$ the positive-energy strongly continuous projective unitary representation of $\Diff^+(S^1)^{(\infty)}$ on $\mathcal{H}^{\kappa,\eta}$ obtained integrating the representation of the Virasoro algebra given by $\{L_{\kappa,\eta,m}\mid m\in\Z\}$, see \cite{Tol99}, see also \cite[Theorem 3.4]{CKLW18} and references therein.
Following the argument used in \cite[pp.\ 1100--1103]{CKL08}, see also \cite[Proposition 6.4]{CKLW18} and \cite[Proposition 4.24 and Remark 4.25]{CGH}, it can be proved that the family $\A^{\kappa,\eta}$ is diffeomorphism covariant with respect to $U_{\kappa,\eta}$.

Now, the strategy is to implement the transformation $\varphi_{\kappa,\eta}$ introduced in Subsection \ref{subsubsec:step1} with an argument as for \textit{BMT representations}\label{bmt_representations}, see e.g.\ \cite[Section III]{BMT88}, \cite[Section 2 and Section 3]{BMT90}, \cite[Section 3.1]{GLW98}, \cite[Section 4]{Car03}, \cite[Section 4]{Car04}. 
One can define a soliton $\alpha^{\kappa,\eta}$ of $\A_V$ such that for all $I\in\J^\R$, all $f\in C^\infty_c(\pSone,\R)$ and all $g\in C^\infty_{\chi,c}(\pSone,\R)$ with supports contained in $I$,
\begin{align}
		\alpha^{\kappa,\eta}_I(e^{iY(\bos,f)})
			&=
		W(u_{\kappa,\eta,I})e^{iY(\bos,f)}W(u_{\kappa,\eta,I})^*
		=e^{i\widetilde{\rho}_{\kappa,\eta}[f]} e^{iY(\bos,f)} \\
		\alpha^{\kappa,\eta}_I(e^{iY(\fer,g)})
			&= e^{iY(\fer,g)} 
\end{align}
where
\begin{equation}
 	\widetilde{\rho}_{\kappa,\eta}[f]:=\oint_{S^1}(\kappa\widetilde{\rho}(z)+\eta)f(z)\frac{\mathrm{d}z}{2\pi iz} 
\end{equation}
and the \textit{Weyl operator} $W(u_{\kappa,\eta,I}):=e^{iY(\bos,u_{\kappa,\eta,I})}$ is given by a suitable function $u_{\kappa,\eta,I}\in C^\infty_c(\pSone)$ with $\mathrm{supp}u_{\kappa,\eta,I}$ contained in some $J\in\J^\R$ satisfying $\overline{I}\subseteq J$. 
Composing for all $I\in\J^\R$, the restriction of $\alpha^{\kappa,\eta}_I$ to $\B^\kappa(I)$ with $(p^{\kappa}_I)^{-1}$ and $p^{\kappa,\eta}_I$ defined above, we obtain a soliton $\pi^{\kappa,\eta}$ of $\A_{V^\kappa}$ on $\mathcal{H}^{\kappa,\eta}$, that is a family of normal representations
\begin{equation}  \label{eq:soliton_kappa_eta}
	\pi^{\kappa,\eta}_I:
	\A_{V^\kappa}(I)\to \A^{\kappa,\eta}(I)\subseteq B(\mathcal{H}^{\kappa,\eta})
	\,,\quad I\in\J^\R
\end{equation}
such that for all $I\in\J^\R$, all $f\in C^\infty_c(\pSone,\R)$ and all $g\in C^\infty_{\chi,c}(\pSone,\R)$ with supports contained in $I$,
\begin{equation}
		\pi^{\kappa,\eta}_I(e^{i\widehat{Y}(\widehat{\nu}^\kappa,f)})
			= e^{i\widehat{T_{\kappa,\eta}}(f)} 
			\quad\mbox{and}\quad
		\pi^{\kappa,\eta}_I(e^{i\widehat{Y}(\widehat{\tau}^\kappa,g)})
		= e^{i\widehat{G_{\kappa,\eta}}(g)} 
\end{equation}
Furthermore, we also have that for all $I\in\J^\R$, all $\gamma\in \Diff^+(S^1)^{(\infty)}$ such that $\dot{\gamma}I\in \J^\R$ and all $X\in \A_{V^\kappa}(I)$,
\begin{equation} \label{eq:covariance_soliton_kappa_eta}
	 U_{\kappa,\eta}(\gamma) \pi^{\kappa,\eta}_I(X) U_{\kappa,\eta}(\gamma)^*
	 =
	\pi^{\kappa,\eta}_{\dot{\gamma}I}(U_\kappa(\gamma) XU_\kappa(\gamma)^*)
\end{equation}
thanks to the covariance properties of $\A_{V^\kappa}$ and of $\A^{\kappa,\eta}$.

Then we need the following proposition:

\begin{prop} \label{prop:extending_solitons}
	Let $\A$ be an irreducible graded-local conformal net on $S^1$ and let $\pi$ be a soliton of $\A$. Suppose that there exists a representation $U_\pi$ of $\Mob(S^1)^{(\infty)}$ (resp.\ $\Diff^+(S^1)^{(\infty)}$) such that $U_\pi(r^{(\infty)}(2\pi))$ implements the grading. Suppose further that for all $\gamma$ $\in$ $\Mob(S^1)^{(\infty)}$ (resp.\ $\Diff^+(S^1)^{(\infty)}$) such that $\dot{\gamma}I\in\J^\R$,
	\begin{equation} \label{eq:covariance_soliton}
		U_\pi(\gamma) \pi_I(X) U_\pi(\gamma)^*
		=
		\pi_{\dot{\gamma}I}(U(\gamma) XU(\gamma)^*)
		\,.
	\end{equation}
	Then $\pi$ extends to a M\"{o}bius (resp.\ diffeomorphism) covariant representation of $\A$.
\end{prop}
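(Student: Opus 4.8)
The plan is to use the assumed covariance to transport $\pi$ onto every interval of $\J$, including those whose closure contains $-1$, and then to verify that the resulting family is a well-defined covariant representation. I treat the two cases at once, writing $G$ for $\Mob(S^1)^{(\infty)}$ (resp.\ $\Diff^+(S^1)^{(\infty)}$) and $\tilde p\colon G\to \Mob(S^1)$ (resp.\ $\Diff^+(S^1)$) for the covering map. In the diffeomorphism case $U$ and $U_\pi$ are only projective, but only adjoint actions $X\mapsto U(\gamma)XU(\gamma)^*$ occur below, so the scalar phases cancel and every identity is unaffected.

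First I extend $\pi$ to all of $\J$. For $I\in\J$ the closure $\overline I$ is a proper closed arc, so some rotation $\gamma\in G$ satisfies $\dot\gamma^{-1}I\in\J^\R$; I then set
\[
	\pi_I(X):=U_\pi(\gamma)\,\pi_{\dot\gamma^{-1}I}(U(\gamma)^*XU(\gamma))\,U_\pi(\gamma)^* \,,\qquad X\in\A(I) \,.
\]
Since $U(\gamma)\A(\dot\gamma^{-1}I)U(\gamma)^*=\A(I)$ by covariance of $\A$, the map $X\mapsto U(\gamma)^*XU(\gamma)$ is a normal $*$-isomorphism of $\A(I)$ onto $\A(\dot\gamma^{-1}I)$; composing it with the normal representation $\pi_{\dot\gamma^{-1}I}$ and with conjugation by $U_\pi(\gamma)$ produces a normal representation of $\A(I)$ on $\mathcal H_\pi$.

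The key point, which I expect to be the main obstacle, is that $\pi_I(X)$ does not depend on the chosen $\gamma$. If $\gamma_1,\gamma_2\in G$ both satisfy $\dot\gamma_j^{-1}I\in\J^\R$, put $\delta:=\gamma_1^{-1}\gamma_2$, $J:=\dot\gamma_1^{-1}I\in\J^\R$ and $Y:=U(\gamma_1)^*XU(\gamma_1)\in\A(J)$; then $\dot\delta^{-1}J=\dot\gamma_2^{-1}I\in\J^\R$ and $U(\gamma_2)^*XU(\gamma_2)=U(\delta)^*YU(\delta)$. The two candidate values of $\pi_I(X)$ coincide if and only if
\[
	\pi_J(Y)=U_\pi(\delta)\,\pi_{\dot\delta^{-1}J}(U(\delta)^*YU(\delta))\,U_\pi(\delta)^* \,,
\]
which is exactly the hypothesis \eqref{eq:covariance_soliton} applied to $\delta^{-1}$ (admissible because both $J$ and $\dot\delta^{-1}J$ lie in $\J^\R$). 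When $\dot\gamma_1=\dot\gamma_2$ the element $\delta$ lies in $\ker\tilde p$, and this instance of \eqref{eq:covariance_soliton} reduces to the assertion that $U_\pi(r^{(\infty)}(2\pi))$ implements the grading; thus the grading hypothesis is precisely what secures consistency on the kernel of the covering map. Taking $\gamma_1=e$ shows that the extension agrees with the original soliton on $\J^\R$, so the notation is unambiguous.

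The remaining properties are then formal. For isotony, given $I\subseteq J$ in $\J$ I pick one $\gamma$ with $\dot\gamma^{-1}J\in\J^\R$, whence $\dot\gamma^{-1}I\subseteq\dot\gamma^{-1}J$ also lies in $\J^\R$; the soliton restriction $\pi_{\dot\gamma^{-1}J}\!\restriction_{\A(\dot\gamma^{-1}I)}=\pi_{\dot\gamma^{-1}I}$ together with $U(\gamma)\A(\dot\gamma^{-1}I)U(\gamma)^*=\A(I)$ yields $\pi_J\!\restriction_{\A(I)}=\pi_I$, so $\pi$ is a representation of $\A$ on $\mathcal H_\pi$. For covariance, fix $\gamma\in G$ and $I\in\J$, choose $\alpha$ with $\dot\alpha^{-1}I\in\J^\R$ and set $\beta:=\gamma\alpha$, so that $\dot\beta^{-1}(\dot\gamma I)=\dot\alpha^{-1}I\in\J^\R$. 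Inserting the definition of $\pi_I$ and using $U_\pi(\gamma)U_\pi(\alpha)=U_\pi(\beta)$ together with $U(\beta)^*U(\gamma)XU(\gamma)^*U(\beta)=U(\alpha)^*XU(\alpha)$ (phases cancelling) identifies $U_\pi(\gamma)\pi_I(X)U_\pi(\gamma)^*$ with the defining expression, computed via $\beta$, for $\pi_{\dot\gamma I}(U(\gamma)XU(\gamma)^*)$. Hence the covariance relation holds for every $\gamma\in G$ and $I\in\J$, and $\pi$ is a M\"obius (resp.\ diffeomorphism) covariant representation of $\A$, graded by $U_\pi(r^{(\infty)}(2\pi))$.
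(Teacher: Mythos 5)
Your proof is correct and follows essentially the same route as the paper's: extend $\pi$ to all of $\J$ by conjugating with a rotation that moves the interval into $\J^\R$, use the covariance hypothesis (together with the grading on $\ker\tilde{p}$) to check independence of the choice, and then verify covariance by the same group-element manipulation. You are merely more explicit about well-definedness and isotony, which the paper leaves as ``not hard to verify.''
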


\begin{proof} 
	This is a straightforward adaptation of the proof of \cite[Lemma 2.6]{CHKLX15}. 
	Let $I\in\J$, then there exists $t\in\R$ such that $r(t)I\in\J^\R$. For all $X\in \A(I)$, we define
	\begin{equation}  \label{eq:representation_extension}
		\pi_I(X):= U_\pi(r^{(\infty)}(-t))\pi_{r(t)I}(U(r^{(\infty)}(t))XU(r^{(\infty)}(-t))) U_\pi(r^{(\infty)}(t)) \,.
	\end{equation}
	Using the fact that $U_\pi$ implements the grading, it is not hard to verify that \eqref{eq:representation_extension} is well-defined, that is it does not depend on the choice of $t\in\R$. 
	As far as the covariance is concerned, let $\gamma\in \Mob(S^1)^{(\infty)}$ (resp.\ $\Diff^+(S^1)^{(\infty)}$) and $I\in \J$. Pick $t,s\in\R$ such that $r(t)I\in\J^\R$ and that $r(s)\dot{\gamma} I\in\J^\R$.
	Then we have that:
	(note that to lighten the notation, we will not distinguish between $\gamma, r^{(\infty)}(t), r^{(\infty)}(s) \in \Mob(S^1)^{(\infty)}$ (resp.\ $\Diff^+(S^1)^{(\infty)}$) and their corresponding projections onto $\dot{\gamma}, r(t), r(s) \in \Mob(S^1)$ (resp.\ $\Diff^+(S^1)$), leaving the correct interpretation of the symbols to the reader)
	\begin{equation}
		\begin{split}
			U_\pi(\gamma)\pi_I(X)U_\pi(\gamma)^*
				&=
			U_\pi(\gamma r(-t))\pi_{r(t)I}(U(r(t))XU(r(-t)))U_\pi(\gamma r(-t))^* \\
				&=
				\begin{split}
					U_\pi(r(-s))U_\pi(r(s)\gamma r(-t))
					\pi_{r(t)I} 
						&
					(U(r(t))XU(r(-t))) \cdot \\
			     		&
			     	\cdot U_\pi(r(s)\gamma r(-t))^*U_\pi(r(s))
				\end{split} \\
				&=
			U_\pi(r(-s))\pi_{r(s)\gamma I} 
			(U(r(s)\gamma)XU(r(s)\gamma)^*) U_\pi(r(s)) \\
				&=
			\pi_{\gamma I} 
			(U(\gamma)XU(\gamma)^*) 
		\end{split}
	\end{equation}
	where we have used the covariance \eqref{eq:covariance_soliton} for the third equality and definition \eqref{eq:representation_extension} for the first and last equality.
	This shows that \eqref{eq:representation_extension} gives rise to the desired M\"{o}bius (resp.\ diffeomorphism) covariant representation of $\A$ extending $\pi$.
\end{proof}

Proposition \ref{prop:extending_solitons} applies to the soliton $\pi^{\kappa,\eta}$ defined by \eqref{eq:soliton_kappa_eta} thanks to the covariance \eqref{eq:covariance_soliton_kappa_eta}.
Standard arguments show that, as a consequence of the irreduciblity of the underlying representation of the Neveu-Schwarz algebra $NS$, $\pi^{\kappa,\eta}$ is an irreducible representation of $\A_{V^\kappa}$.
In conclusion, letting $\kappa$ and $\eta$ vary in $\R$, it follows that:

\begin{theo}  \label{theo:irreprs_N=1_net}
	Let $c\geq \frac{3}{2}$. Then for all real numbers $h\geq \frac{c-\frac{3}{2}}{24}$, there exists an irreducible representation $\pi^h$ of the $N=1$ super-Virasoro net $\SVirnet_c$ of lowest energy $h$. 
\end{theo}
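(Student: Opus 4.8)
The plan is to obtain $\pi^h$ as one of the solitons $\pi^{\kappa,\eta}$ constructed in Subsection~\ref{subsubsec:step2}, after choosing the parameters $\kappa,\eta\in\R$ so that the central charge and the lowest weight of the underlying Neveu-Schwarz representation come out to be $c$ and $h$ respectively.

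First I would fix the central charge. Since $c(\kappa)=\frac{3}{2}+12\kappa^2$ is an increasing bijection from $[0,+\infty)$ onto $[\frac{3}{2},+\infty)$, the hypothesis $c\geq\frac{3}{2}$ lets me set $\kappa:=\sqrt{\frac{c-\frac{3}{2}}{12}}\geq 0$, the unique non-negative solution of $c(\kappa)=c$. By Theorem~\ref{theo:unitarity_BS_representation_NS} the VOSA $V^\kappa$ is unitarily isomorphic to $V^c(NS)$, hence $\A_{V^\kappa}$ is isomorphic to $\SVirnet_c$, and it is enough to produce an irreducible representation of $\A_{V^\kappa}$ of lowest energy $h$.

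Next I would fix the lowest weight. With $\kappa$ as above one has $\frac{\kappa^2}{2}=\frac{c-\frac{3}{2}}{24}$, and the lowest weight carried by $\pi^{\kappa,\eta}$ is $h(\kappa,\eta)=\frac{\kappa^2+\eta^2}{2}$; thus, as $\eta$ runs over $\R$, the quantity $h(\kappa,\eta)$ runs over the half-line $[\frac{c-\frac{3}{2}}{24},+\infty)$. Given $h\geq\frac{c-\frac{3}{2}}{24}$, the number $\eta:=\sqrt{2h-\kappa^2}$ is real (this is exactly where the bound on $h$ is used) and satisfies $h(\kappa,\eta)=h$. With these parameters fixed, I would invoke the soliton $\pi^{\kappa,\eta}$ of $\A_{V^\kappa}$ from \eqref{eq:soliton_kappa_eta}. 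Its covariance property \eqref{eq:covariance_soliton_kappa_eta}, together with the fact that $U_{\kappa,\eta}(r^{(\infty)}(2\pi))$ implements the grading on $\mathcal{H}^{\kappa,\eta}$, places it in the scope of Proposition~\ref{prop:extending_solitons}, which extends it to a diffeomorphism covariant representation of $\A_{V^\kappa}\cong\SVirnet_c$; this is the representation $\pi^h$. Its rotation generator is $L_{\kappa,\eta,0}$, whose spectrum on $\mathcal{H}^{\kappa,\eta}$ has minimum $h(\kappa,\eta)=h$ attained on $\Omega$, so $\pi^h$ has lowest energy $h$. Irreducibility of $\pi^h$ follows from the irreducibility of the underlying lowest weight representation of $NS$ generated by the Fourier coefficients of $T_{\kappa,\eta}(z)$ and $G_{\kappa,\eta}(z)$.

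The only point requiring real attention is the surjectivity of the map $(\kappa,\eta)\mapsto(c(\kappa),h(\kappa,\eta))$ onto the admissible region $\{\,c\geq\frac{3}{2},\ h\geq\frac{c-\frac{3}{2}}{24}\,\}$: this is precisely the saturated Neveu-Schwarz unitarity bound and explains the appearance of $\frac{c-\frac{3}{2}}{24}$ in the statement. Everything else is a bookkeeping assembly of the constructions and covariance results established in Subsections~\ref{subsubsec:step1}--\ref{subsubsec:step2}, so I do not expect any genuinely new analytic difficulty beyond those already handled there, in particular the linear energy bounds and the self-adjointness obtained in Lemma~\ref{lem:properties_smeared_Tkappa_Gkappa}.
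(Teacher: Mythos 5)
Your proposal is correct and follows essentially the same route as the paper: the theorem is stated there precisely as the conclusion of Subsections \ref{subsubsec:step1}--\ref{subsubsec:step2}, obtained by letting $\kappa$ and $\eta$ vary over $\R$ in the soliton $\pi^{\kappa,\eta}$, extended via Proposition \ref{prop:extending_solitons} and shown irreducible from the irreducibility of the underlying $NS$ lowest weight representation. Your explicit parameter matching $\kappa=\sqrt{(c-\tfrac{3}{2})/12}$, $\eta=\sqrt{2h-\kappa^2}$, and the identification of the bound $h\geq\frac{c-\frac{3}{2}}{24}$ with $h\geq\kappa^2/2$, is exactly the bookkeeping the paper leaves implicit.
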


\subsection{\texorpdfstring{$N=2$ super-Virasoro nets}{N=2 super-Virasoro nets}}  \label{subsec:N=2_super-virasoro_nets}
The simple minimal $W$-algebras $W_k(\g,f)$ with $\g=\mathfrak{spo}(2|2)$ are the (simple) $N=2$ \textit{super-Virasoro VOSAs} $V^{c(k)}(N2)$ with central charges $c(k)$ as in the left hand side of \eqref{eq:unitary_series_N=2}, see \cite[Section 7]{KRW03} and \cite[Section 8.3]{KW04}.
These VOSAs are usually obtained through representations of the $N=2$ \textit{Lie superalgebra} $N2$, see \cite[Eq.s (5.9.4a) and (5.9.4b)]{Kac01}.
The \textit{unitary series} for $V^{c(k)}(N2)$ is as follows:
\begin{equation}  \label{eq:unitary_series_N=2}
	c(k)=-3(2k+1)
	\left\{\begin{array}{ll}
		=3\left(1-\frac{2}{p}\right) \,, &  k=\frac{1}{p}-1\,,\,\, p\in \Z_{\geq 2} \\
		=3 \,,&  k=-1 \\
		> 3 \,,& k<-1
	\end{array}	\right. 
\end{equation}
Moreover, $V^{c(k)}(N2)$ turns out to be strongly rational for $c(k)$ in the \textit{unitary discrete series}, that is the first row in \eqref{eq:unitary_series_N=2}, whereas it is not rational for $c(k)\geq 3$, see \cite{BFK86, DPYZ86, Ioh10}, cf.\ also \cite[Theorem 3.2]{CHKLX15}.

We use $\mathrm{SVir2}_{c(k)}$ to denote the $N=2$ \textit{super-Virasoro net} $\A_{V^{c(k)}(N2)}$, see \cite[Example 7.17]{CGH} and references therein. As in the case of the Neveu-Schwarz Lie superlagebra $NS$ in Section \ref{subsec:N=1_super-virasoro_nets}, to the best of our knowledge, it is still unproved that $\A_{V^{c(k)}(N2)}$ is not completely rational for central charges $c(k)\geq3$, whereas the complete rationality for all cases with $c(k)<3$ is actually proved, using the coset construction of these models, see \cite[Section 5]{CHKLX15}.
Also in this case, we remark that a complete classification of irreducible graded-local conformal net extensions of $\operatorname{SVir2}_{c(k)}$ with $c(k)$ in the unitary discrete series was obtained in \cite[Section 6]{CHKLX15} and see also \cite[Theorem 6.3]{CGGH23} for the corresponding statement in the VOSA setting.

For the construction of representations of $\mathrm{SVir2}_c$ with $c\geq3$, consider the graded tensor product $V:=M(1)^{\hat{\otimes}2}\hat{\otimes} F^{\hat{\otimes}2}$ of two copies of the Heisenberg VOA with two copies of the real free fermion VOSA, see references in Remark \ref{rem:unitary_range_and_collapsing} and Example \ref{ex:real_free_fermion} respectively. 
The unitary structure on $V$ is the one induced by the ones of $M(1)$ and $F$ as explained in Section \ref{section:preliminaries}.
Let $\bos^\pm$ and $\fer^\pm$ be four Hermitian primary generators with norm one for the two copies of $M(1)$ and the two of $F$ in $V$ of conformal weights $1$ and $\half$ respectively. 
This implies that their images in $V$, which we denote with the same symbols, are still Hermitian and primary.
By \cite[Example 5.9d]{Kac01}, where $\alpha^\pm$ and $\varphi^\pm$ there are chosen as $\frac{1}{\sqrt{2}}(\bos^+\pm i\bos^-)$ and $\frac{1}{\sqrt{2}}(\fer^+\pm i\fer^-)$ here respectively,
there is a family of parameter $\varsigma\in \C$ in $V$ of vectors giving rise to a family of $N=2$ super-Virasoro algebras with central charges $c(\varsigma)=3+6\varsigma^2$:
\begin{equation} \label{eq:bosonsxfermions_representations_N2_pm} 
	\begin{split}
		\nu^\varsigma
			&:=
		\frac{1}{2}\left( (\bos^+_{-1})^2 + (\bos^-_{-1})^2 + \fer^+_{-\frac{3}{2}}\fer^+_{-\half} + \fer^-_{-\frac{3}{2}}\fer^-_{-\half} -i\sqrt{2}\varsigma \bos^-_{-2} \right)\Omega \\
		\tau^{+,\varsigma}
			&:=
		\half\left(\bos^+_{-1}\fer^+_{-\half} - \bos^-_{-1}\fer^-_{-\half}+\sqrt{2}\varsigma \fer^+_{-\frac{3}{2}} \right)\Omega
		+\frac{i}{2}\left(\bos^+_{-1}\fer^-_{-\half} + \bos^-_{-1}\fer^+_{-\half}+\sqrt{2}\varsigma \fer^-_{-\frac{3}{2}}\right)\Omega \\
		\tau^{-,\varsigma}
			&:=
		\half\left(\bos^+_{-1}\fer^+_{-\half} - \bos^-_{-1}\fer^-_{-\half}-\sqrt{2}\varsigma \fer^+_{-\frac{3}{2}} \right)\Omega
		-\frac{i}{2}\left(\bos^+_{-1}\fer^-_{-\half} + \bos^-_{-1}\fer^+_{-\half}-\sqrt{2}\varsigma \fer^-_{-\frac{3}{2}}\right)\Omega \\
		j^\varsigma
			&:=
		\left(-i\fer^+_{-\half}\fer^-_{-\half}-\varsigma\sqrt{2}\bos^+_{-1}\right) \Omega \,.
		\end{split}
\end{equation}
For all $\varsigma\in\C$, set $\tau^{1,\varsigma}:=\frac{1}{\sqrt{2}}(\tau^{+,\varsigma}+\tau^{-,\varsigma})$ and $\tau^{2,\varsigma}:=\frac{-i}{\sqrt{2}}(\tau^{+,\varsigma}-\tau^{-,\varsigma})$.
Then we consider the following family:
\begin{equation} \label{eq:bosonsxfermions_representations_N2_12} 
	\begin{split}
		\nu^\varsigma
		&:=
		\frac{1}{2}\left( (\bos^+_{-1})^2 + (\bos^-_{-1})^2 + \fer^+_{-\frac{3}{2}}\fer^+_{-\half} + \fer^-_{-\frac{3}{2}}\fer^-_{-\half} -i\sqrt{2}\varsigma \bos^-_{-2} \right)\Omega \\
		\tau^{1,\varsigma}
		&:=
		\left(\frac{\bos^+_{-1}\fer^+_{-\half} - \bos^-_{-1}\fer^-_{-\half}}{\sqrt{2}}+i\varsigma \fer^-_{-\frac{3}{2}} \right)\Omega \\
		\tau^{2,\varsigma}
		&:=
		\left(\frac{\bos^+_{-1}\fer^-_{-\half} + \bos^-_{-1}\fer^+_{-\half}}{\sqrt{2}}-i\varsigma \fer^+_{-\frac{3}{2}}\right)\Omega \\
		j^\varsigma
		&:=
		\left(-i\fer^+_{-\half}\fer^-_{-\half}-\varsigma\sqrt{2}\bos^+_{-1}\right) \Omega \,.
	\end{split}
\end{equation}
Note that for $\varsigma=0$, we obtain a realization of the unitary $N=2$ super-Virasoro VOSA $V^3(N2)$ of central charge $c(0)=3$ as unitary subalgebra of $V$. Accordingly, set $\nu:=\nu^0$, that is the conformal vector of $V$, $\tau^1:=\tau^{1,0}$, $\tau^2:=\tau^{2,0}$ and $j:=j^0$.
We also use the standard notation 
\begin{equation}  \label{eq:N2_vertex_operators}
	\begin{split}
		Y(\nu,z)=\sum_{m\in\Z}L_mz^{-m-2} \,,&\quad
		Y(j,z)=\sum_{n\in\Z}J_nz^{-n-1} \\
		Y(\tau^1,z)=\sum_{r\in\Z-\half}G^1_rz^{-r-\frac{3}{2}} \,,&\quad
		Y(\tau^2,z)=\sum_{s\in\Z-\half}G^2_sz^{-s-\frac{3}{2}} \,.
	\end{split}
\end{equation}
We recall for further use the \textit{$N2$ commutation relations} with a general central charge $c\in\C$, satisfied by the Fourier coefficients in \eqref{eq:N2_vertex_operators} with $c=3$, see e.g.\ \cite[Definition 3.1]{CHKLX15}: 
\begin{equation}  \label{eq:N2_cr}
	\begin{split}
		\forall m,n\in \Z \qquad
		[L_m, L_n]
		&=
		(m-n)L_{m+n}+\frac{c(m^3-m)}{12}\delta_{m,-n} 
		\\
		\forall l\in\{1,2\} \,\,\,\forall m\in\Z\,\,\,\forall r\in\Z-\half \qquad
		[L_m, G^l_r]
		&= 
		\left(\frac{m}{2}-r\right) G^l_{m+r} 
		\\
		\forall l\in\{1,2\} \,\,\,\forall r,s\in\Z-\half \qquad
		[G^l_r, G^l_s]
		&=
		2L_{r+s}+\frac{c}{3}\left(r^2-\frac{1}{4}\right)\delta_{r,-s} 
		\\
		\forall r,s\in\Z-\half\qquad
		[G^1_r, G^2_s]
		&=
		i(r-s)J_{r+s} 
		\\
		\forall r\in\Z-\half \,\,\, \forall n\in\Z \qquad
		[G^1_r, J_n]
		&=
		- i G^2_{r+n}
		\\
		\forall r\in\Z-\half \,\,\, \forall n\in\Z \qquad
		[G^2_r, J_n]
		&=
		i G^1_{r+n}
		\\
		\forall m,n\in\Z\qquad
		[L_m, J_n]
			&=
		-nJ_{m+n}
		\\
		\forall m,n\in \Z  \qquad
		[J_m,J_n]
			&=
		\frac{c}{3}m \delta_{m,-n} 
		\,.
	\end{split}
\end{equation}
Note that for all $\varsigma\in\C$, $\{\nu^\varsigma,\tau^{1,\varsigma}\}$ as well as $\{\nu^\varsigma,\tau^{2,\varsigma}\}$ satisfies the $NS$ commutation relations \eqref{eq:NS_cr}.
Moreover, $\nu$, $\tau^1$, $\tau^2$ and $j^\varsigma$ for all $\varsigma\in\R$ are Hermitian vectors.

It should be clear that choosing $\varsigma=-\sqrt{2}\kappa$ with $\kappa\in\R$ and thus $c(\kappa):=c(\varsigma)=3+12\kappa^2$, we have a similar situation to the one in Section \ref{subsec:N=1_super-virasoro_nets}, so that we can try to proceed similarly to get the non-complete rationality of $\mathrm{SVir2}_c$ for all $c\geq 3$. 
Accordingly, we use the formalism introduced in Subsection \ref{subsubsec:formal_series_fields} and we refer to Subsection \ref{subsubsec:step1} and Subsection \ref{subsubsec:step2} for unexplained notations and details.
Set $J^\pm(z):=zY(\bos^\pm,z)$, $\Phi^\pm(z):=z^\half Y(\fer^\pm,z)$ and their shifted normal powers similarly to what has been done in \eqref{eq:shifted_normal_powers}.
Define also 
\begin{equation}
	\begin{split}
		T(z)
			&:=
		z^2Y(\nu,z)= \half:(J^+)^2:(z)+\half:(J^-)^2:(z)+\half:\partial\Phi^+\Phi^+:(z)+\half:\partial\Phi^-\Phi^-:(z) \\
		G^1(z)
			&:=
		z^\frac{3}{2}Y(\tau^1,z) = \frac{1}{\sqrt{2}}:J^+\Phi^+:(z)-\frac{1}{\sqrt{2}}:J^-\Phi^-:(z)\\
		G^2(z)
			&:=
		z^\frac{3}{2}Y(\tau^2,z) = \frac{1}{\sqrt{2}}:J^+\Phi^-:(z)+\frac{1}{\sqrt{2}}:J^-\Phi^+:(z)\\
		 J(z)
		 	&:=
		 zY(j,z)=-i:\Phi^+\Phi^-:(z) \,.
	\end{split}
\end{equation}
Then we rewrite \eqref{eq:bosonsxfermions_representations_N2_12} in terms of fields:
\begin{equation}  \label{eq:bosonsxfermions_representations_N2_fields}
	\begin{split}
		\widetilde{T}(z;\kappa)
			&:=
		z^2Y(\nu^\varsigma,z) 
		=T(z)-i\kappa\left(J^-(z)+i(J^-)'(z)\right) \\
		\widetilde{G^1}(z;\kappa)
			&:=
		z^\frac{3}{2}Y(\tau^{1,\varsigma},z)
		= G^1(z)+\frac{i\kappa}{\sqrt{2}}\left(\Phi^-(z)+2i(\Phi^-)'(z)\right)\\
		\widetilde{G^2}(z;\kappa)
		&:=
		z^\frac{3}{2}(\tau^{2,\varsigma},z)
		= G^2(z)-\frac{i\kappa}{\sqrt{2}}\left(\Phi^+(z)+2i(\Phi^+)'(z)\right)\\
		\widetilde{J}(z;\kappa)
			&:=
		zY(j,z)
		=J(z)+2\kappa J^+(z) \,.
	\end{split}
\end{equation}
The Fourier coefficients of  \eqref{eq:bosonsxfermions_representations_N2_fields}, namely 
$$
\left\{\left.\widetilde{L}_{\kappa,m}\,,\,\,\widetilde{G^1}_{\kappa,r}\,,\,\,\widetilde{G^2}_{\kappa,s}\,,\,\,\widetilde{J}_{\kappa,n}\,\right|\, m,n\in\Z\,,\,\, r,s\in\Z-\half\right\}
$$ 
induce lowest weight representations of the $N=2$ super-Virasoro algebra $N2$ on $V$ with central charges $c(\kappa)=3+12\kappa^2$ and lowest weight vector $\Omega$ of lowest weight $(h,q)=(0,0)$, where $\widetilde{L}_{\kappa,0}\Omega=h\Omega$ and $\widetilde{J}_{\kappa,0}\Omega=q\Omega$.
Note that unless $\kappa=0$ in \eqref{eq:bosonsxfermions_representations_N2_fields}, only $\widetilde{J}(z;\kappa)$ is symmetric.
This implies that for all non-zero $\kappa\in\R$, this representation  is not unitary with respect to the scalar product $\scalar$ on $V$.
Therefore, we have to perform a transformation of kind $\varphi_{-\kappa,i\kappa}$ applied to $J^-(z)$ to get a $N2$ version of \cite[Eq.\ (4.6)]{BS90} as we have done for the $NS$ case obtaining \eqref{eq:bosonxfermion_representations_NS_fields_BS}. 
What we get is:
\begin{equation}  \label{eq:bosonsxfermions_representations_N2_BS}
	\begin{split}
		T_\kappa(z)
			&:= 
		T(z)+\kappa \left((J^-)'(z)-\widetilde{\rho}(z)J^-(z)\right) \\ 
		G^1_\kappa(z)
			&:=
		G^1(z)-\frac{\kappa}{\sqrt{2}}\left(2(\Phi^-)'(z)-\widetilde{\rho}(z)\Phi^-(z) \right)\\
		G^2_\kappa(z)
			&:=
		G^2(z)+\frac{\kappa}{\sqrt{2}}\left(2(\Phi^+)'(z)-\widetilde{\rho}(z)\Phi^+(z) \right)\\
		J_\kappa(z)
			&:=
		J(z)+2\kappa J^+(z) \,.
	\end{split}
\end{equation}
Again, their Fourier coefficients:
\begin{equation}  \label{eq:bosonsxfermions_representations_N2_fields_BS_fourier}
	\begin{split}
		L_{\kappa, m}
			&= 
		L_m-i\kappa\left[(1+m)J^-_m+2\sum_{j<0}(-1)^jJ^-_{m-j}\right]
		,\quad m\in\Z\\
		G^1_{\kappa,r}
			&=  
		G^1_r +i\frac{\kappa}{\sqrt{2}}\left[(1+2r)\Phi^-_r+2\sum_{j<0}(-1)^j\Phi^-_{r-j}\right]
		,\quad  r\in\Z-\half \\
		G^2_{\kappa,s}
			&=  
		G^2_s -i\frac{\kappa}{\sqrt{2}}\left[(1+2s)\Phi^+_s+2\sum_{j<0}(-1)^j\Phi^+_{s-j}\right]
		,\quad s\in\Z-\half \\
		J_{\kappa,n}
			&=
		J_n+2\kappa J^+_n \,,\quad  n\in\Z 
	\end{split}
\end{equation}
give rise to lowest weight representations of the $N=2$ super-Virasoro algebra $N2$ on $V$ with central charges $c(\kappa)=3+12\kappa^2$ and lowest weight vector $\Omega$ of lowest weight $(h,q)=(0,0)$.
It is easy to see that the fields in \eqref{eq:bosonsxfermions_representations_N2_BS} satisfy a weak symmetry as the one in \eqref{eq:weak_symmetry_NS}. Actually, we note that $J_\kappa(z)$ is also symmetric. Therefore, we can state the following important result:

\begin{theo} \label{theo:unitarity_BS_representation_N2}
	Let $\left\{L_m \,,\,\, G^1_r \,,\,\, G^2_s \,,\,\, J_n\mid m,n\in\Z \,,\,\, r,s\in\Z-\half\right\}$ be a representation of the $N=2$ super-Virasoro algebra $N2$ on a vector superspace $V$ with central charge $c\geq 3$ and such that: 
	\begin{itemize}
		\item [(i)] there exists a cyclic lowest weight vector $\Omega$ such that $L_0\Omega=0$ and $J_0\Omega=0$;
		
		\item [(ii)] there exists a scalar product $\scalar$ with respect to $J(z)=\sum_{n\in\Z}J_nz^{-n}$ is symmetric, that is $J_n^\dagger=J_n$ for all $n\in\Z$, and such that the fields $T(z)=\sum_{m\in\Z} L_mz^{-m}$ and $G^i(z)=\sum_{r\in\Z-\half} G^i_rz^{-s}$ for all $i\in\{1,2\}$ satisfy
		\begin{equation}  \label{eq:weak_symmetry_N2_theo}
				\left(p(z)T(z)\right)^\dagger=\overline{p(z)}T_\kappa(z)
				\quad\mbox{ and }\quad
				\left(p(z)z^\half G^i(z)\right)^\dagger=\overline{p(z)}z^{-\half}G^i(z)
		\end{equation}
		for all trigonometric polynomials $p(z)$ such that $p(-1)=0$. 
	\end{itemize}
	Then $L_m^\dagger=L_{-m}$ and $(G^i_r)^\dagger=G^i_{-r}$ for all $i\in\{1,2\}$, for all $m\in\Z$ and all $r\in\Z-\half$, that is the representation is unitary.
	
	Furthermore, this representation has a structure of simple unitary VOSA with central charge $c$ (unitarily) isomorphic to the $N=2$ super-Virasoro VOSA $V^c(N2)$.
\end{theo}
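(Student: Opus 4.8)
The plan is to follow closely the proof of Theorem~\ref{theo:unitarity_BS_representation_NS} for the Neveu--Schwarz case, upgrading the weak symmetry \eqref{eq:weak_symmetry_N2_theo} of $T(z)$ and of the $G^i(z)$ to genuine symmetry, while exploiting that $J(z)$ is already fully symmetric, i.e.\ $J_n^\dagger=J_{-n}$ for all $n\in\Z$. First I would extract from \eqref{eq:weak_symmetry_N2_theo}, exactly as in the derivation of \eqref{eq:dagger_virasoro_field_coefficients}--\eqref{eq:dagger_superconformal_field_coefficients}, the difference relations $(L_n-(-1)^{n-m}L_m)^\dagger=L_{-n}-(-1)^{n-m}L_{-m}$ and, for each $i\in\{1,2\}$, $(G^i_{n+\half}-(-1)^{n-m}G^i_{m+\half})^\dagger=G^i_{-n-\half}-(-1)^{n-m}G^i_{-m-\half}$, by testing against $p(z)=z^n-(-1)^{n-m}z^m$. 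As in \cite[Lemma~5]{CTW23} one then gets $L_{-1}\Omega=0$, and the $N2$ relations \eqref{eq:N2_cr} give $G^i_{-\half}\Omega=[G^i_\half,L_{-1}]\Omega=0$ for $i=1,2$.

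The heart of the argument is to prove $L_0^\dagger=L_0$ by showing that distinct $L_0$-eigenspaces are mutually orthogonal, whence $L_m^\dagger=L_{-m}$ follows from the difference relation with $m=0$. Since $\Omega$ is a cyclic lowest weight vector, the representation is spanned by monomials $L_{-m_1}\cdots L_{-m_s}\,J_{-k_1}\cdots J_{-k_u}\,G^{i_1}_{-r_1}\cdots G^{i_t}_{-r_t}\Omega$, and I would run an induction on a total grade obtained by assigning weights (say $4$ to each $L$-mode, $2$ to each $J$-mode and $3$ to each $G$-mode), chosen so that every bracket in \eqref{eq:N2_cr} strictly lowers the grade. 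Splitting a pairing $(\Psi|\Psi')$ of vectors with distinct $L_0$-eigenvalues into cases according to the leading block: if some $L$-mode is present the computation is identical to Case~1 of the proof of \cite[Theorem~6]{CTW22} (using the $L$-difference relation, $L_{-1}\Omega=0$, and that \eqref{eq:N2_cr} lowers the grade); if no $L$-mode but some $J$-mode occurs, full symmetry $J_{-k}^\dagger=J_k$ moves it directly to the other side as an annihilation-type operator and the eigenvalues stay distinct, which is the easiest case; and if only $G$-modes occur I would adapt Case~2 of Theorem~\ref{theo:unitarity_BS_representation_NS}, introducing for the leading $G^i_{-r}$ the operator $A$ as in \eqref{eq:operator_A_for_superconformal}, with the sign dictated by the eigenvalue comparison. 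The genuinely new feature is that commuting a $G$ past the remaining $G$'s produces, via $[G^i_r,G^i_s]\sim L_{r+s}$ and $[G^1_r,G^2_s]=i(r-s)J_{r+s}$, both Virasoro and current modes; each resulting term has strictly smaller grade and a controlled eigenvalue, so the inductive hypothesis applies.

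Once $L_m^\dagger=L_{-m}$ is established, the $G^i$-difference relation gives $(G^i_\half+G^i_{-\half})^\dagger=G^i_\half+G^i_{-\half}$, and \eqref{eq:N2_cr} then yields $(G^i_\half)^\dagger=[L_1,G^i_\half+G^i_{-\half}]^\dagger=[G^i_\half+G^i_{-\half},L_{-1}]=G^i_{-\half}$; feeding this back into the $G^i$-difference relation with $m=0$ gives $(G^i_r)^\dagger=G^i_{-r}$ for all $r\in\Z-\half$, while $J_n^\dagger=J_{-n}$ holds by hypothesis, so the representation is unitary. For the last statement I would invoke \cite[Theorem~4.5]{Kac01} to endow the span of $\Omega$ with a vertex superalgebra structure in which $L_{-1}$ plays the role of $T$, with conformal vector $\nu:=L_{-2}\Omega$ and generators $\tau^i:=G^i_{-\frac{3}{2}}\Omega$ and $j:=J_{-1}\Omega$; these are Hermitian by the symmetry just proved, so one obtains a simple unitary VOSA of CFT type with central charge $c$ by \cite[Proposition~3.29 and Proposition~3.10]{CGH}, hence unitarily isomorphic to $V^c(N2)$ by \cite[Example~7.17 and Proposition~3.14]{CGH}. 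The main obstacle is the orthogonality induction in the $G$-only case: keeping the bookkeeping of grades and eigenvalues consistent once two fermionic flavors generate both Virasoro and current modes, so that the sign choice in $A$ still separates the eigenvalues and each inductive step genuinely decreases the grade.
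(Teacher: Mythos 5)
Your proposal is correct and follows essentially the same route as the paper: the same difference relations extracted from the weak symmetry, the same induction on a total grade (with identical weights $4,3,2$ for $L$-, $G$- and $J$-modes) to prove orthogonality of distinct $L_0$-eigenspaces, the same operator $A$ trick for the fermionic modes, and the same endgame via \cite[Theorem 4.5]{Kac01} and \cite[Propositions 3.29, 3.10 and 3.14]{CGH}. The only difference is cosmetic — you dispose of the $J$-modes (the easy, fully symmetric case) before the $G$-modes, whereas the paper orders its monomials with the $J$'s rightmost and treats the pure-current case last; both orderings work since the induction is on the total grade.
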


\begin{proof}
	The proof relays on the one of Theorem \ref{theo:unitarity_BS_representation_NS}, which we will refer to for details, mostly thanks to the fact that the two sets 
	$$
	\left\{L_m \,,\,\, G^1_r \mid m\in\Z \,,\,\, r\in\Z-\half\right\}
	 \qquad\mbox{and}\qquad 
	\left\{L_m \,,\,\, G^2_s \mid m\in\Z \,,\,\, s\in\Z-\half\right\}
	$$ 
	are two representations of a Neveu-Schwarz algebra $NS$.
	Therefore, we can prove in the same way as there that $L_{-1}\Omega=G^1_{-\half}\Omega=G^2_{-\half}\Omega=0$.
	
	The goal is to prove that $L_0$ is unitary with respect to $\scalar$. We sketch the proof here below, referring to the proof of Theorem \ref{theo:unitarity_BS_representation_NS} for omitted details.
	The representation of $N2$ is generated by vectors of the following type:
	\begin{equation}  \label{eq:generic_generator_N2}
		\begin{split}
		\Psi&:=L_{-m_1}\cdots L_{-m_u}G^1_{-r_1}\cdots G^1_{-r_t}G^2_{-s_1}\cdots G^2_{-s_x}J_{-n_1}\cdots J_{-n_v}\Omega \\
		\Psi'&:=L_{-m_1'}\cdots L_{-m_{u'}'}G^1_{-r_1'}\cdots G^1_{-r_{t'}'}G^2_{-s_1'}\cdots G^2_{-s_{x'}'}J_{-n_1'}\cdots J_{-n'_{v'}}\Omega
		\end{split}
	\end{equation}
	for $u,t,x,v\in\Zpluseq$, for positive integers $m_1,\dots, m_u$ and $n_1,\dots, n_v$, for positive semi-integers $r_1,\dots, r_t$ and $s_1,\dots, s_x$ and with same conditions for their primed counterparts. 
	Suppose that $L_0\Psi=d\Psi$, $L_0\Psi'=d'\Psi'$ and that $d\not=d'$.
	We define the total grade value as
	\begin{equation}
		\mathrm{gr}:=4(u+u')+3(t+t')+3(x+x')+2(v+v') 
	\end{equation}
	and proceed by induction on it.
	The first dichotomy we have is: either $u+u'>0$ or $u+u'=0$. In the former case, we are again in a situation similar to the \textit{Case 1} of the proof of \cite[Theorem 6]{CTW23} as the only ingredients we need are the weak symmetry of $T(z)$, the fact that $L_{-1}\Omega=0$ and that the $N2$ commutation relations \eqref{eq:N2_cr} ``lower the grade''. Instead, for the latter case, we have either $t+t'>0$ or $t+t'=0$. The case $t+t'>0$ can be treated as in the proof of Theorem \ref{theo:unitarity_BS_representation_NS}, relaying on the weak symmetry of $G^1(z)$ and the fact that $G^1_{-\half}\Omega=0$ to define an operator as in \eqref{eq:operator_A_for_superconformal}, and working with the $N2$ commutation relations \eqref{eq:N2_cr}. Instead for $t+t'=0$, we split again the problem into the two cases: $x+x'>0$ and $x+x'=0$. The former is basically equal to the previous case $t+t'>0$, whereas the latter is the easier one as we can rely on the fact that $J(z)$ is symmetric. Indeed, 
	\begin{equation}
		(\Psi| \Psi')
		=( J_{-n_1}\cdots J_{-n_v}\Omega | J_{-n_1'}\cdots J_{-n'_{v'}}\Omega)
		=( J_{-n_2}\cdots J_{-n_v}\Omega | J_{n_1}J_{-n_1'}\cdots J_{-n'_{v'}}\Omega) \,.
	\end{equation}
	Using the commutation relations for the Fourier coefficients of $J(z)$, see \eqref{eq:N2_cr}, $J_{n_1}\Psi'$ is equal to $0$ if $n_1\not=n'_{l'}$ for all $l\in\{1,\dots, v'\}$ or to a finite sum of vectors of type $J\cdots J\Omega$ with negative indexes and a number of Fourier coefficients of type $J$ strictly smaller than $v'$. In this second case, as $d\not =d'$, we can apply the inductive hypothesis to conclude.
	
	Once we have proved that $L_0^\dagger=L_0$, the remaining part is again a trivial adaptation of what has been already done in the last part of the proof of Theorem \ref{theo:unitarity_BS_representation_NS}. 
\end{proof}

Therefore, we also get an alternative proof to the one given in \cite{ET88b} for the following result:

\begin{cor} \label{cor:unitarity_vacuum_representations_N2}
	The irreducible lowest weight representations of the $N=2$ super-Virasoro algebra $N2$ with central charge $c\geq 3$ and lowest weight $(h,q)=(0,0)$ are unitary.
\end{cor}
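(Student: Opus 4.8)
The plan is to obtain Corollary~\ref{cor:unitarity_vacuum_representations_N2} as an immediate consequence of Theorem~\ref{theo:unitarity_BS_representation_N2}, in perfect analogy with the way Corollary~\ref{cor:unitarity_vacuum_representations_NS} follows from Theorem~\ref{theo:unitarity_BS_representation_NS} in the $N=1$ setting. First I would note that the assignment $\kappa\mapsto c(\kappa)=3+12\kappa^2$ maps $\R$ onto the interval $[3,+\infty)$; hence for every prescribed $c\geq 3$ there is a real $\kappa$ with $c(\kappa)=c$. Fixing such a $\kappa$, I would consider the representation of $N2$ on $V:=M(1)^{\hat{\otimes}2}\hat{\otimes} F^{\hat{\otimes}2}$ whose Fourier coefficients are given by \eqref{eq:bosonsxfermions_representations_N2_fields_BS_fourier}, and restrict it to the cyclic $N2$-submodule $W$ generated by the vacuum vector $\Omega$, equipped with the scalar product inherited from $V$.

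Next I would verify that $W$ satisfies the two hypotheses of Theorem~\ref{theo:unitarity_BS_representation_N2}. Hypothesis~(i) is immediate: by construction $\Omega$ is cyclic in $W$ and has lowest weight $(h,q)=(0,0)$, so in particular $L_0\Omega=0$ and $J_0\Omega=0$. For hypothesis~(ii), the required weak symmetry of $T_\kappa(z)$, $G^1_\kappa(z)$ and $G^2_\kappa(z)$, together with the genuine symmetry $J_\kappa(z)^\dagger=J_\kappa(z)$, was already recorded just before the statement of the theorem; it rests on the Hermiticity of the free generators $\bos^\pm$ and $\fer^\pm$ and on the fact that $\widetilde{\rho}(z)$ is real on $\pSone$. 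Since $W$ is $N2$-invariant and carries the restricted scalar product, all these symmetry relations pass unchanged to $W$.

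With the hypotheses in force, Theorem~\ref{theo:unitarity_BS_representation_N2} yields that the representation on $W$ is unitary and, as a unitary VOSA, is isomorphic to the simple $N=2$ super-Virasoro VOSA $V^c(N2)$. Because $V^c(N2)$ is simple, its underlying $N2$-module is irreducible and of lowest weight with data $(c,0,0)$; by the uniqueness up to isomorphism of the irreducible lowest weight $N2$-module with prescribed central charge and lowest weight, $W$ is precisely the irreducible representation appearing in the statement. I expect essentially no obstacle here, since all the analytic and combinatorial work has already been absorbed into Theorem~\ref{theo:unitarity_BS_representation_N2}; the only point deserving a line of care is the passage from a cyclic unitary module to an irreducible one. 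This is automatic: positive-definiteness of the invariant Hermitian form forces its radical to be trivial, so $W$ has no nonzero proper invariant subspace and is already irreducible, which is exactly what lets me identify it with the abstract module $L(c,0,0)$ of the corollary.
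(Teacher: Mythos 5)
Your proposal is correct and follows exactly the route the paper intends: the corollary is stated as an immediate consequence of Theorem \ref{theo:unitarity_BS_representation_N2} applied to the cyclic module generated by $\Omega$ under the Fourier coefficients \eqref{eq:bosonsxfermions_representations_N2_fields_BS_fourier}, with $\kappa$ chosen so that $c(\kappa)=3+12\kappa^2=c$. Your closing remark on irreducibility (a cyclic lowest weight module with a positive-definite contravariant form is automatically the irreducible quotient) is a correct and worthwhile explicit justification of a step the paper leaves implicit.
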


Now, as in the last part of Subsection \ref{subsubsec:step1}, we apply transformations of type $\varphi_{\kappa,\eta}$ for all $\kappa,\eta\in\R$ to the field $J^-(z)$ in \eqref{eq:bosonsxfermions_representations_N2_BS} to obtain a family of unitary lowest weight representations of $N2$ on $V$ with central charge $c(\kappa)=3+12\kappa^2$ and lowest weight vector $\Omega$ of lowest weight  $(h,q)=(\frac{\kappa^2+\eta^2}{2},0)$ given by:
\begin{equation}  \label{eq:bosonsxfermions_representations_N2_unitary_higher_zero_charge}
	\begin{split}
		T_{\kappa,\eta}(z)
			&:= 
		T(z)+\eta J^-(z)+\kappa (J^-)'(z)+\frac{\kappa^2+\eta^2}{2}\\
		G^1_{\kappa,\eta}(z)
			&:=
		G^1(z)-\frac{\eta}{\sqrt{2}}\Phi^-(z)-\sqrt{2}\kappa(\Phi^-)'(z)\\
		G^2_{\kappa,\eta}(z)
			&:=
		G^2(z)+\frac{\eta}{\sqrt{2}}\Phi^+(z)+\sqrt{2}\kappa(\Phi^+)'(z)\\
		J_\kappa(z)
			&:=
		J(z)+2\kappa J^+(z) \,.
	\end{split}
\end{equation}
Actually, we can go further and apply transformations of type $\varphi_{0,\omega}$ for all $\omega\in\R$ to the field $J^+(z)$ in \eqref{eq:bosonsxfermions_representations_N2_unitary_higher_zero_charge} to obtain a family of unitary lowest weight representations of $N2$ on $V$ with central charge $c(\kappa)=3+12\kappa^2$ and lowest weight vector $\Omega$ of lowest weight $(h,q)=(\frac{\kappa^2+\eta^2+\omega^2}{2},2\kappa\omega)$ given by:
\begin{equation}  \label{eq:bosonsxfermions_representations_N2_unitary_higher}
	\begin{split}
		T_{\kappa,\eta}(z)
		&:= 
		T(z)+\omega J^+(z)+\eta J^-(z)+\kappa (J^-)'(z)+\frac{\kappa^2+\eta^2+\omega^2}{2}\\
		G^1_{\kappa,\eta}(z)
		&:=
		G^1(z)+\frac{\omega}{\sqrt{2}}\Phi^+(z)-\frac{\eta}{\sqrt{2}}\Phi^-(z)-\sqrt{2}\kappa(\Phi^-)'(z)\\
		G^2_{\kappa,\eta}(z)
		&:=
		G^2(z)+\frac{\omega}{\sqrt{2}}\Phi^-(z)+\frac{\eta}{\sqrt{2}}\Phi^+(z)+\sqrt{2}\kappa(\Phi^+)'(z)\\
		J_\kappa(z)
		&:=
		J(z)+2\kappa J^+(z) +2\kappa\omega \,.
	\end{split}
\end{equation}

At this point, it remains to integrate the above representations of $N2$ to representations of the corresponding $N=2$ super-Virasoro net $\mathrm{SVir2}_{c(\kappa)}$. One can check that this can be done adapting the general argument in Subsection \ref{subsubsec:step2}, which indeed do not relay on the characteristics of the models, as instead it happened for the argument in Subsection \ref{subsubsec:step1}. 
The first step is to adapt Lemma \ref{lem:properties_smeared_Tkappa_Gkappa} to the $N=2$ setting.
Then the strategy is again to obtain solitons of $\mathrm{SVir2}_{c(\kappa)}$ using Weyl operators to implement the BMT representations corresponding to the transformations $\varphi_{\kappa,\eta}$, $\kappa,\eta\in\R$, applied to $J^-(z)$. Then, those solitons can be extended to representations of $\mathrm{SVir2}_{c(\kappa)}$ by Proposition \ref{prop:extending_solitons}. At this point, we can go further on the same line implementing the BMT representations corresponding to the transformations $\varphi_{0,\omega}$, $\omega\in\R$, applied to $J^+(z)$.  
Therefore, we can state the following theorem:

\begin{theo} \label{theo:irreprs_N=2_net}
Let $c\geq 3$. Then for all real numbers $h\geq \frac{c-3}{24}$ and all real numbers $q$ such that $\abs{q}\leq\sqrt{\frac{c-3}{3}\left(2h-\frac{c-3}{12}\right)}$, there exists an irreducible representation $\pi^{h,q}$ of the $N=2$ super-Virasoro net $\mathrm{SVir2}_c$ of lowest weight $(h,q)$.
\end{theo}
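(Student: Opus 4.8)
The plan is to follow, \emph{mutatis mutandis}, the two-step strategy developed for the $N=1$ super-Virasoro nets in Subsection~\ref{subsubsec:step2}, whose operator-algebraic part does not depend on the specific model. All the algebraic input is already in place: by \eqref{eq:bosonsxfermions_representations_N2_unitary_higher} we have, for each $\kappa,\eta,\omega\in\R$, a unitary lowest weight representation of $N2$ on $V=M(1)^{\hat{\otimes}2}\hat{\otimes} F^{\hat{\otimes}2}$ with central charge $c(\kappa)=3+12\kappa^2$ and lowest weight vector $\Omega$ of lowest weight $(h,q)=\left(\frac{\kappa^2+\eta^2+\omega^2}{2},2\kappa\omega\right)$. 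By Theorem~\ref{theo:unitarity_BS_representation_N2}, the fields \eqref{eq:bosonsxfermions_representations_N2_BS} already define a simple unitary VOSA $V^\kappa$, unitarily isomorphic to $V^{c(\kappa)}(N2)$, and hence an irreducible graded-local conformal net $\A_{V^\kappa}$ isomorphic to $\mathrm{SVir2}_{c(\kappa)}$. It remains to integrate the representations \eqref{eq:bosonsxfermions_representations_N2_unitary_higher} of $N2$ to representations of this net.

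First I would adapt Lemma~\ref{lem:properties_smeared_Tkappa_Gkappa} to the four smeared fields $T_\kappa(f)$, $G^1_\kappa(g)$, $G^2_\kappa(g)$ and $J_\kappa(f)$ obtained by smearing \eqref{eq:bosonsxfermions_representations_N2_BS}, with $f\in C^\infty_c(\pSone)$ and $g\in C^\infty_{\chi,c}(\pSone)$ for the three fields carrying a $\widetilde{\rho}$-term (the symmetric field $J_\kappa$ needing no such restriction). Using that on $V$ the fields $\Phi^\pm(z)$ satisfy $0$-th order energy bounds, that $J^\pm(z)$ and $G^i(z)$ satisfy $\half$-th order energy bounds and that $T(z)$ satisfies linear energy bounds, one checks exactly as in Lemma~\ref{lem:properties_smeared_Tkappa_Gkappa} that these smeared fields, together with their iterated commutators with $L_0$, satisfy linear energy bounds, are self-adjoint for real-valued test functions, and generate von Neumann algebras contained in $\A_V(I)$ which commute with $W^*(\{Y(\bos^\pm,\tilde f),ZY(\fer^\pm,\tilde g)Z^*\})$ whenever the supports are disjoint. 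Moreover the orthogonal projection $E_\kappa$ onto the completion of the $N2$-subrepresentation lies in the commutant of these smeared fields and compresses them to the corresponding smeared generators of $\mathrm{SVir2}_{c(\kappa)}$, yielding, for every $I\in\J^\R$, an isomorphism $p^\kappa_I\colon\B^\kappa(I)\to\mathrm{SVir2}_{c(\kappa)}(I)$, that is a soliton.

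Next I would realise the transformations $\varphi_{\kappa,\eta}$ (acting on $J^-(z)$) and $\varphi_{0,\omega}$ (acting on $J^+(z)$) at the operator-algebraic level. Since $V$ contains two commuting copies of the Heisenberg VOA, these shifts are implemented, exactly as in the BMT construction recalled in Subsection~\ref{subsubsec:step2}, by Weyl operators built respectively from $\bos^-$ and $\bos^+$, producing a soliton $\alpha^{\kappa,\eta,\omega}$ of $\A_V$. Composing the restriction of $\alpha^{\kappa,\eta,\omega}$ to $\B^\kappa(I)$ with $(p^\kappa_I)^{-1}$ and with the isomorphism onto the net $\A^{\kappa,\eta,\omega}$ generated on $\mathcal{H}^{\kappa,\eta,\omega}$ by the smeared symmetric fields \eqref{eq:bosonsxfermions_representations_N2_unitary_higher} gives a soliton $\pi^{\kappa,\eta,\omega}$ of $\mathrm{SVir2}_{c(\kappa)}$. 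The diffeomorphism covariance of both $\mathrm{SVir2}_{c(\kappa)}$ and $\A^{\kappa,\eta,\omega}$ yields the covariance identity \eqref{eq:covariance_soliton}, so Proposition~\ref{prop:extending_solitons} upgrades $\pi^{\kappa,\eta,\omega}$ to a diffeomorphism-covariant representation; irreducibility follows from the irreducibility of the underlying lowest weight representation of $N2$.

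Finally, letting $\kappa,\eta,\omega$ vary over $\R$ determines the range of attainable lowest weights. Fixing $c\geq 3$ forces $\kappa^2=\frac{c-3}{12}$, and given a target $(h,q)$ one solves $2\kappa\omega=q$ and $\frac{\kappa^2+\eta^2+\omega^2}{2}=h$, i.e.\ $\omega^2=\frac{3q^2}{c-3}$ and $\eta^2=2h-\frac{c-3}{12}-\frac{3q^2}{c-3}$. Such real $\eta,\omega$ exist precisely when $\eta^2\geq 0$, that is when $h\geq\frac{c-3}{24}$ and $\abs{q}\leq\sqrt{\frac{c-3}{3}\left(2h-\frac{c-3}{12}\right)}$, recovering exactly the stated region. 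The main obstacle I anticipate is purely technical rather than conceptual: carefully carrying out the two-current version of Lemma~\ref{lem:properties_smeared_Tkappa_Gkappa} and checking that the two successive Weyl implementations, on $\bos^-$ and on $\bos^+$, are compatible and jointly produce a soliton with the correct compression under $E_\kappa$; the covariance and extension steps then go through as in the $N=1$ case.
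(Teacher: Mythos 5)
Your proposal follows essentially the same route as the paper: adapt Lemma \ref{lem:properties_smeared_Tkappa_Gkappa} to the $N=2$ smeared fields, implement the shifts $\varphi_{\kappa,\eta}$ on $J^-(z)$ and $\varphi_{0,\omega}$ on $J^+(z)$ by Weyl operators to obtain covariant solitons, extend them via Proposition \ref{prop:extending_solitons}, and read off the attainable lowest weights $(h,q)=\bigl(\tfrac{\kappa^2+\eta^2+\omega^2}{2},2\kappa\omega\bigr)$. Your explicit solution for $\eta,\omega$ in terms of $(h,q)$, with $\eta^2\geq 0$ yielding exactly the stated region, correctly fills in the arithmetic the paper leaves implicit.
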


\begin{rem}
	Standard arguments show the existence of localized endomorphisms unitarily equivalent to the representations constructed in Theorem \ref{theo:irreprs_N=1_net} and in Theorem \ref{theo:irreprs_N=2_net}, see \cite[Proposition 14]{CKL08}. However, these abstract arguments do not provide an explicit expressions for such endomorphisms.
	A similar situation occurs for the Virasoro nets with $c>1$, cf.\ the last paragraph of \cite{BS90}.
	As in the Virasoro case, one may hope to construct these endomorphisms also explicitly in view of the concrete definitions of the corresponding representations. 
\end{rem}

\subsection{\texorpdfstring{$N=3$ super-Virasoro nets}{N=3 super-Virasoro nets}}  \label{subsec:N=3_super-virasoro_nets}
Let $F$ be the real free fermion VOSA as in Example \ref{ex:real_free_fermion}.
For all $k\not=-h^\vee=-\half$, the graded tensor product $W_k(\g,f)\hat{\otimes} F$ with $\g=\mathfrak{spo}(2|3)$ gives rise to the (simple) $N=3$ \textit{super-Virasoro VOSAs} $V^{c'(k)}(N3)$ with central charge $c'(k):=c(k)+\half=-(6k+3)$. 
The latter VOSAs can be constructed from the $N=3$ \textit{Lie superalgebra} $N3$, see \cite[Section 8.5]{KW04}, cf.\ \cite[Eq.s (14)--(15)]{GS88}. Furthermore, they are strongly graded-local whenever $k$ is in its unitary range $\frac{1}{4}\Z_{\leq -3}$ as described in Section \ref{section:strong_graded_locality} and references therein.
Then $c(k)$ will be in $\frac{3}{2}\Zplus$ and we will write $\mathrm{SVir3}_{c'(k)}$ for the corresponding $N=3$ \textit{super-Virasoro nets} $\A_{V^{c'(k)}(N3)}\cong\A_{W_k(\g,f)}\hat{\otimes} \F$. 
A full classification of unitary irreducible highest weight $W_k(\g,f)$-modules is provided in \cite[Section 13.1]{KMP23}, see also \cite[Theorem 6.9]{AKMP24}, cf.\ \cite{Mik90}.

Now, we construct some representations of $\mathrm{SVir3}_c$ with $c\geq\frac{3}{2}$, following the strategy in \cite[pp.\ 211--212]{GS88}.
Recall from Section \ref{section:strong_graded_locality} and references therein that $k=-\frac{3}{4}$ is a collapsing level such that $W_{-\frac{3}{4}}(\g,f)$ is isomorphic to the simple affine VOA $V_1(\mathfrak{sl}_2)$ with $c(-\frac{3}{4})=1$. It follows that $W_{-\frac{3}{4}}(\g,f)$ as well as $V^\frac{3}{2}(N3)$ is strongly rational. Moreover, the irreducible graded-local conformal net $\mathrm{SVir3}_\frac{3}{2}= \A_{V_1(\mathfrak{sl}_2)}\hat{\otimes} \F$ is completely rational as $\F$ is completely rational. This follows from e.g.\ the fact that the Bose subnet $\F^\Gamma=\Virnet_\half$ of $\F$ is completely rational. 
As far as $V^3(N3)$ is concerned, it follows from \cite[Eq.s (18)--(19)]{GS88}, based on \cite[Section 4]{Sch87}, that it can be identified with a unitary subalgebra of the simple unitary VOSA $V:=M(1)\hat{\otimes} F^{\hat{\otimes} 4}$ with the same conformal vector. By \cite[Theorem 6.1]{CGH}, $\mathrm{SVir3}_3$ is a covariant subnet of $\A_V$. 
For any positive $\eta\in\R$, consider a BMT representation, see page \pageref{bmt_representations} and references therein, $\alpha^\eta$ of $\A_{M(1)}$ such that the infimum of the spectrum of $L_0^{\alpha^\eta}$, as defined in Remark \ref{rem:irrationality_local_net_from_graded_net}, is $h_{\alpha^\eta}=\frac{\eta^2}{2}$.
Set $\tilde{\pi}^\eta$ as the irreducible diffeomorphism covariant representation of $\A_V$ given by the tensor product of $\alpha^\eta$ and the vacuum representation of $\F^{\hat{\otimes}4}$. Note that $h_{\tilde{\pi}^\eta}=h_{\alpha^\eta}$. 
Then $\tilde{\pi}^\eta$ restricts to a diffeomorphism covariant representation $\pi^\eta$ of $\mathrm{SVir3}_3$. Moreover, $\A_V$ and $\mathrm{SVir3}_3$ have the same central charge, so that $\mathrm{SVir3}_3$ contains the Virasoro subnet of $\A_V$. 
From Proposition \ref{prop:diff_cov_grade-local_net}, $h_{\pi^\eta}=h_{\alpha^\eta}$ and thus we have a family $\mathscr{E}:=\{\pi^\eta\mid \eta\in\R\}$ of representations of $\mathrm{SVir3}_3$ such that the set $\{h_\pi\mid \pi\in\mathscr{E}\}$ is infinite. Therefore, $\mathrm{SVir3}_3$ is not completely rational.

It follows from the $N3$ commutation relations that for all $c\in \frac{3}{2}\Z_{\geq 3}$, we can realize the VOSA inclusion
\begin{equation} \label{eq:N3_inclusions}
	V^c(N3)\hookrightarrow \left\{
	\begin{array}{lr}
		V^3(N3)^{\hat{\otimes}n} & \mbox{ if } c=3n \,,\,\, n\in \Zplus\\
		V^\frac{3}{2}(N3)\hat{\otimes}V^3(N3)^{\hat{\otimes}\frac{n-1}{2}} & \mbox{ if } c=\frac{3}{2}n \,,\,\, n\in \Z_{>1} \,,\,\, n  \mbox{ odd }
	\end{array}
	\right. 
\end{equation}
which is also unitary.
By \cite[Theorem 6.1]{CGH}, we have the corresponding inclusions in the graded-local conformal net setting.
Therefore, we can obtain a family of representations as in Remark \ref{rem:irrationality_local_net_from_graded_net} for $\mathrm{SVir3}_c$ with $c>3$ just considering the graded tensor product of a suitable number of representations of $\mathrm{SVir3}_\frac{3}{2}$ and of $\mathrm{SVir3}_3$ according to \eqref{eq:N3_inclusions}. This argument shows that $\mathrm{SVir3}_c$ with $c\geq3$ is not completely rational. In turn, we also have that  $\A_{W_k(\g,f)}$ with $k\in\frac{1}{4}\Z_{< -3}$ is not completely rational.

\subsection{\texorpdfstring{$N=4$ super-Virasoro nets}{N=4 super-Virasoro nets}}  \label{subsec:N=4_super-virasoro_nets}

The simple minimal $W$-algebras $W_k(\g,f)$ with $\g=\mathfrak{psl}(2|2)$ are the $N=4$ \textit{super-Virasoro VOSAs} $V^{c(k)}(N4)$ with central charges $c(k)=-6(k+1)$, arising from the $N=4$ \textit{Lie superalgebra} $N4$, see \cite[Section 8.4]{KW04}. 
Let $k$ be in the corresponding unitary range, that is $k\in\Z_{\leq -2}$, so that $c(k)\in 6\Zplus$, see Section \ref{section:strong_graded_locality} and references therein.
The irreducible unitary representations of this models have been classified in \cite[Section 13.2]{KMP23}, see also \cite[Theorem 6.9]{AKMP24}, cf.\ \cite{ET87, ET88, ET88b}, showing the non-rationality of these VOSAs.
To show that the $N=4$ \textit{super-Virasoro nets} $\mathrm{SVir4}_{c(k)}:=\A_{V^{c(k)}(N4)}$ are not completely rational, we can construct a family of representations as in Remark \ref{rem:irrationality_local_net_from_graded_net}, miming what done for the case of the $N=3$ super-Virasoro net $\mathrm{SVir3}_3$ in Section \ref{subsec:N=3_super-virasoro_nets}. 
The only ingredient we need is to identify for all $c(k)\in 6\Zplus$, or equivalently for all $k\in\Z_{\leq -2}$, $V^{c(k)}(N4)$ with a unitary subalgebra of a well-known VOSA $U^{c(k)}$ with the same conformal vector as done in \eqref{eq:N3_inclusions} for the $N=3$ Lie superalgebra case. 
Indeed, $U^{c(k)}$ is the graded tensor product of $-(k+1)$ copies of $(M(1)\hat{\otimes} F)^{\hat{\otimes}4}$, see \cite[Eq.\ (2)]{ET87}, cf.\ also \cite[Section 13.2]{KMP23}, where $M(1)$ is the Heisenberg VOA and $F$ is the real free fermion VOSA, see references in Remark \ref{rem:unitary_range_and_collapsing} and Example \ref{ex:real_free_fermion} respectively.

\subsection{\texorpdfstring{Big $N=4$ super-Virasoro nets}{Big N=4 super-Virasoro nets}}
\label{subsec:big_N=4_super-virasoro_nets}

Let $\g=D(2,1;a)$ with $a\in\C\backslash\{-1,0\}$. For all $k\not=-h^\vee=0$, tensorizing the simple minimal $W$-algebra $W_k(\g,f)$ with the Heisenberg VOA $M(1)$ and four copies of the real free fermion VOSA $F$, see references in Remark \ref{rem:unitary_range_and_collapsing} and Example \ref{ex:real_free_fermion} respectively,  we get the \textit{big $N=4$ super-Virasoro algebras}, see \cite[Section 8.6]{KW04}, cf.\ \cite{STV88} and references therein. Denote it by $V^{c'(k)}$, where $c'(k):=c(k)+3=-6k$ is the central charge.
Let $k$ be in the corresponding unitary range as described in Section \ref{section:strong_graded_locality} and references therein. 
Recall also that for all $m\in \Z_{\geq 2}$, $W_{\frac{-m}{m+1}}(\g,f)$ collapses to the simple affine VOA $V_{m-1}(\mathfrak{sl}_2)$ with $c(\frac{-m}{m+1})=\frac{3(m-1)}{m+1}$.
Therefore $\A_{W_{\frac{-m}{m+1}}(\g,f)}$ is isomorphic to $\A_{V_{m-1}(\mathfrak{sl}_2)}$, which is completely rational. 
For $k$ non-collapsing, $W_k(\g,f)$ is not rational by \cite[Theorem 6.9]{AKMP24}, whereas $V^{c'(k)}$ is not rational for any $k$ in the unitary range as $M(1)$ is not.
We know that $V^{c'(k)}$ is strongly graded-local.
Therefore, similarly to Sections \ref{subsec:N=3_super-virasoro_nets}--\ref{subsec:N=4_super-virasoro_nets}, taking tensor products of BMT representations of $\A_{M(1)}$ with the vacuum ones of $\A_{W_k(\g,f)}$ and of $F^{\hat{\otimes}4}$, we can build a family of irreducible representations as in Remark \ref{rem:irrationality_local_net_from_graded_net} of the \textit{big $N=4$ super-Virasoro nets} $\A_{V^{c'(k)}}=\A_{W_k(\g,f)}\hat{\otimes} \A_{M(1)}\hat{\otimes} \F^{\hat{\otimes}4}$, so concluding that $\A_{V^{c'(k)}}$ is not completely rational.

\bigskip
\noindent
{\small
{\bf Acknowledgements.} We would like to thank Pierluigi M\"{o}seneder Frajria and Paolo Papi for stimulating discussions and detailed explanations on the results in \cite{KMP23}. We acknowledge support from the \emph{MIUR Excellence Department Project MatMod@TOV} awarded to the Department of Mathematics, University of Rome ``Tor Vergata'', CUP E83C23000330006 and from the University of Rome ``Tor Vergata'' funding OANGQS, CUP E83C25000580005. S.C.\ also acknowledge support from the GNAMPA-INDAM project {\it Operator algebras and infinite quantum systems}, CUP E53C23001670001.
Part of this work was done while T.G.\ was Senior Research Associate, Leverhulme Trust at the School of Mathematical Sciences, Lancaster University, supported by the Leverhulme Trust Research Project Grant RPG-2021-129.

\bigskip
\noindent
{\small
\emph{Data sharing is not applicable to this article as no new data were created or analyzed in this study.} }

\bigskip
\noindent
{\small
\emph{On behalf of all authors, the corresponding author states that there is no conflict of interest.} }
\bigskip

\end{document}